\renewenvironment{abstract}
	{\quotation}
	{\endquotation}
\date{}
\renewcommand{\fnum@figure}{\textbf{Figure \thefigure}}
\renewcommand{\fnum@table}{\textbf{Table \thetable}}
\newenvironment{proof}[1][Proof]{\noindent\textbf{#1.} }{\ \rule{0.5em}{0.5em}}
\newtheorem{theorem}{Theorem}
\newtheorem{lemma}{Lemma}
\newtheorem{remark}{Remark}%
\newtheorem{assumption}{Assumption}
\def\scititle{
	Enabling Green Wireless Communications with Neuromorphic Continual Learning
}
\title{\bfseries \boldmath \scititle}
\author{
	Yanzhen Liu$^{1}$,
	Zhijin Qin$^{2\ast}$,
	Yongxu Zhu$^{3}$, and
	Geoffrey Ye Li$^{1}$\and
	\small$^{1}$Department of Electrical and Electronic Engineering, Imperial College London, London \& SW7 2AZ, UK.\and
	\small$^{2}$Department of Electronic Engineering, Tsinghua University, Beijing \& 100084, China.\and
	\small$^{3}$National Mobile Communications Research Laboratory, Southeast University, Nanjing \& 210096, China.\and
	\small$^\ast$Corresponding author. Email: qinzhijin@tsinghua.edu.cn
}
\begin{document} 

\maketitle

\begin{abstract} \bfseries \boldmath	
The pursuit of carbon-neutral wireless networks is increasingly constrained by the escalating energy demands of deep learning-based signal processing. Here, we introduce SpikACom (\textbf{Spik}ing \textbf{A}daptive \textbf{Com}munications), a neuromorphic computing framework that synergizes brain-inspired spiking neural networks (SNNs) with wireless signal processing to deliver sustainable intelligence. SpikACom advances the paradigm shift from energy-intensive, continuous-valued processing to event-driven sparse computation. Moreover, it supports continual learning in dynamic wireless environments via a dual-scale mechanism that integrates channel distribution-aware context modulation with a synaptic consolidation rule using SNN-specific statistics, mitigating catastrophic forgetting. Evaluations across critical wireless communication tasks, including semantic communication, multiple-input multiple-output (MIMO) beamforming, and channel estimation demonstrate that SpikACom matches full-precision deep learning baselines while achieving an order-of-magnitude improvement in computational energy efficiency. Our results position SNNs as a promising pathway toward green wireless intelligence, providing evidence that neuromorphic computing can empower the sustainability of modern digital systems.
\end{abstract}

\noindent
\section*{Introduction}\label{sec1}
Wireless networks are evolving to deliver not only connectivity but also ubiquitous intelligence, with deep learning (DL) rapidly moving into the physical layer and end-to-end communication stacks\cite{dlforphy}. Over the past decade, DL has been successfully applied to a wide range of wireless communication tasks, such as channel estimation \cite{OFDM_CE1}, channel state information (CSI) feedback \cite{dlcsifeedback}, symbol detection \cite{symboldetection}, waveform design \cite{dlwaveform}, resource allocation \cite{dlresourceallocation}, and cross-layer semantic communications \cite{semantic1,multimodalcom,foundation1,holographicsemantic}. This trend is also reflected in standardization. For example, the 3rd Generation Partnership Project (3GPP) has launched projects exploring artificial intelligence (AI) for the fifth generation (5G) air interface~\cite{3GPP}, identifying pilot use cases such as CSI feedback, beam management, and positioning. However, although learning-based methods can improve performance under complex channels and hardware impairments, they rely on dense floating-point computation, raising concerns that computational energy will become a key limiting factor for sustainable wireless systems. This concern is particularly acute because the computational cost of signal processing has long been a major burden, well before deep learning was introduced into wireless communications \cite{MIMO_energy}. With radio access networks accounting for more than 70\% of total mobile network power consumption \cite{RAN_energy}, there is little headroom for computationally heavy processing at the wireless edge, especially under stringent carbon-neutrality targets that require the information and communication technology (ICT) sector to reduce CO$_2$ emissions by 91\% by 2050 relative to 2010 levels~\cite{ICT2020industry,ICTclimate}. This tension motivates the search for alternative computing paradigms to deliver wireless intelligence at substantially lower energy cost.

Spiking neural networks (SNNs) have emerged as a compelling solution for pursuing sustainable AI-native wireless communications, marking a fundamental paradigm shift toward event-driven communication systems (Fig.~\ref{fig:paradiagm_shift}). Unlike traditional artificial neural networks (ANNs), SNNs emulate the biological brain’s information processing~\cite{snn_nature} by employing biophysical neuron models that communicate via discrete events known as spikes (Fig.~\ref{fig:SNN_model}). This event-driven nature allows SNNs to trigger computation primarily when informative events arrive, thereby avoiding redundant processing. In addition, the binary nature of spikes converts the power-hungry multiply-accumulate (MAC) operations dominant in ANNs into simple accumulations (ACs), significantly reducing energy consumption. Beyond computation, the spike representation also offers flexibility in signal transmission: spikes can be mapped to pulse-based physical-layer waveforms (e.g. impulse radios) \cite{uwb} or encapsulated into digital bit sequences for seamless integration with existing digital communication pipelines \cite{snnprototype}. Collectively, these elements point to an end-to-end neuromorphic wireless communication system, enabling networks to process and exchange high-level semantic meaning with brain-inspired efficiency.

Driven by this vision, SNNs have attracted increasing attention in the wireless community. Recent studies have successfully applied SNNs to domains ranging from distributed neuromorphic edge computing~\cite{snndistributed1}, federated learning~\cite{snnfederated,snnleadfed}, and physical-layer signal processing~\cite{snnisac,snn_VDES,snn_satellite,snnofdmreceiver} to emerging paradigms like task-oriented semantic communication~\cite{snnsemantic3,snnsemantic1,snnsemantic2,multilevelsnn}. Despite these advancements, the application of SNNs in communication systems remains at an early stage. Early works pioneered neuromorphic joint source-channel coding~\cite{snnsemantic3,snnsemantic1}, providing important feasibility demonstrations, but they primarily focused on constrained tasks such as classifying two digits in the MNIST-DVS dataset \cite{mnist-dvs}. Subsequently, methods based on ANN-to-SNN conversion~\cite{ANN2SNN} were introduced, successfully applying SNNs to radio frequency fingerprint identification. However, ANN-to-SNN conversion imposes restrictive constraints on neuron models and often requires long simulation time, which limits the full exploitation of the energy efficiency of SNNs. The potential of SNN-based signal processing was further unlocked by the introduction of surrogate gradient-based training~\cite{snnsurrogate}. This approach has demonstrated empirical success on more realistic tasks, such as satellite symbol detection~\cite{snn_satellite} and orthogonal frequency-division multiplexing (OFDM) receivers~\cite{snnofdmreceiver}. Nevertheless, these works largely reuse well-established computer vision architectures, with limited incorporation of communication domain priors. Notably, while spike-based computation can be highly energy-efficient, SNNs’ representational capacity for complex tasks is often lower than that of full-precision ANNs. The capability of SNNs to handle sophisticated, domain-specific wireless communication problems remains an open challenge. Beyond these design limitations, relatively few studies explicitly address the inherent challenges of deploying SNNs in wireless environments. In practice, wireless channels shift on short time scales, requiring continual model updates. However, such continual adaptation renders the model susceptible to catastrophic forgetting, where new updates overwrite the representations of previously learned channel distributions. Developing stable adaptation schemes that mitigate this forgetting, while maintaining bounded computational and memory overhead, remains largely unexplored in current SNN-based solutions.

Motivated by these limitations, we introduce SpikACom (\textbf{Spik}ing \textbf{A}daptive \textbf{Com}munications), a general framework for integrating SNNs into wireless communication systems (Fig.~\ref{fig:dynamic_wireless_env}). The framework supports continual learning in dynamic wireless environments via a dual-scale adaptation mechanism that combines (i) a hypernet-based context modulator that controls the backbone SNN through channel distribution-aware gating signals and (ii) a synaptic-level consolidation rule that regularizes updates using SNN-specific statistics (Fig.~\ref{fig:spikacom_framework}), effectively mitigating catastrophic forgetting with relatively low computational and memory overhead. We evaluate SpikACom on three representative wireless communication tasks: neuromorphic semantic communication, multi-user multiple-input multiple-output (MIMO) beamforming, and channel estimation in OFDM systems. These tasks are aligned with key AI use cases in 3GPP standardization \cite{3GPP}, particularly CSI-related enhancement and beam management, while also covering emerging cross-layer paradigms. We demonstrate that SNNs can move beyond proof-of-concept tasks to tackle sophisticated wireless signal processing problems with inherent physical structures (e.g., MIMO beamforming and OFDM channel estimation), showing that incorporating domain knowledge unlocks the capability of sparse spike-based computing to capture complex signal dependencies without compromising order-of-magnitude energy savings. Overall, our work highlights the potential of neuromorphic computing to enable more sustainable, adaptive, and intelligent wireless communication systems, motivating energy-efficient SNN-native architectures for the sixth generation (6G) and beyond.

\begin{figure}
	\centering
	
	\begin{subfigure}[b]{0.95\textwidth}
		\centering
		\includegraphics[width=\textwidth]{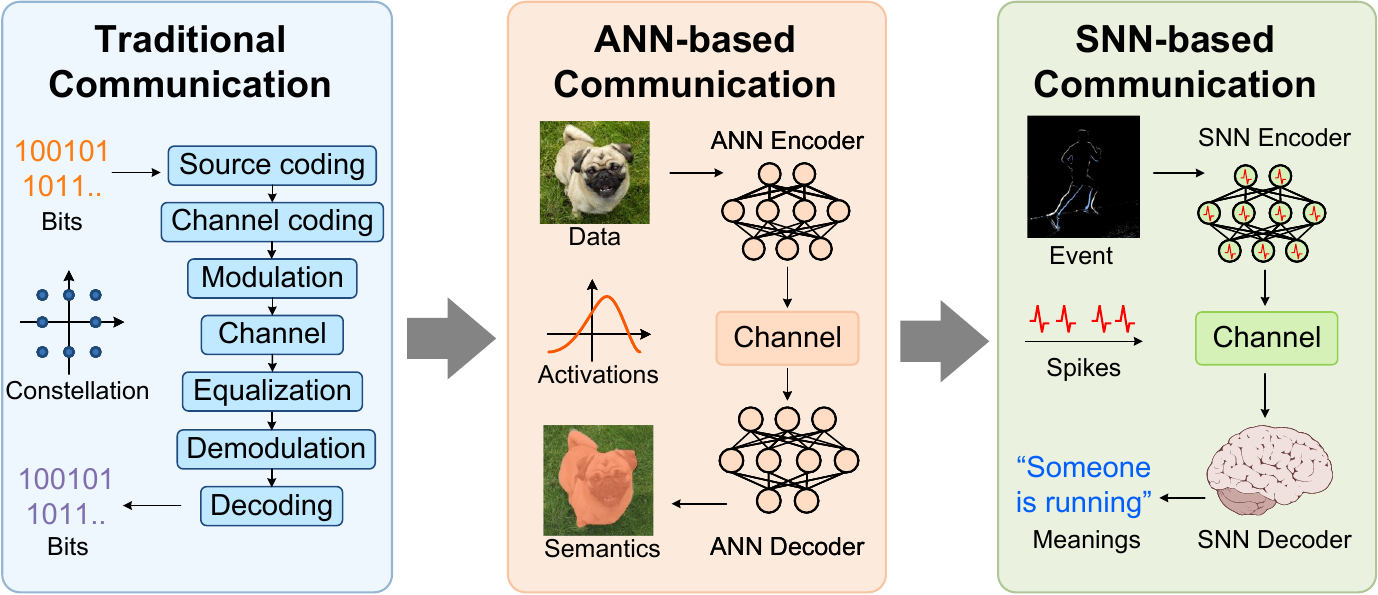}
		\caption{}
		\label{fig:paradiagm_shift}
		\vspace{-0.7em}
	\end{subfigure}

	\begin{subfigure}[b]{0.40\textwidth}
		\centering
		\raisebox{2em}{
			\includegraphics[width=0.95\textwidth]{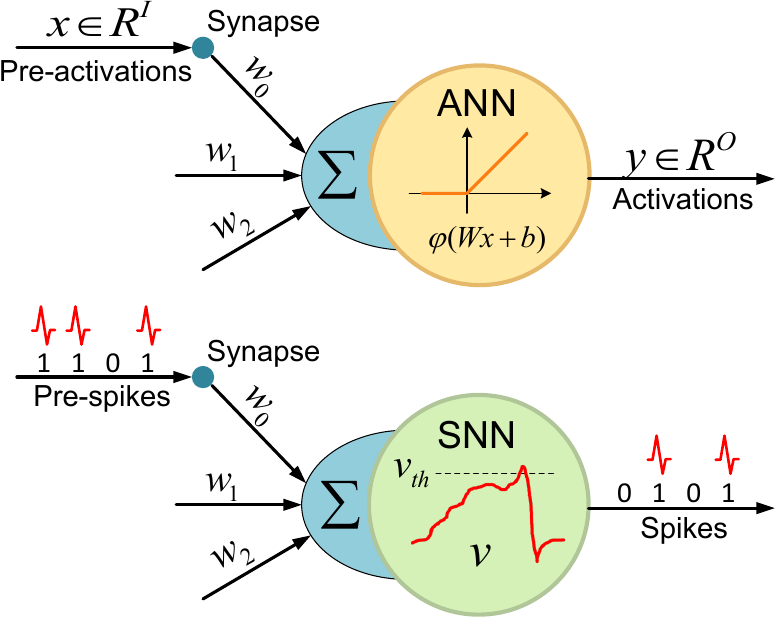}
		}
		\caption{}
		\label{fig:SNN_model}
	\end{subfigure}	
	\hfill
	\begin{subfigure}[b]{0.56\textwidth}
		\centering
		\includegraphics[width=0.95\textwidth]{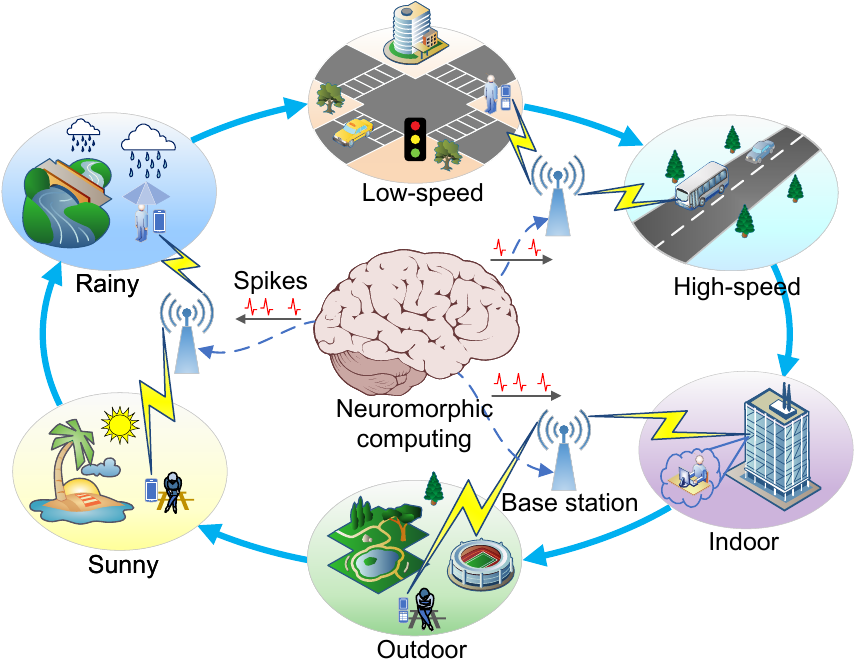}
		\caption{}
		\label{fig:dynamic_wireless_env}
		\vspace{-0.3em}
	\end{subfigure}

	\caption{\linespread{1.0}\selectfont \textbf{Illustration of neuromorphic computing empowered wireless systems}. 
		\textbf{(a)} Evolution of communication paradigms. \textbf{Left:} Traditional communication focuses on bit-level reliability and is built upon separate model-driven signal processing blocks, where information is mapped to constellation symbols for transmission. \textbf{Middle:} ANN-based communication employs dense neural networks to replace or augment conventional processing blocks, enabling information processing at higher levels of abstraction. The transmitted representations are intermediate continuous-valued activations. \textbf{Right:} SNN-based (neuromorphic) communication shifts toward event-driven processing, where information is encoded and conveyed as sparse binary spikes, enabling energy-efficient operation and facilitating semantics-oriented, cognition-inspired communication.
		\textbf{(b)} Comparison between an artificial neuron and a spiking neuron. The spiking neuron communicates via discrete binary spikes and inherently operates in the spatio-temporal domain, whereas the artificial neuron computes with continuous-valued activations.
		\textbf{(c)} Neuromorphic computing enabled wireless communication system illustrating low-power operation and continual adaptation in dynamic wireless environments affected by factors such as mobility, weather, and environmental scattering.}
	\label{fig:main_figure}
\end{figure}

\section*{Results}\label{sec2}
\begin{figure}
	\centering
	
	\includegraphics[width=0.95\textwidth]{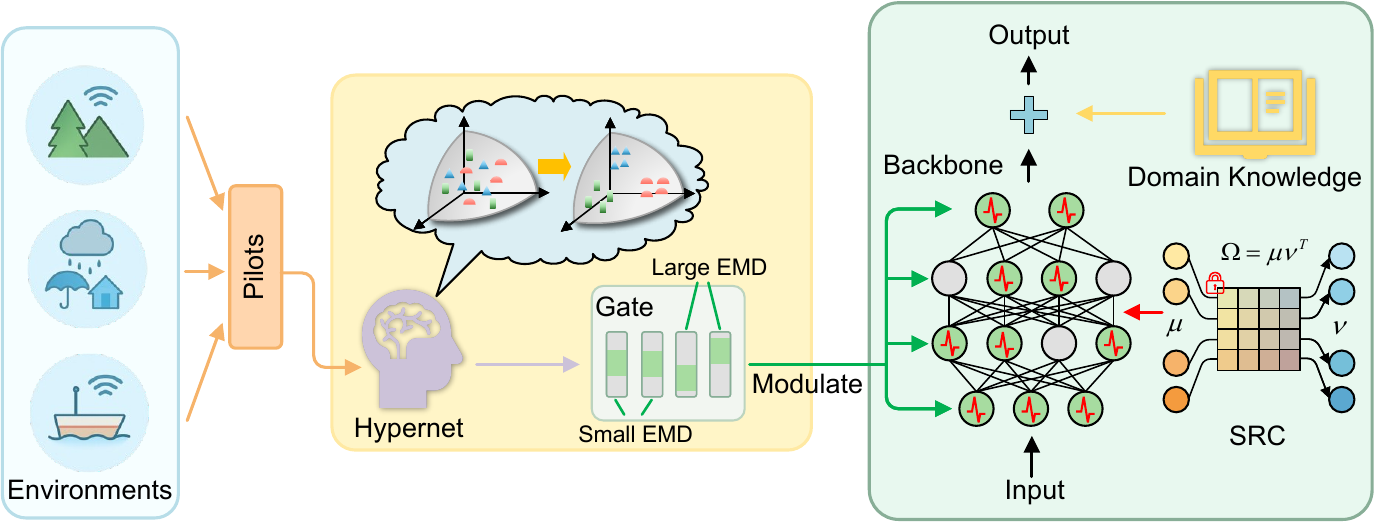}
	
	\caption{\linespread{1.0}\selectfont \textbf{Overview of the SpikACom framework}. \textbf{Left}: Pilots collected in different wireless conditions are used to identify the environment. \textbf{Middle}: A hypernet-based context modulator senses channel distribution shifts (quantified via earth mover's distance (EMD)) based on the pilots and generates binary gates to modulate the backbone SNN. \textbf{Right}: The backbone SNN processes the task inputs and supports task-dependent incorporation of domain knowledge to enhance performance, while spiking rate consolidation (SRC) regularizes synaptic weights based on neuronal activity to mitigate catastrophic forgetting.}
	
	\label{fig:spikacom_framework}
\end{figure}

\subsection*{Continual Learning in Wireless Environments}
In real-world deployments, signals propagate through complex and continually changing environments: user mobility, blockage/obstacles, and even weather can induce time-varying channel conditions. As a result, models trained offline can degrade rapidly when exposed to distribution shifts in the channel. Therefore, learning-based communication systems must detect such changes and adapt continually. In practice, however, many models are deployed on edge devices with limited memory and cannot store large amounts of historical data. 
They must update their parameters using only the current observations, a setting commonly referred to as continual learning~\cite{continuous_learning}. 
A central difficulty in continual learning is catastrophic forgetting, where adapting to new conditions causes the model to overwrite previously acquired knowledge. 
This phenomenon results in a fundamental trade-off between plasticity (learning new information) and stability (retaining past knowledge). 
Severe forgetting forces the model to relearn previously encountered scenarios, leading to significant computational waste and preventing sustained improvement over time.

We propose SpikACom to enable continual learning of SNNs in wireless communication systems. The framework is structured as a dual-scale adaptation mechanism (Fig.~\ref{fig:spikacom_framework}). At the context level, SpikACom introduces a lightweight hypernet-based context modulator that conditions the backbone SNN on the current channel distribution. This design is motivated by the fact that wireless channels, while varying across environments, are governed by common propagation principles, which induce structured correlations among channel distributions. By learning a compact representation of the environment context, the modulator separates learning across different regimes, reduces destructive interference, and encourages parameter reuse when environments are similar. At the synaptic level, SpikACom employs spiking rate consolidation (SRC), a neuromorphic regularization strategy inspired by Hebbian principles. SRC uses neuron spiking rate as an SNN-specific proxy for parameter importance, selectively stabilizing critical synaptic pathways during adaptation. Because it relies on readily available spiking statistics rather than gradient-based information, SRC incurs substantially lower computation and memory overhead than conventional regularization-based continual learning methods such as elastic weight consolidation (EWC)~\cite{EWC}. Together, this hierarchical design enables SpikACom to adapt rapidly with remarkably low sample requirements and minimal computational overhead, providing a lightweight solution tailored to the resource-constrained nature of wireless communication systems.

In the following subsections, we evaluate SpikACom on three representative wireless communication tasks: neuromorphic semantic communication, multi-user MIMO beamforming, and channel estimation. 

\subsection*{Neuromorphic Semantic Communication}\label{subsubsec2}

\begin{figure}
	\centering

	\begin{subfigure}[b]{0.96\textwidth}
		\centering
		\includegraphics[width=\textwidth]{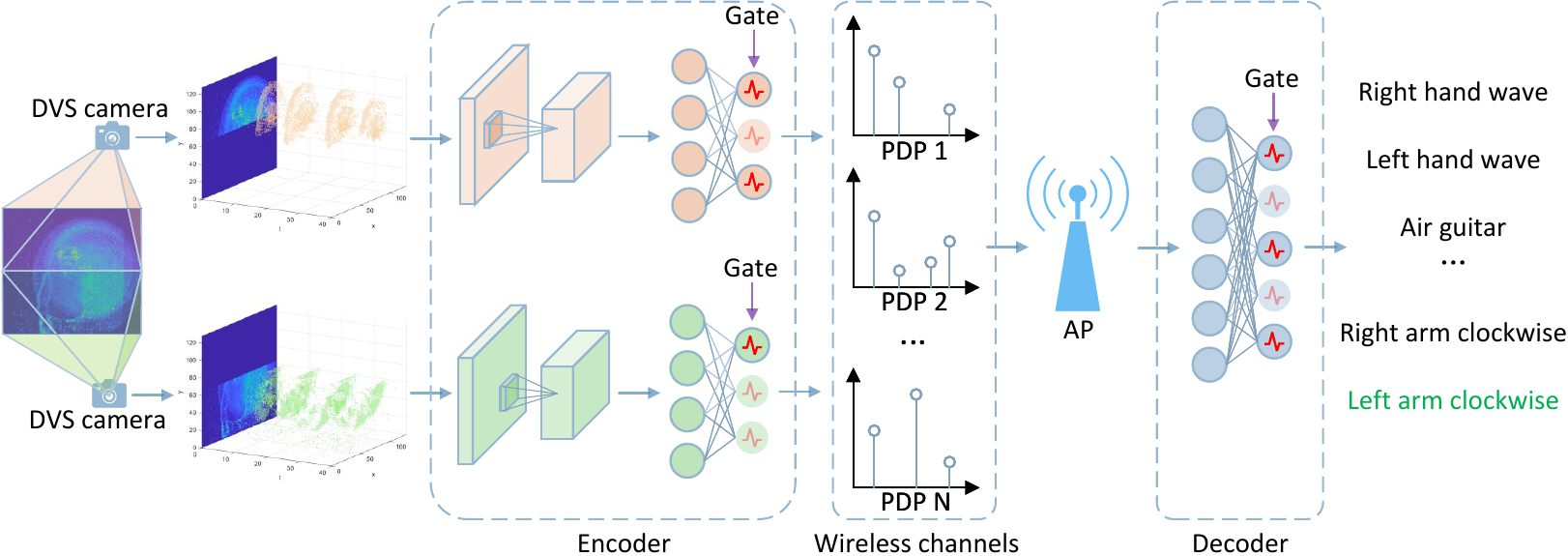}
		\caption{}\label{continous_DVSG}
	\end{subfigure}
	
	\begin{subfigure}[b]{0.32\textwidth}
		\centering
		\includegraphics[width=\textwidth]{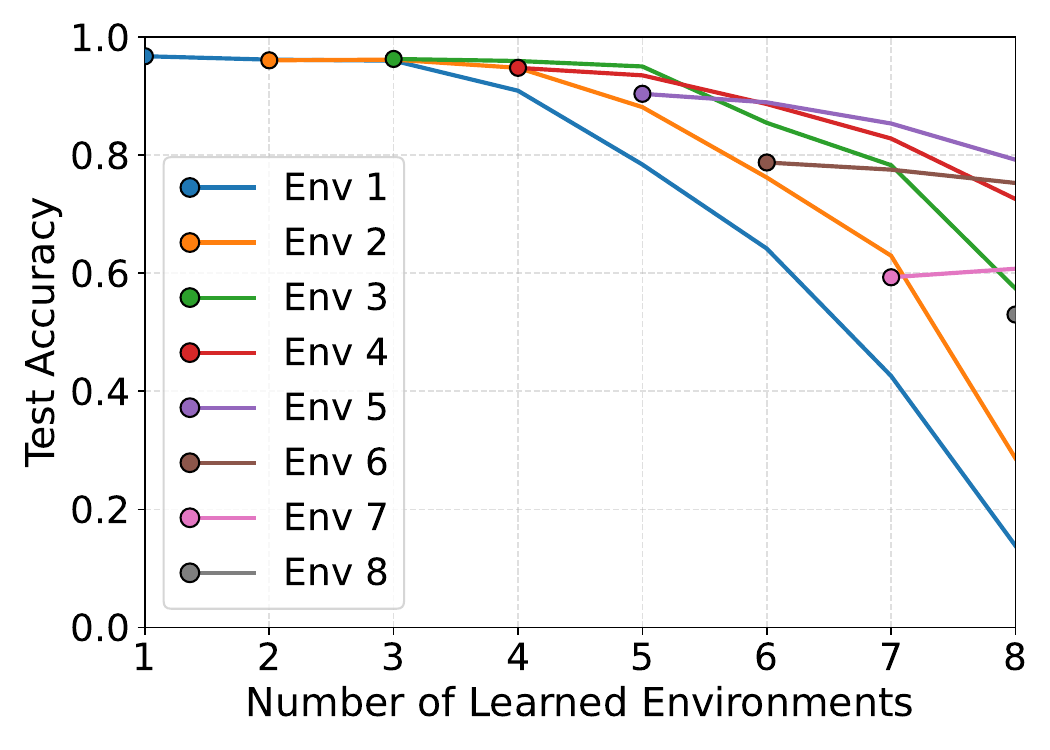}
		\caption{}
		\label{fig:2a}
	\end{subfigure}
	\hfill 
	\begin{subfigure}[b]{0.32\textwidth}
		\centering
		\includegraphics[width=\textwidth]{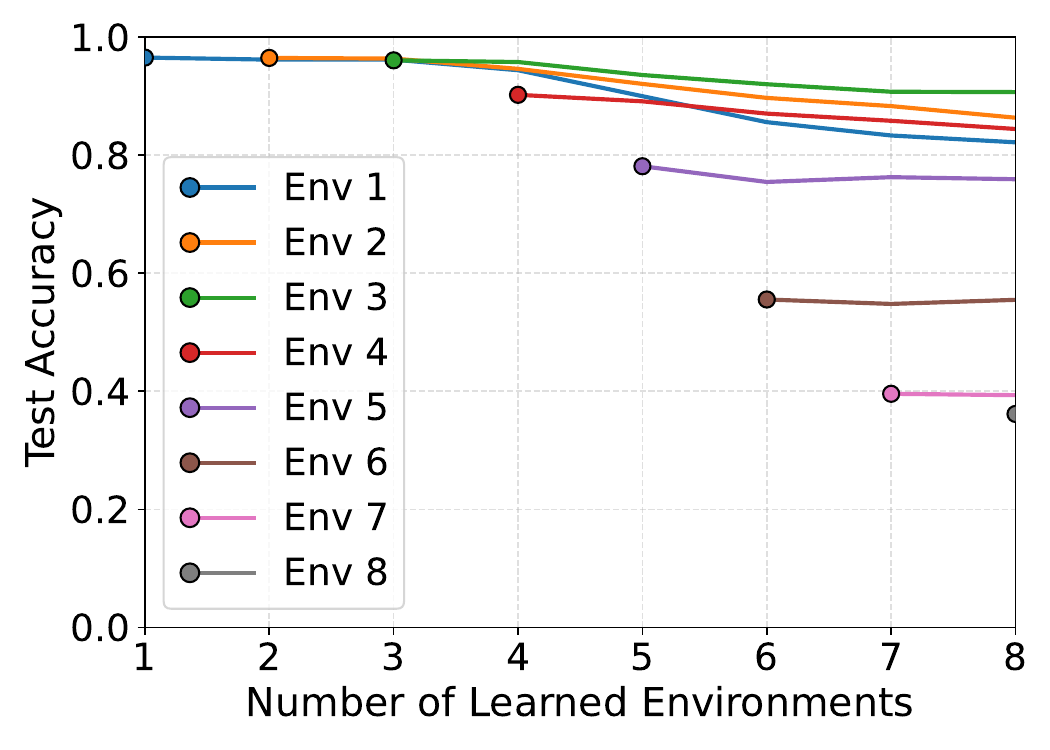}
		\caption{}
		\label{fig:2b}
	\end{subfigure}
	\hfill
	\begin{subfigure}[b]{0.32\textwidth}
		\centering
		\includegraphics[width=\textwidth]{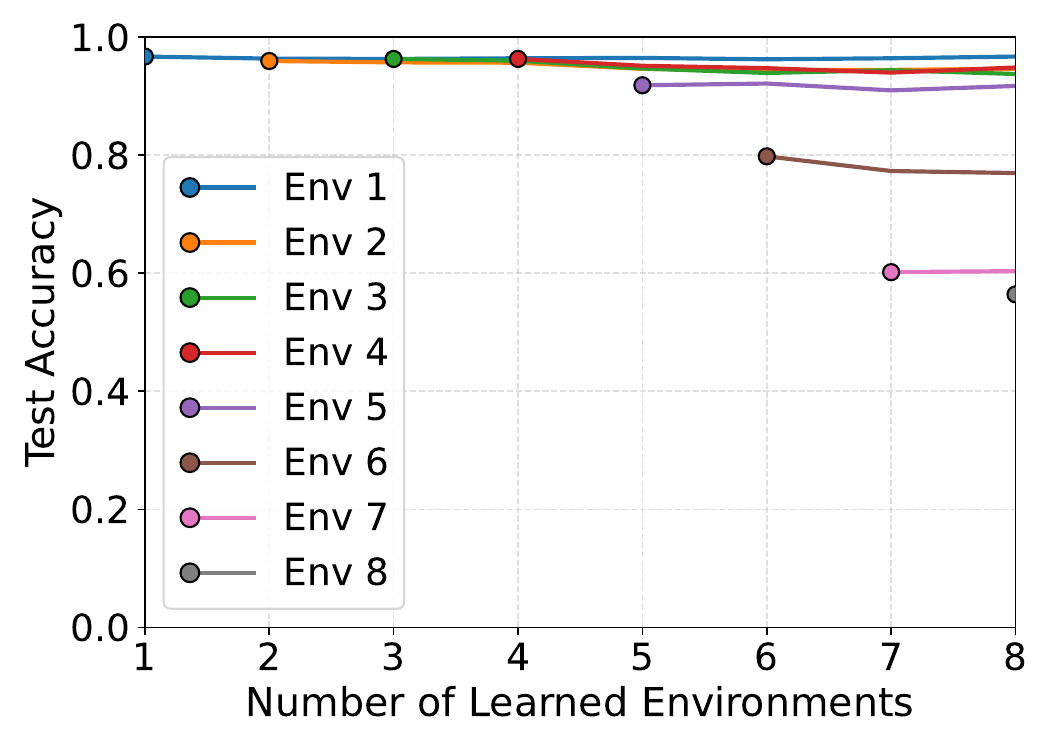}
		\caption{}
		\label{fig:2c}
	\end{subfigure}
	
	\begin{subfigure}[b]{0.32\textwidth}
		\centering
		\includegraphics[width=\textwidth]{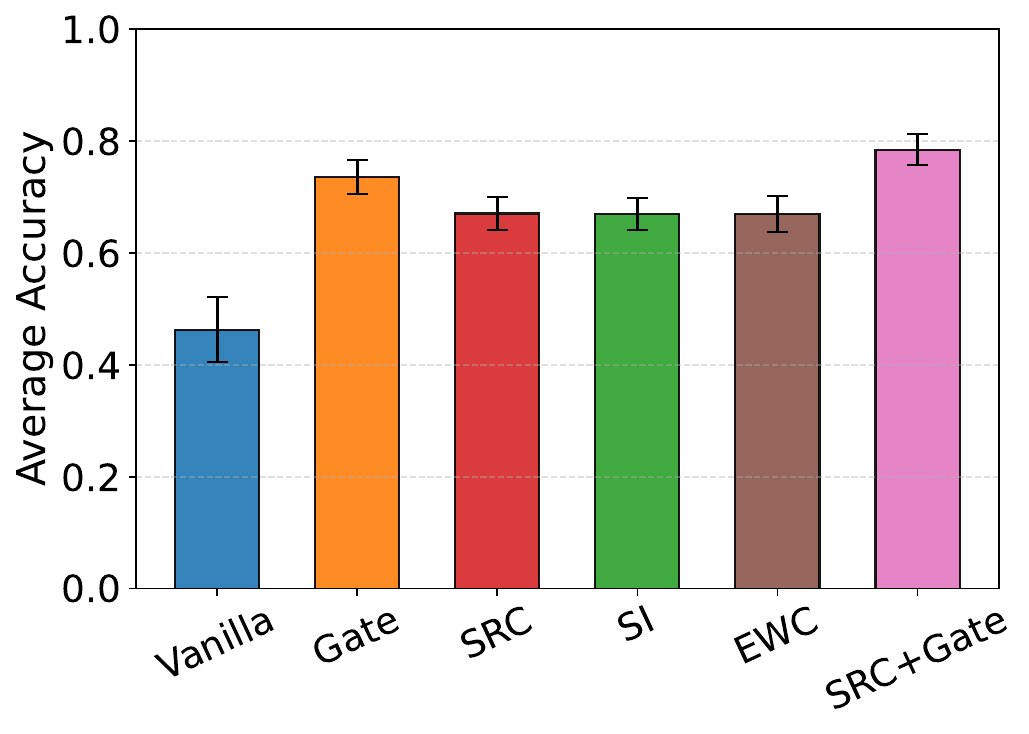}
		\caption{}
		\label{fig:2d}
	\end{subfigure}
	\hfill
		\begin{subfigure}[b]{0.32\textwidth}
		\centering
		\includegraphics[width=\textwidth]{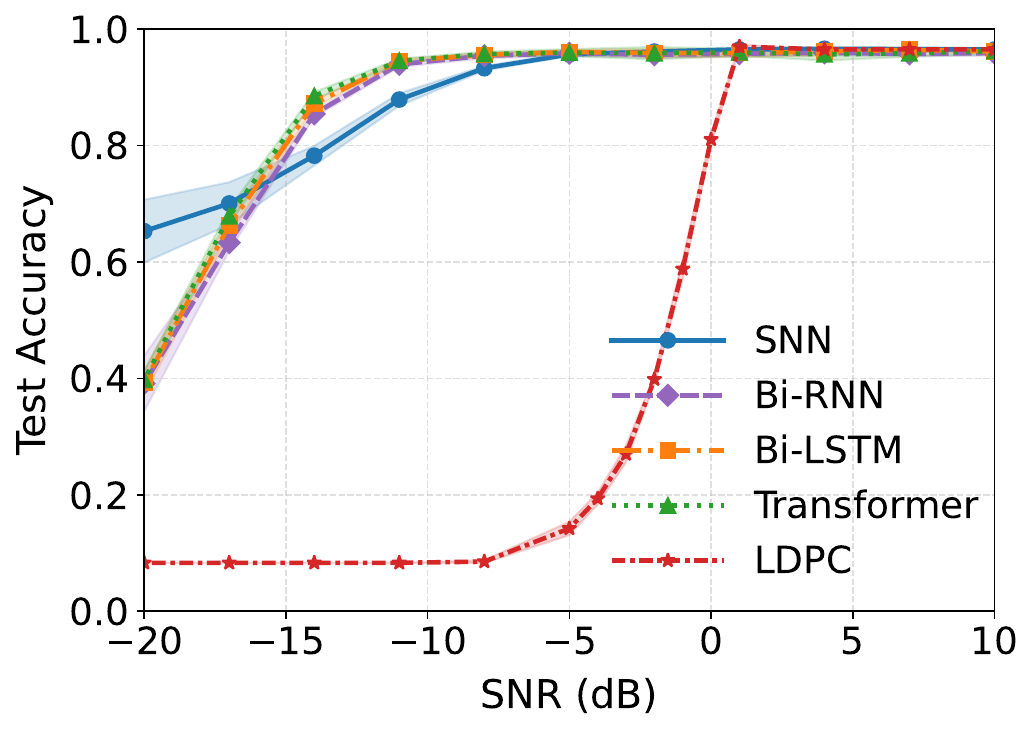}
		\caption{}
		\label{fig:2e}
	\end{subfigure}
	\hfill
	\begin{subfigure}[b]{0.32\textwidth}
		\centering
		\includegraphics[width=\textwidth]{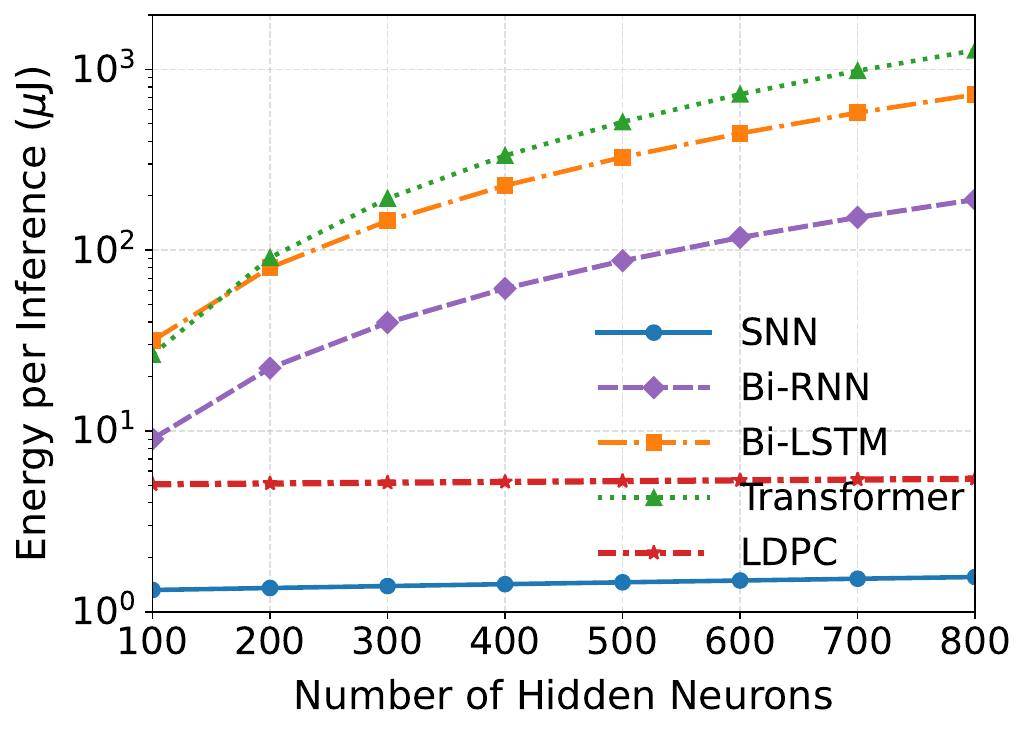}
		\caption{}
		\label{fig:2f}
	\end{subfigure}
	\vspace{-0.4em}

	\caption{\linespread{1.0}\selectfont \textbf{Performance evaluation of SpikACom on neuromorphic semantic communication.} 
		\textbf{(a)} System overview: Event data collected by distributed dynamic vision sensor (DVS) cameras is encoded by SNNs and transmitted over varying wireless channels with different power-delay profiles (PDPs) to the access point (AP). 
		\textbf{(b)}--\textbf{(d)} Evolution of test accuracy versus the number of learned environments. Each curve tracks the test accuracy of a specific environment $i$ evaluated after the model has learned up to environment $k$ (where $k \ge i$).  Comparisons include \textbf{(b)} Vanilla fine-tuning, \textbf{(c)} EWC, \textbf{(d)} SpikACom (ours). The environments are arranged in descending SNR, ranging from $8$~dB (Env 1) to $-20$~dB (Env 8).
		\textbf{(e)} Ablation study of SpikACom. The y-axis represents the average accuracy across all environments after the sequential learning process is completed. 
		\textbf{(f)} Performance comparison of SNN against ANN baselines (Bi-RNN, Bi-LSTM, Transformer) and traditional LDPC coding under different SNRs. 
		\textbf{(g)} Inference energy comparison. The energy consumption of SNN exhibits negligible scaling with the network size compared with ANNs.}
	\label{fig:DVSG}
\end{figure}

Conventional communication systems are designed to reconstruct the transmitted bitstream as accurately as possible, with performance typically measured by bit-error rate. Recent advances in deep learning have made it feasible to extract meaningful semantic information directly from raw data. This shift has given rise to semantic communication \cite{semantic1}, in which the goal is no longer to recover every bit but to convey the task-relevant information. By transmitting only the semantic features that influence the final task, semantic communication can eliminate a large portion of redundant data.
Leveraging deep learning-based joint source-channel coding, this approach significantly reduces transmission overhead and improves robustness to channel impairments \cite{semantic2,semantic3}.

Despite these advantages, practical semantic communication systems remain constrained by the prohibitive computational overhead at the wireless edge. Extracting high-level semantics in ANN-based architectures typically involves dense floating-point matrix operations, making such approaches difficult to deploy on resource-limited devices.
To address this challenge, we introduce an SNN-based framework tailored for neuromorphic semantic encoding and transmission. 
We consider a representative distributed neuromorphic sensing scenario (Fig.~\ref{continous_DVSG}), mirroring future edge applications in which multiple neuromorphic sensors capture and transmit sparse, event-driven data. The DVS128 Gesture dataset~\cite{dvsg_dataset}, a widely used event-based recognition benchmark, is employed to evaluate semantic encoding, transmission, decoding, and continual adaptation under realistic spiking activity.

We first examine the stability of SpikACom against catastrophic forgetting.
Fig.~\ref{fig:2a}–\ref{fig:2c} show the performance trajectories of the evaluated schemes.
The vanilla fine-tuning approach (Fig.~\ref{fig:2a}) lacks mechanisms to prevent catastrophic forgetting, leading to a sharp decline in inference accuracy when exposed to new environments. The classic regularization based method
EWC (Fig.~\ref{fig:2b}), slows down this degradation, but the accuracy still decreases as the channel distribution shifts.
For example, the accuracy in the initial environment falls by about $10\%$ after the model has traversed eight different environments. Moreover, the cumulative constraints imposed by EWC progressively restrict the parameter space, leading to reduced accuracy in later environments.
In contrast, the proposed method (Fig.~\ref{fig:2c}) prevents forgetting while simultaneously maintaining competitive performance in new environments. This improvement arises because SpikACom explicitly exploits the relationships among channel distributions, enabling more efficient allocation of plasticity while preserving stability.

Fig.~\ref{fig:2d} shows the ablation study of the proposed SpikACom. From the figure, both the hypernet-based gating mechanism and the SRC module significantly improve the final accuracy compared with vanilla fine-tuning.
In addition, the SRC method achieves performance very close to gradient-based regularization approaches such as EWC and synaptic intelligence (SI), validating that spike-based statistics can serve as effective proxies for second-order gradient information.
Fig.~S1 in the Supplementary Materials further compares the spiking rate vector with the principal eigenvector of the Fisher information matrix (FIM), revealing a noticeable alignment between the two. However, unlike EWC or SI, SRC incurs much lower computational and memory cost, making it more practical for wireless communication scenarios. Finally,
Combining the context gate with SRC yields the best final average accuracy of $0.7849\pm 0.0279$, indicating that the method effectively mitigates catastrophic forgetting.

Fig.~\ref{fig:2e} compares the SNN with several ANN architectures, including a bidirectional recurrent neural network (Bi-RNN) \cite{bi-RNN}, a bidirectional long short-term memory (Bi-LSTM) \cite{LSTM}, a Transformer \cite{transformer}, and a conventional bit-level low-density parity-check (LDPC) \cite{ldpc} based scheme.
At relatively high signal-to-noise ratio (SNR) values (greater than $-5$ dB), the SNN achieves comparable performance to Bi-RNN, Bi-LSTM, and Transformer.
As the SNR decreases to the range of $-5$ dB to $-16$ dB, the SNN shows a small performance gap relative to ANN-based models, reflecting the increased difficulty of optimization under strong noise.
However, when the SNR drops further, the SNN begins to outperform the ANN architectures. This improvement comes from the sparse and binary nature of spike-based joint source-channel coding, which provides stronger robustness against noise compared with floating-point ANN features.
The LDPC-based scheme, which transmits raw spike data to the access point (AP) and then classifies them using a lightweight SNN, suffers a large performance drop around 0 dB and is significantly inferior to deep learning-based joint source-channel coding.

In terms of energy consumption (Fig.~\ref{fig:2f}), SNNs consume roughly an order of magnitude less energy than Bi-LSTM, Bi-RNN, and Transformer models. Moreover, the inference energy of SNNs scales only weakly with network size, because SNNs are event-driven and depend primarily on the number of emitted spikes, which is governed by the task’s semantic complexity and the channel condition.
In contrast, ANN-based models require fixed floating-point operations. Consequently, their energy consumption increases rapidly as the model size grows.
We also observe that SNN-based joint source-channel coding is more energy-efficient than LDPC-based schemes, providing further evidence for the high energy efficiency of SNNs.

We further investigate the convergence behavior of SpikACom (Fig.~S2) in comparison with an orthogonal gate modulation baseline (Fig.~S3). The results show that although orthogonal gate modulation reduces interference by isolating neuron groups, it restricts knowledge transfer across environments, leading to slow adaptation. In contrast, SpikACom exploits cross-environment correlations to provide better initialization, resulting in faster convergence and more reliable performance under challenging channel conditions.

\subsection*{MIMO Beamforming}

\begin{figure}
	\centering

	\begin{subfigure}[b]{0.96\textwidth}
		\centering
		\includegraphics[width=\textwidth]{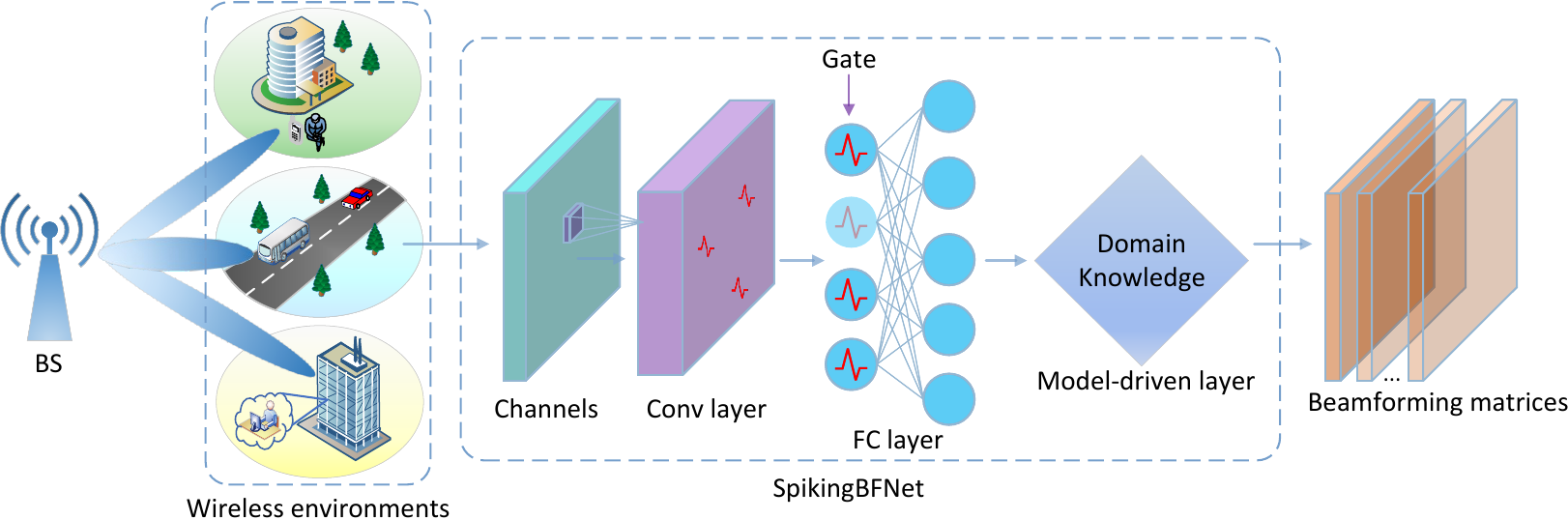}
		\caption{}\label{continuous_beamforming}
	\end{subfigure}
	
	\begin{subfigure}[b]{0.32\textwidth}
		\centering
		\includegraphics[width=\textwidth]{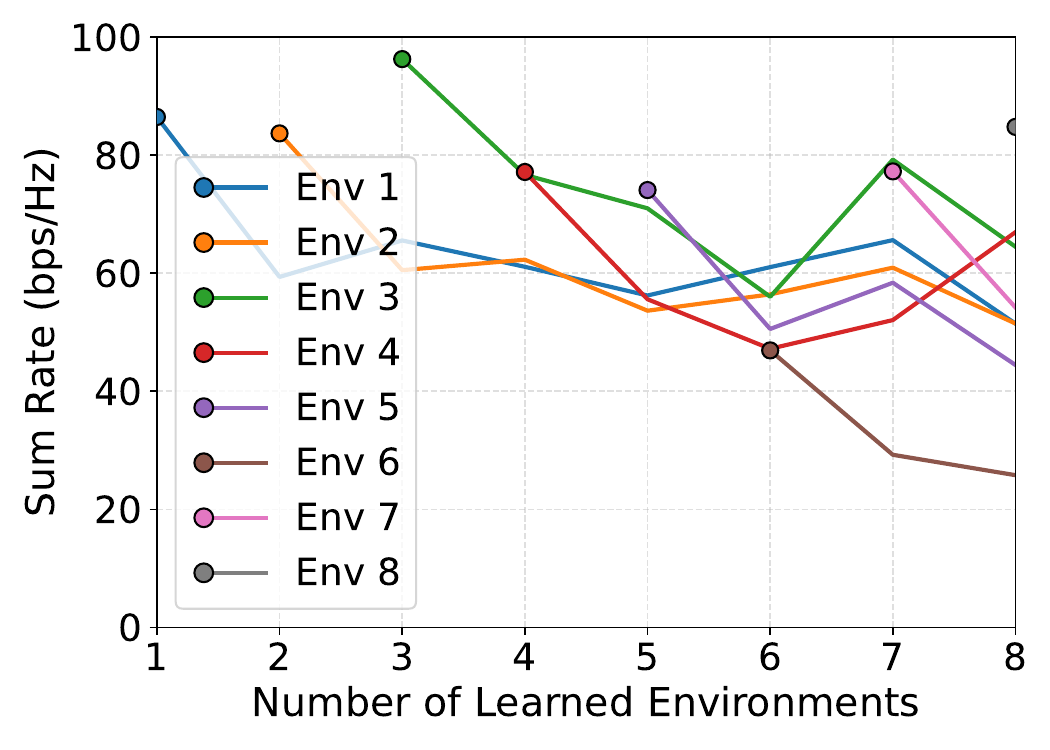}
		\caption{}
		\label{fig:3a}
	\end{subfigure}
	\hfill 
	\begin{subfigure}[b]{0.32\textwidth}
		\centering
		\includegraphics[width=\textwidth]{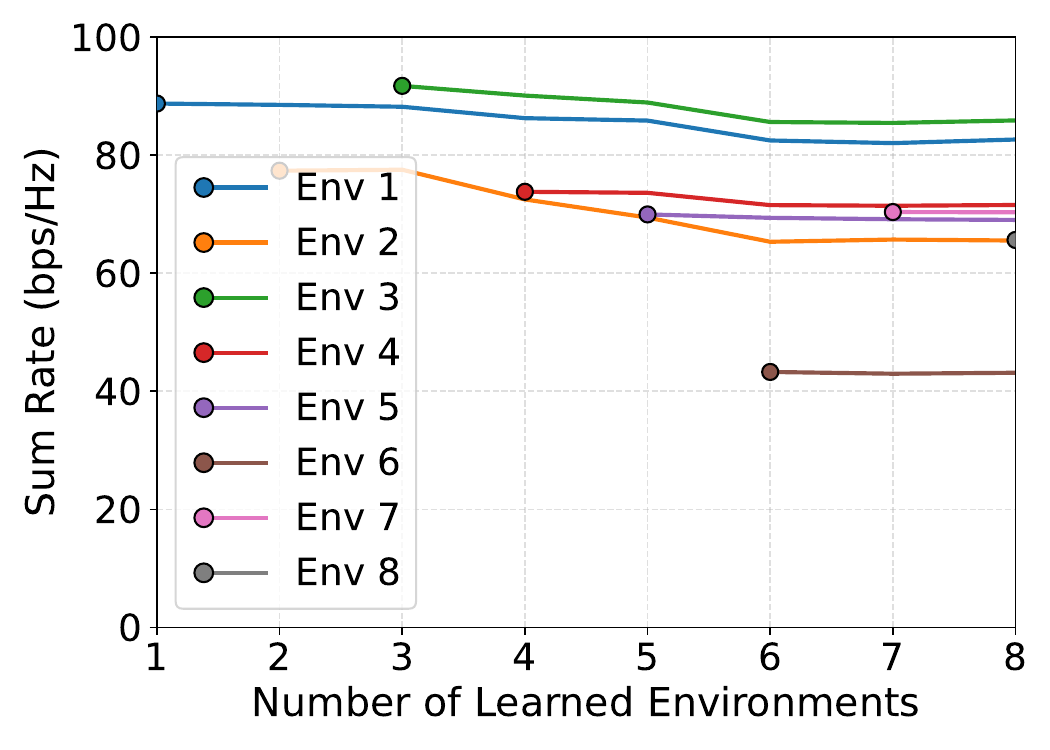}
		\caption{}
		\label{fig:3b}
	\end{subfigure}
	\hfill
	\begin{subfigure}[b]{0.32\textwidth}
		\centering
		\includegraphics[width=\textwidth]{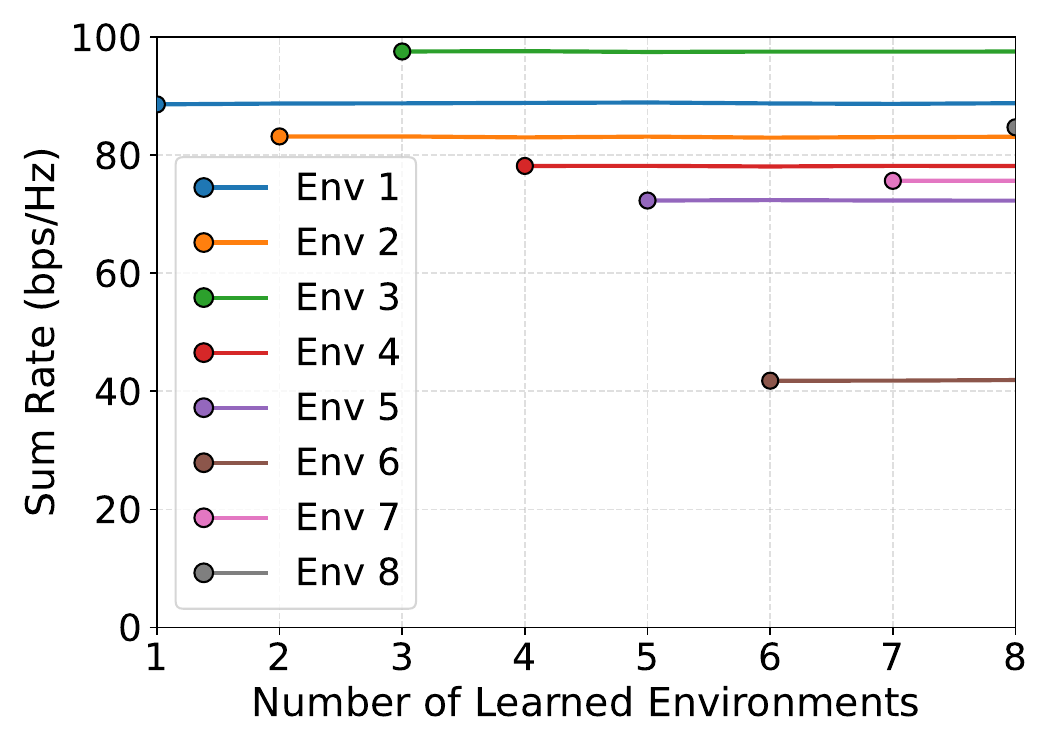}
		\caption{}
		\label{fig:3c}
	\end{subfigure}
	
	\begin{subfigure}[b]{0.32\textwidth}
		\centering
		\includegraphics[width=\textwidth]{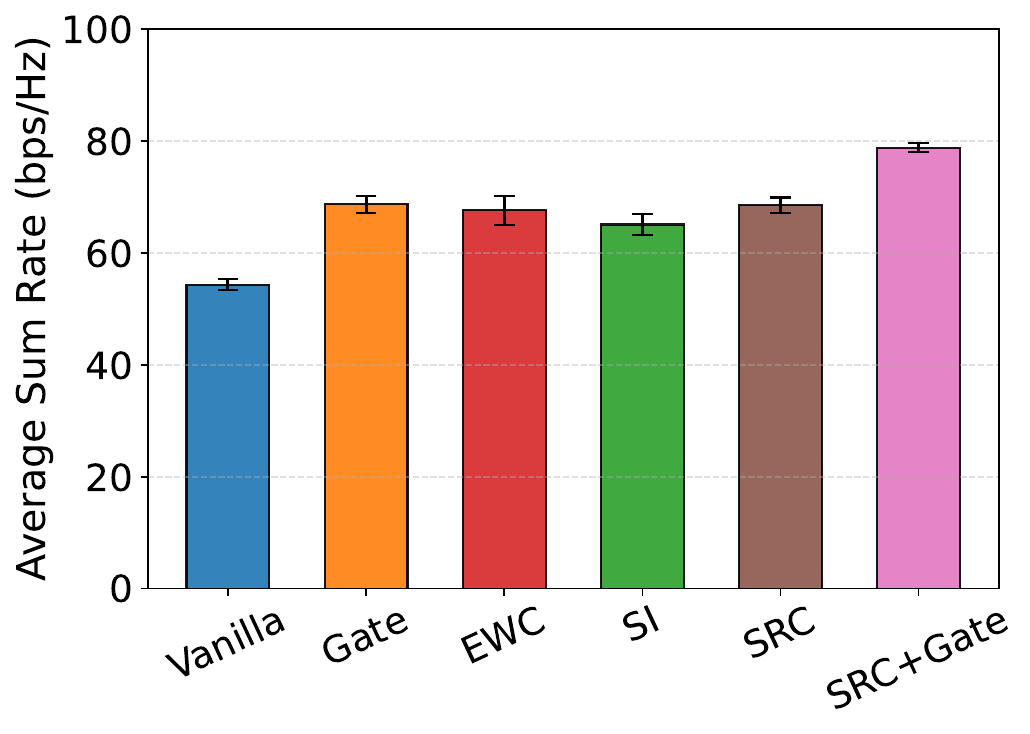}
		\caption{}
		\label{fig:3d}
	\end{subfigure}
	\hfill
	\begin{subfigure}[b]{0.32\textwidth}
		\centering
		\includegraphics[width=\textwidth]{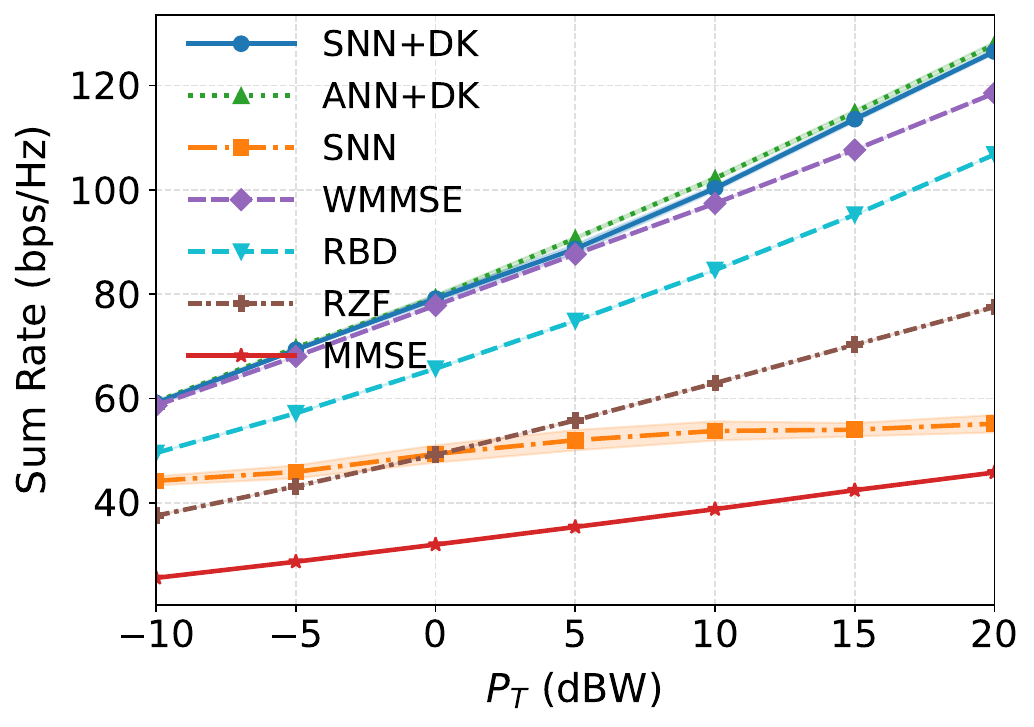}
		\caption{}
		\label{fig:3e}
	\end{subfigure}
	\hfill
	\begin{subfigure}[b]{0.32\textwidth}
		\centering
		\includegraphics[width=\textwidth]{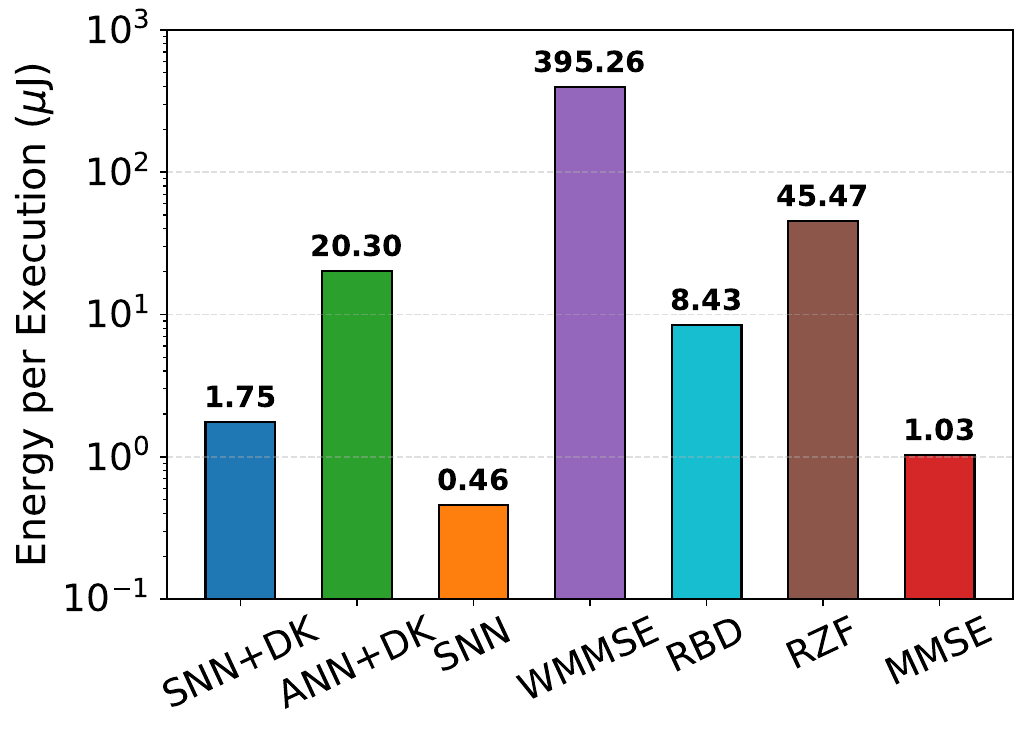}
		\caption{}
		\label{fig:3f}
	\end{subfigure}
	
	\caption{\linespread{1.0}\selectfont \textbf{Performance evaluation of SpikACom on multi-user MIMO beamforming.}
	\textbf{(a)} System overview: A base station (BS) forms beams serving multiple users located in different wireless environments. The complex-valued channel state information is encoded into spikes for energy-efficient processing, and a model-driven module is integrated to improve beamforming performance.
	\textbf{(b)}--\textbf{(d)} Sum-rate evolution over sequential environments. Comparisons include \textbf{(b)} Vanilla fine-tuning, \textbf{(c)} EWC, and \textbf{(d)} SpikACom (ours). The channels are generated based on the 3D ray-tracing DeepMIMO dataset, where each environment corresponds to a distinct user location distribution. 
	\textbf{(e)} Ablation study. The y-axis shows the average sum rate across all environments after the sequential learning process is completed.
	\textbf{(f)} Sum-rate performance versus the maximum transmit power $P_{T}$. We compare the proposed SNN + domain knowledge (DK) against ANN + DK, SNN without DK, and traditional beamforming benchmarks (WMMSE, RBD, RZF, and MMSE).
	\textbf{(g)} Energy consumption comparison of analyzed schemes for beamforming. The energy consumption of SNN+DK is reduced by over $10\times$ compared with the ANN benchmark and is comparable to low-complexity traditional baselines (e.g. MMSE).}
	\label{fig:Beamforming}
\end{figure}

Beamforming is a fundamental signal processing function in multi-antenna wireless systems and plays a central role in 4G, 5G, and the emerging 6G architectures \cite{MIMO_overview}.
Its objective is to steer the transmitted energy toward the intended user by properly weighting the signals across antennas, thereby improving link reliability, extending coverage, and suppressing interference.
Early beamforming methods were mostly model-driven, including matched filtering, zero-forcing (ZF), minimum mean-squared error (MMSE), block diagonalization (BD) \cite{BD}, regularized BD (RBD) \cite{RBD}, and weighted MMSE (WMMSE) \cite{WMMSE}.
These techniques rely on accurate CSI and form the basis of most practical MIMO precoding designs.
More recently, data-driven deep learning approaches have been applied to beamforming optimization \cite{Bf_unfolding1,Bf_unfolding2,Bf_unfolding3,Bf_unfolding4}, leveraging neural networks to improve performance and reduce algorithmic complexity, and have demonstrated empirical effectiveness.
However, these ANN-based algorithms still require massive floating-point matrix operations.
In addition, efficient mechanism to deploy ANN-based beamforming in dynamic wireless environments remain scarce.
Therefore, as illustrated in Fig.~\ref{continuous_beamforming}, we employ the proposed SpikACom framework for beamforming in continually changing environments, with the system setting detailed in the Materials and Methods section.
Moreover, to further enhance the performance and interpretability of SNN-based beamforming, we incorporate domain knowledge into the design of the backbone SNN (see Materials and Methods), combining the high performance of classical model-driven designs with the energy efficiency of SNNs.

We first examine the performance of the proposed framework in mitigating catastrophic forgetting.
From Fig.~\ref{fig:3a} to Fig.~\ref{fig:3c}, the forgetting behavior is progressively reduced when moving from vanilla fine-tuning to EWC, and from EWC to the proposed SpikACom.
The ablation study in Fig.~\ref{fig:3d} shows a trend consistent with the observations in the semantic communication task.
In addition, we note that in this more complex large-scale matrix optimization problem, SRC exhibits more stable performance, achieving an average of $68.52 \pm 1.39$ bps/Hz compared with $67.61 \pm 2.66$ bps/Hz for EWC and $65.09 \pm 1.95$ bps/Hz for SI.
This robustness can be attributed to the numerical stability: matrix operations such as inversion often suffer from poor conditioning and can lead to gradient explosion during backpropagation, which adversely affects gradient-based regularizers like EWC or SI.
In contrast, SRC relies on spiking rate naturally bounded between $0$ and $1$, making it less sensitive to numerical issues.

Fig.~\ref{fig:3e} compares SNN+domain knowledge (DK) beamforming with ANN+DK, vanilla SNN without the model-driven layer, and several classical model-based methods.
Across the entire transmit power range from $-10$ dBW to $20$ dBW, SNN+DK achieves a sum-rate performance that closely matches ANN+DK.
WMMSE performs comparably at low transmit power, but the gap widens as the transmit power increases.
This occurs because WMMSE relies on block coordinate descent, which may become trapped in local optima under ill-conditioned or unbalanced channel realizations.
In contrast, data-driven methods handle the non-convex optimization landscape more effectively, thus achieving higher performance.
Other model-based algorithms, including RBD, RZF, and MMSE, show a noticeable performance gap compared to data-driven approaches.
The vanilla SNN, which lacks guidance from domain knowledge, suffers from significant performance degradation and exhibits unstable convergence. This highlights the importance of integrating model information into SNN architectures in complex signal processing tasks.

Fig.~\ref{fig:3f} reports the energy consumption of different beamforming algorithms. SNN+DK consumes only $1.75\, \mu$J per execution, which is an order of magnitude lower than the $20.30\, \mu$J required by ANN+DK and is comparable to the lightweight MMSE baseline ($1.03\, \mu$J).
Pure SNN achieves the lowest energy consumption at $0.46\,\mu$J. However, its insufficient performance limits its practical utility. Therefore,
SNN+DK offers a more favorable performance-energy trade-off.
RBD and RZF consume energy levels similar to ANN+DK. The optimization-based WMMSE method is the most expensive algorithm, consuming about $400\,\mu$J per execution due to the large number of iterations required.

\subsection*{Channel Estimation}\label{sec7}
\begin{figure}
	\centering
	\begin{subfigure}[b]{0.96\textwidth}
		\centering
		\includegraphics[width=\textwidth]{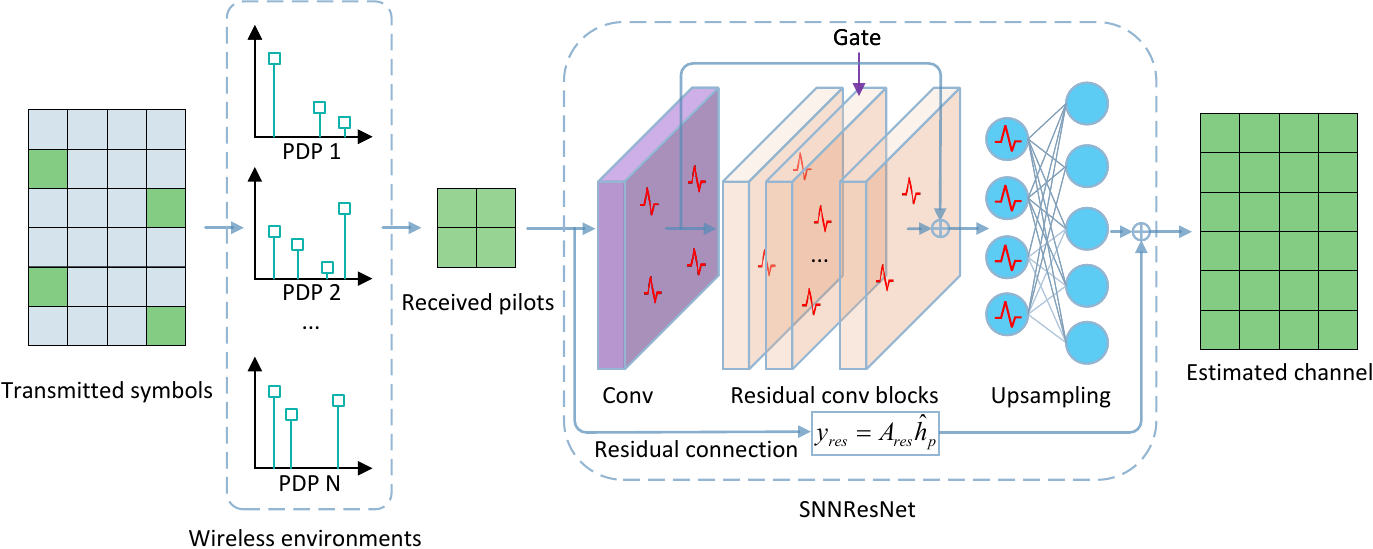}
		\caption{}\label{channel_estimation}
	\end{subfigure}
	
	\begin{subfigure}[b]{0.32\textwidth}
		\centering
		\includegraphics[width=\textwidth]{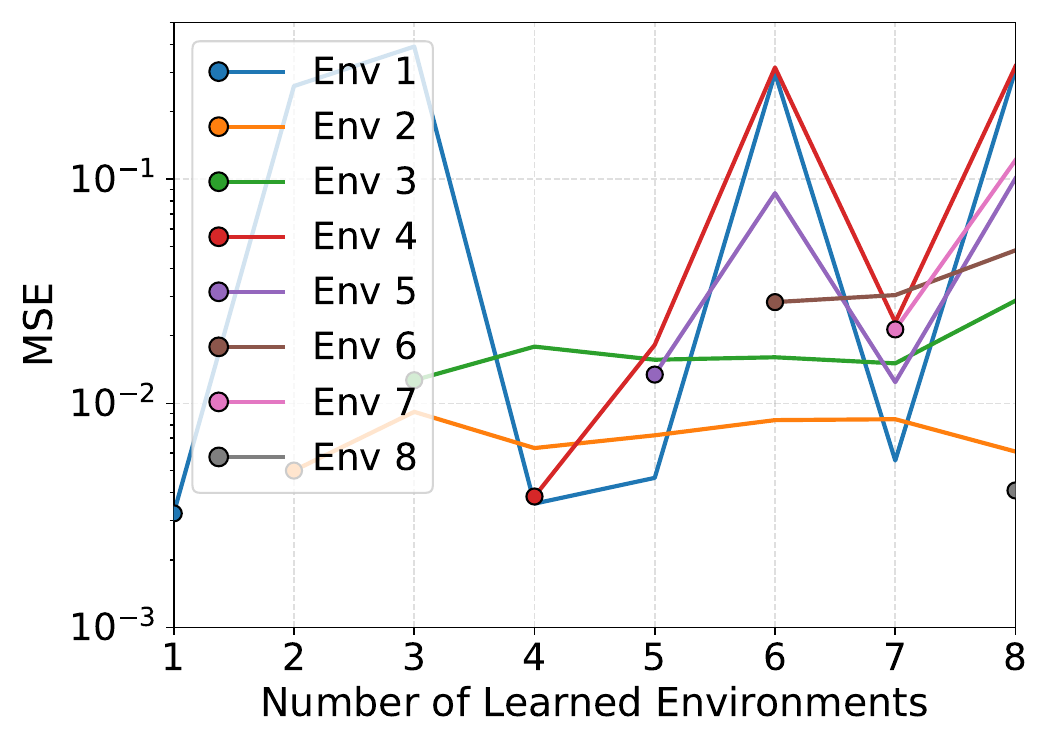}
		\caption{}
		\label{fig:4a}
	\end{subfigure}
	\hfill 
	\begin{subfigure}[b]{0.32\textwidth}
		\centering
		\includegraphics[width=\textwidth]{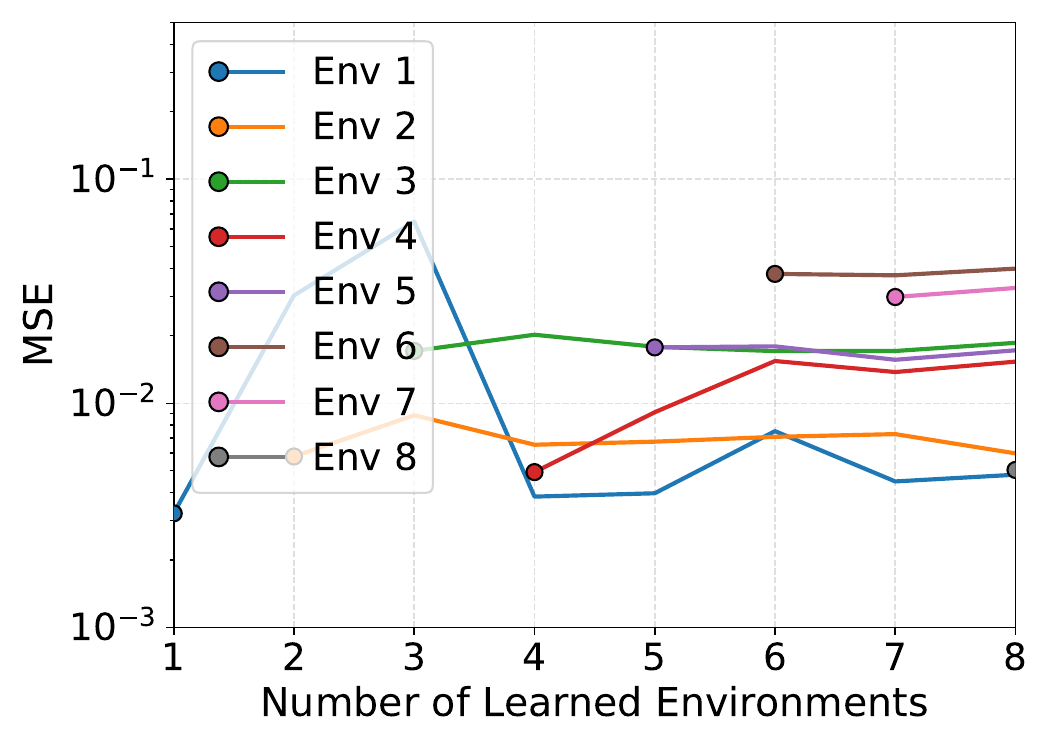}
		\caption{}
		\label{fig:4b}
	\end{subfigure}
	\hfill
	\begin{subfigure}[b]{0.32\textwidth}
		\centering
		\includegraphics[width=\textwidth]{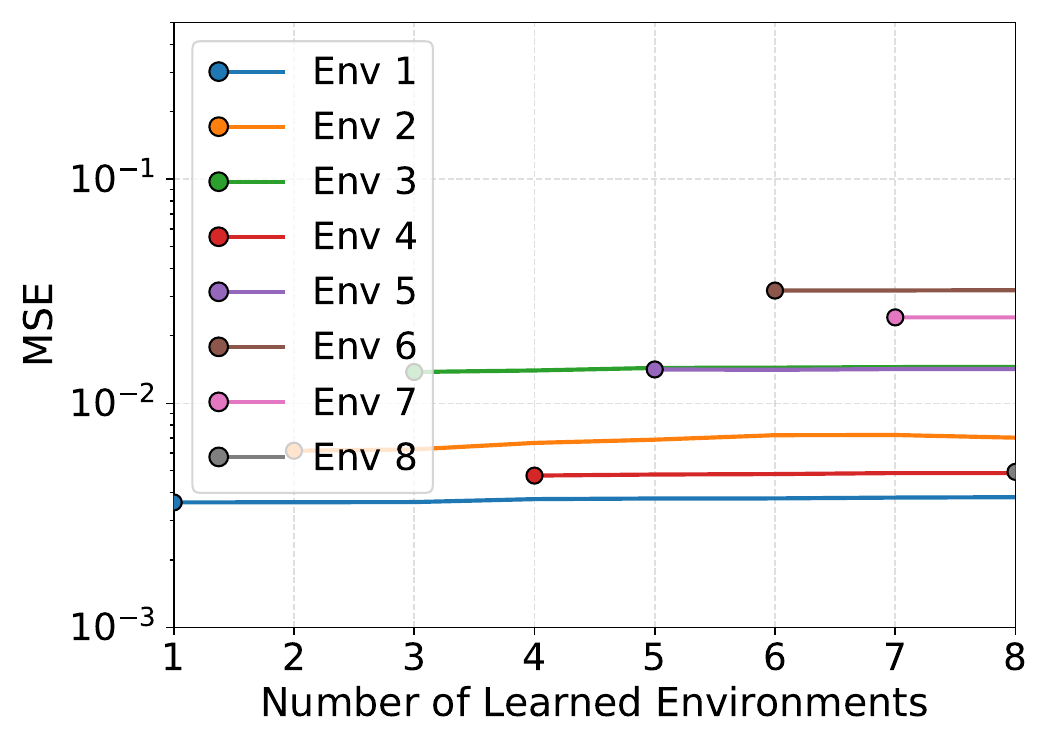}
		\caption{}
		\label{fig:4c}
	\end{subfigure}
	
	\begin{subfigure}[b]{0.32\textwidth}
		\centering
		\includegraphics[width=\textwidth]{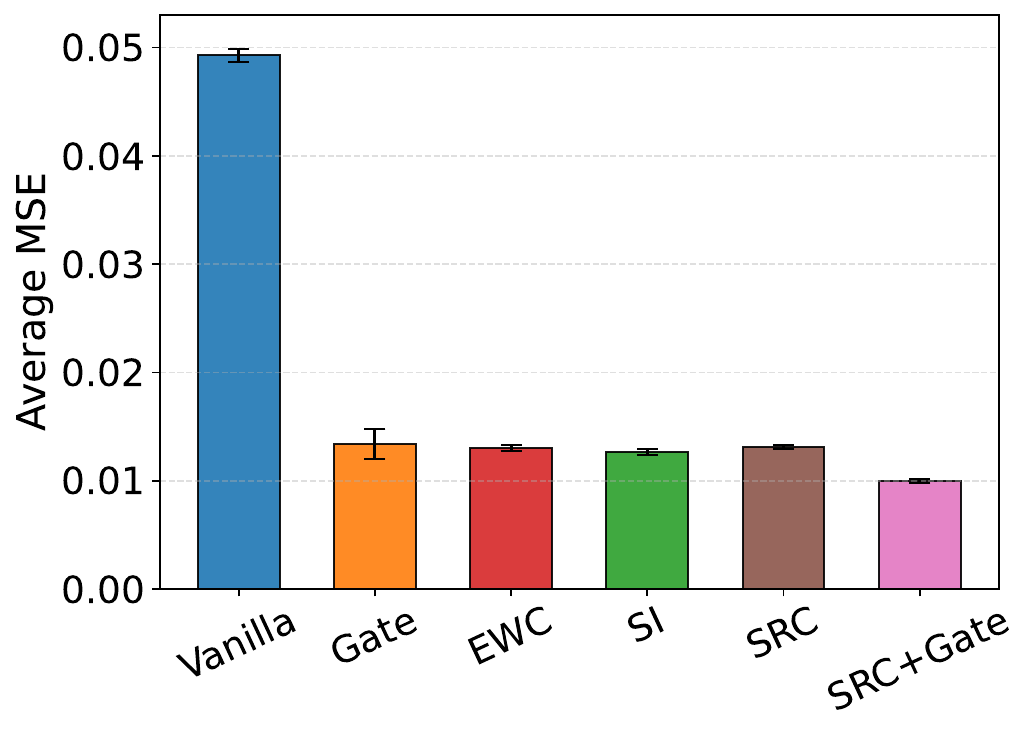}
		\caption{}
		\label{fig:4d}
	\end{subfigure}
	\hfill
	\begin{subfigure}[b]{0.32\textwidth}
		\centering
		\includegraphics[width=\textwidth]{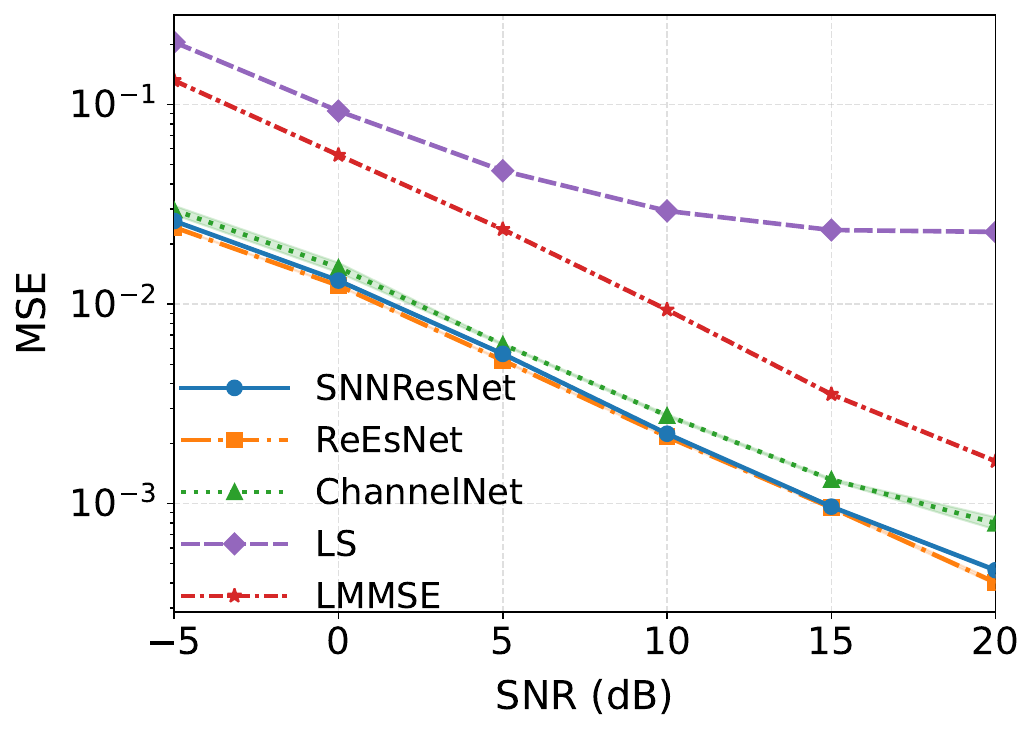}
		\caption{}
		\label{fig:4e}
	\end{subfigure}
	\hfill
	\begin{subfigure}[b]{0.32\textwidth}
		\centering
		\includegraphics[width=\textwidth]{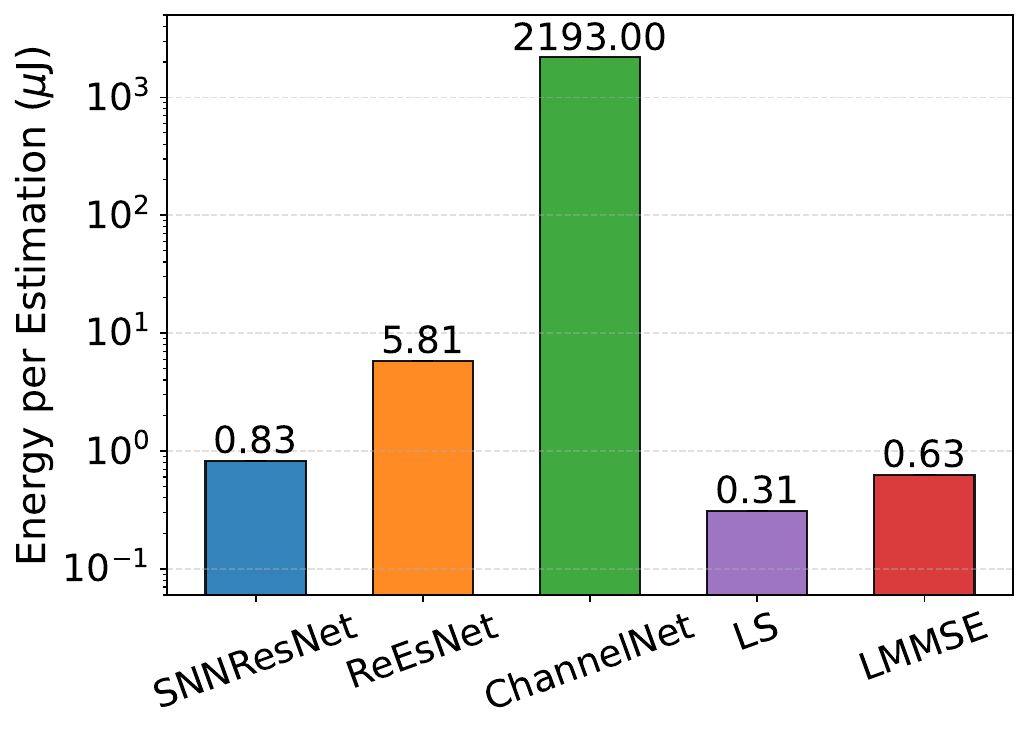}
		\caption{}
		\label{fig:4f}
	\end{subfigure}	

	\vspace{-1em}
	\caption{\linespread{1.0}\selectfont \textbf{Performance evaluation of SpikACom on channel estimation in OFDM systems.} 
		\textbf{(a)} System overview: Sparse pilots are inserted into the OFDM frequency-time grid and transmitted through dynamic wireless channels. The received pilots are processed by the proposed SNNResNet, which uses spike-based residual convolutional blocks for feature extraction and incorporates an end-to-end LMMSE-inspired residual connection to facilitate channel reconstruction.
		\textbf{(b)}--\textbf{(d)} MSE evolution over sequential environments. Comparisons include \textbf{(b)} Vanilla fine-tuning, \textbf{(c)} EWC, and \textbf{(d)} SpikACom (ours). The WINNER II channel model is adopted, where each environment corresponds to a distinct propagation scenario. Lower MSE indicates better estimation performance.
		\textbf{(e)} Ablation study. The y-axis represents the average MSE, computed using the arithmetic mean, across all environments after the sequential learning process is completed. 
		\textbf{(f)} Performance comparison among SNN-based channel estimation (SNNResNet), ANN-based channel estimation (ChannelNet and ReEsNet), and traditional estimators (LS and LMMSE). \textbf{(g)} Energy consumption comparison. SNN-based scheme is  $7\times$ more energy-efficient than lightweight ANN baseline (ReEsNet).}		
	\label{fig:OFDMCE}
\end{figure}
CSI is indispensable for reliable wireless communications. It serves as the prerequisite for many key communication functions, including signal detection, beamforming, and resource allocation. Inaccurate CSI leads to mismatched processing and noise enhancement, causing substantial performance loss in terms of reliability and spectral efficiency. In this work, we focus on channel estimation in OFDM systems. OFDM is a foundational physical-layer modulation technology and underpins many of the most influential communication standards, including 4G long-term evolution (LTE), 5G new radio (NR), and the IEEE 802.11 Wi-Fi family. However, obtaining accurate CSI in OFDM systems is a non-trivial problem due to complex environmental scattering and limited pilot observations.  
Traditional channel estimators such as least squares (LS) and MMSE perform well under simplified channel models but degrade in the presence of hardware impairments, non-ideal propagation, or rich scattering.
Recent deep learning-based estimators mitigate some of these issues by capturing complex channel characteristics directly from data.
However, their computational cost incurs substantial energy overhead, challenging the sustainability of next-generation wireless communication systems.
To overcome these challenges, we develop an SNN-based OFDM channel estimation method within the SpikACom framework, leveraging the energy-efficient computational properties of SNNs while maintaining low estimation error.

We consider an OFDM channel estimation task in a sequentially varying environment (Fig. \ref{channel_estimation}). We adopt the WINNER II channel model \cite{winner2} as the testbed. This channel model encompasses a wide range of realistic propagation scenarios, including indoor/outdoor settings, urban/rural areas, and line-of-sight (LOS)/non-LOS (NLOS) conditions.
Fig. \ref{fig:4a}–Fig. \ref{fig:4c} illustrate the performance trajectories of the compared schemes under the changing environments. In Fig. \ref{fig:4a}, the vanilla fine-tuning approach exhibits significant performance fluctuations due to substantial shifts in the underlying channel distribution.
The EWC method (Fig. \ref{fig:4b}) alleviates this issue to some extent by retaining partial knowledge from past environments. However, it does not explicitly exploit the relationships between previous channel distributions and the current one. Consequently, large distributional gaps still trigger noticeable degradation.
The proposed method (Fig. \ref{fig:4c}) more effectively leverages the relationships between historical and incoming environments. This guidance enables the backbone network to update its parameters in a more informed manner, yielding stable performance across environments.

Fig.~\ref{fig:4d} presents the ablation results of the proposed algorithm.
The fine-tuning baseline performs poorly, yielding a high final mean-squared error (MSE) of $0.049$.
Introducing the hypernet-based gating mechanism substantially improves performance, reducing the MSE to $0.0133$.
Regularization-based approaches achieve similar  results, with final MSEs of $0.0130$ (EWC), $0.0126$ (SI), and $0.0131$ (SRC).
The best performance is obtained by combining the context gate with regularization, which further reduces the MSE to $0.0099$.

Fig.~\ref{fig:4e} compares the SNN-based estimator with the traditional LS and linear MMSE (LMMSE) estimators, as well as two representative ANN-based OFDM channel estimation networks, ChannelNet~\cite{OFDM_CE1} and ReEsNet~\cite{OFDM_CE2}.
ChannelNet decomposes channel estimation into a super-resolution module followed by a denoising module, whereas ReEsNet adopts a lightweight end-to-end residual architecture that serves as a strong ANN baseline.
As shown in Fig.~\ref{fig:4e}, LS performs poorly and its MSE curve saturates in the high-SNR region.
This occurs because LS does not exploit the structural correlation between pilot positions and the whole channel grid, leading to severe degradation, especially when the pilot density is low.
LMMSE achieves better performance than LS but still exhibits a large gap relative to learning-based approaches.
ChannelNet lags behind SNNResNet and ReEsNet, mainly because its two-stage design cannot match the optimization benefits of end-to-end architectures.
ReEsNet and the proposed SNNResNet achieve nearly identical MSE performance and clearly outperform other baselines.

Fig.~\ref{fig:4f} summarizes the energy consumption of different estimators.
SNNResNet requires only $0.827\,\mu\text{J}$ per estimation, which is approximately $7\times$ more energy-efficient than ReEsNet and is comparable to the LMMSE estimator.
ChannelNet is notably energy-intensive, consuming $2.19\,\text{mJ}$ due to its large network size.
LS is the most energy-efficient method, but its estimation accuracy is poor.
Jointly considering both performance and energy efficiency, the SNN-based estimator provides a highly competitive solution for OFDM channel estimation.

\section*{Discussion}\label{sec13}
Modern communication systems face intensifying pressure to meet carbon-neutrality goals as energy consumption in wireless networks continues to escalate. This challenge is exacerbated by the growing reliance on deep learning-based wireless signal processing, which incurs substantial computational overhead. To address this issue, we have explored SNNs as a promising low-power paradigm for signal processing and proposed SpikACom, a framework designed to enable green, adaptive communications. SpikACom tackles the highly dynamic nature of wireless environments via a dual-scale mechanism that integrates (i) context-aware modulation, which exploits similarities in the channel distributions to improve knowledge reuse and mitigate catastrophic forgetting, and (ii) spiking rate consolidation (SRC), a novel mechanism that regularizes the synaptic weights of SNNs based on the firing intensity of spiking neurons, achieving effects comparable to those of traditional gradient-based regularization methods with significantly lower demands on storage and computation.

We evaluated SpikACom on several key tasks critical to current and emerging wireless communications, including neuromorphic semantic communication, multi-user MIMO beamforming, and OFDM channel estimation. Our results show that, with appropriate network architecture design, SNNs can match the performance of conventional ANNs while reducing computational energy by approximately an order of magnitude. Compared with classical signal processing methods, SNNs exhibit superior performance in all considered settings while maintaining energy consumption comparable to low-complexity baselines. Notably, all models were trained directly using backpropagation through time (BPTT) with surrogate gradients, without converting from ANN counterparts, indicating that SNNs possess the intrinsic capability to learn complex wireless signal processing tasks. Furthermore, our findings underscore that integrating domain knowledge into the backbone SNN is essential to fully unlocking this potential. This integration allows the network to capture task-specific physical structures, enhancing both performance and interpretability while introducing only modest increases in complexity.

Overall, our work presents an SNN-based framework for continual learning in dynamic wireless environments and provides evidence that spike-based signal processing can be effective on sophisticated physical-layer tasks. By linking adaptability under distribution shift with neuromorphic efficiency, these results advance the case for sustainable communication architectures toward 6G and beyond. At the same time, translating this promise into robust, real-world deployment demands broad community collaboration. We therefore close by outlining the open challenges and identifying key future directions.
\begin{enumerate}
	\item \textbf{Theoretical guarantees}. Traditional communication methods are model-driven and supported by rigorous theoretical analysis. Although our design incorporates domain knowledge, many SNN-based components are still difficult to analyze and certify compared with classical signal processing blocks. This gap calls for stronger theoretical foundations for SNNs or tighter integration between SNNs and model-based signal processing algorithms. 
	\item \textbf{Hardware implementations}. Current deep learning hardware (e.g., CPUs, GPUs) is mainly based on the von Neumann architecture and is not well matched to event-driven computation. Existing neuromorphic chips (e.g., Loihi \cite{loihi2}, TrueNorth \cite{truenorth}) demonstrate the benefits of SNNs but may not satisfy the strict latency, reliability, and throughput requirements of communication systems. Fully exploiting the energy advantages of SNNs may require communication-aware neuromorphic accelerators and system-level co-design tailored for wireless systems.
	\item \textbf{Training efficiency}. Although SNNs exhibit high energy efficiency during inference, standard SNN training methods that rely on backpropagation through time still incur high memory and computational costs. Thus, more efficient training pipelines that reduce the frequency of weight updates, enable sparse parameter learning, or leverage low-cost biologically inspired local learning rules \cite{STDP} are needed to reduce training overhead. 
	\item \textbf{Standardization efforts}. The reliable use of SNNs in intelligent communication systems depends on unified and well-defined standards. At present, SNN research lacks consistency across neuron models, network architectures, spike-based modulation methods, training protocols, hardware interfaces, and software stacks, leading to fragmented research efforts and duplication of work. A systematic standardization framework is necessary to ensure consistent behavior across platforms and vendors, enable large-scale interoperability, and support the integration of SNN-based modules into future communication infrastructures.
\end{enumerate}

\section*{Materials and Methods}
\subsection*{Spiking Neuron Model}
In this work, we adopt the widely used leaky integrate-and-fire (LIF) neuron model. The dynamics of a LIF neuron can be formulated as
\begin{align}
	\bar{V}[t] &= V[t-1] - \frac{1}{\tau}(V[t-1] - V_{\text{reset}}) + X[t], \\
	S[t] &= \Theta(\bar{V}[t] - V_{\text{th}}), \\
	V[t] &= (1 - S[t])\bar{V}[t] + S[t] V_{\text{reset}},  \label{V_reset}
\end{align}
where $\bar{V}[t]$ represents the pre-reset membrane potential of the spiking neuron at time step $t$, and $\tau$ is the membrane time constant. $V_{\text{reset}}$ denotes the reset membrane potential, and $X[t]$ is the input current. When the membrane potential reaches threshold $V_{\text{th}}$, a spike is generated, denoted by $S[t] \in \{0,1\}$. The function $\Theta(\cdot)$ is the Heaviside step function. Equation \eqref{V_reset} indicates that membrane potential $V[t]$ is reset to $V_{\text{reset}}$ when a spike is triggered.

\subsection*{General Training Setup}
We train the SNNs using the SpikingJelly framework \cite{spikingjelly}. All SNNs are trained using BPTT with the arctangent surrogate gradient \cite{snnsurrogate}. Unless otherwise noted, we use a batch size of 64 and a learning rate of $1 \times 10^{-3}$. We employ the Adam optimizer with a cosine annealing learning rate schedule. By default, $15\%$ of the training samples are reserved for validation. All experiments, including the benchmark evaluations, are repeated 10 times with different random seeds, and the error bars denote the standard deviation across runs.

\subsection*{Hypernet-based Context Modulator}
We employ a hypernet to exploit the relationships between different wireless environments and extract distribution-related information to guide the backbone network. This enables the reuse of previously learned knowledge to accelerate adaptation in similar environments. Meanwhile, for dissimilar channel distributions, the synaptic weights of the backbone network are separated to avoid catastrophic forgetting. This method is primarily inspired by the hybrid modulation network (HMN) \cite{snn_hybrid}, but we introduce several key improvements. First, whereas HMN modulates neuronal firing thresholds via a soft-gating mechanism, we empirically observe that using a binary gate to control the on/off states of spiking neurons yields superior performance \cite{piggyback}. Second, HMN is developed exclusively for the permuted MNIST dataset, which restricts its applicability to other continual learning scenarios. In this work, we adopt a general distance measure to quantify the discrepancy between varying distributions, making our approach not only more effective but also applicable to a broad range of tasks. Moreover, we provide a theoretical analysis of the domain generalization bound under the considered distance metric, offering insights into why context information is necessary to alleviate catastrophic forgetting (see Supplementary Text).

\subsubsection*{Metric for Channel Distribution Distance}
To enable the hypernet to distinguish channel conditions and generate appropriate context-aware gating signals, quantifying the distance between different environments is crucial. There are several widely used metrics for measuring the distribution distance, including f-divergence, total variation distance, and earth mover's distance (EMD). In this paper, we adopt EMD for its favorable continuity property, which helps stabilize neural network training \cite{WGAN}. Specifically, EMD is defined as the minimal cost for transforming one distribution into another:
\begin{equation}
	\text{EMD}(Q, P) = \inf_{\gamma \in \Gamma(Q, P)}\int_{\mathcal{X} \times \mathcal{X}}d(\boldsymbol{x},\boldsymbol{y})\,d\gamma(\boldsymbol{x},\boldsymbol{y}),
\end{equation}
where $\mathcal{X}$ is the sample space, $\Gamma(Q, P)$ denotes the set of all joint distributions whose marginals are $Q$ and $P$, and $d(\boldsymbol{x},\boldsymbol{y})$ represents a distance measure such as the Euclidean norm. 

However, directly computing the EMD involves solving a computationally demanding optimization problem. To address this issue, we propose the Fréchet channel distance (FCD) inspired by the Fréchet Inception distance (FID) \cite{FID}. Specifically, for two Gaussian distributions $\mathcal{N}(\boldsymbol{\mu}_1, \mathbf{\Sigma}_1)$ and $\mathcal{N}(\boldsymbol{\mu}_2, \mathbf{\Sigma}_2)$, their EMD admits a closed-form expression:
\begin{equation}
	\text{EMD}\bigl(\mathcal{N}(\boldsymbol{\mu}_1, \mathbf{\Sigma}_1), \mathcal{N}(\boldsymbol{\mu}_2, \mathbf{\Sigma}_2)\bigr)^2 = \|\boldsymbol{\mu}_1 - \boldsymbol{\mu}_2\|_2^2 + \mathrm{Tr} \bigl(\mathbf{\Sigma}_1 + \mathbf{\Sigma}_2 - 2(\mathbf{\Sigma}_1^{\frac{1}{2}}\mathbf{\Sigma}_2\mathbf{\Sigma}_1^{\frac{1}{2}})^{\frac{1}{2}}\bigr). \label{EMD_gaussian}
\end{equation}
Given two channel distributions $\mathcal{H}_1$ and $\mathcal{H}_2$, we map the samples to a new space through a function $f$ and fit two Gaussian distributions for $f(\mathcal{H}_1)$ and $f(\mathcal{H}_2)$, respectively. Then, the FCD is obtained by substituting the fitted parameters into Eq.~\eqref{EMD_gaussian}. In short, the FCD is formally defined as
\begin{equation}
	\text{FCD}(\mathcal{H}_1, \mathcal{H}_2)^2 \triangleq \|\boldsymbol{\mu}_1 - \boldsymbol{\mu}_2\|_2^2 + \mathrm{Tr}\bigl(\mathbf{\Sigma}_1 + \mathbf{\Sigma}_2 - 2(\mathbf{\Sigma}_1^{\frac{1}{2}}\mathbf{\Sigma}_2\mathbf{\Sigma}_1^{\frac{1}{2}})^{\frac{1}{2}}\bigr), \label{FCD}
\end{equation}
where $\boldsymbol{\mu}_1 = \mathbb{E}_{\boldsymbol{h} \sim \mathcal{H}_1}[f(\boldsymbol{h})]$, $\boldsymbol{\mu}_2 = \mathbb{E}_{\boldsymbol{h} \sim \mathcal{H}_2}[f(\boldsymbol{h})]$, $\mathbf{\Sigma}_1 = \text{Cov}_{\boldsymbol{h} \sim \mathcal{H}_1}[f(\boldsymbol{h})]$, and $\mathbf{\Sigma}_2 = \text{Cov}_{\boldsymbol{h} \sim \mathcal{H}_2}[f(\boldsymbol{h})]$. Notably, we employ the encoder from CsiNet \cite{dlcsifeedback} as the function $f$ because CsiNet is specifically designed for CSI compression, making it more suitable for wireless communication tasks compared to the Inception network \cite{inception} that is trained on natural images.

Moreover, based on physical channel model assumptions, we derive a concise expression for the EMD (see Supplementary Text). This inherent domain knowledge provides a solid foundation for the hypernet to generalize across diverse channel distributions (Fig.~S4), requiring only prior statistics instead of raw channel samples. In the purely data-driven regime, where samples are required for environment identification, the consistency of the underlying physics still helps lower data requirements despite the absence of a closed-form expression (see Figs.~S5 and S6).

\subsubsection*{Hypernet Training Loss}
We train the hypernet so that its generated binary gate can represent the relationship between different environments. To this end, we train the hypernet using a cosine distance loss. Specifically, given input samples $\boldsymbol{h}_1 \sim \mathcal{H}_1$ and $\boldsymbol{h}_2 \sim \mathcal{H}_2$, and denoting the generated gates as $\boldsymbol{g}_1$ and $\boldsymbol{g}_2$, the loss function is defined as follows:
\begin{equation}
	\mathcal{L}_h(\boldsymbol{h}_1, \boldsymbol{h}_2) \triangleq \left(\frac{\boldsymbol{g}_1^\top \boldsymbol{g}_2}{\|\boldsymbol{g}_1\| \|\boldsymbol{g}_2\|} - \exp\bigl(-\beta \,\text{FCD}(\mathcal{H}_1, \mathcal{H}_2)\bigr)\right)^2 + \lambda_h\sum_{i=1,2}\bigl(\|\boldsymbol{g}_i\|_1 - \rho\bigr)^2,  
\end{equation}
where the exponential function maps the FCD to a similarity score in $[0,1]$, the last term controls the sparsity of the output gate, and $\beta$, $\lambda_h$, and $\rho$ are hyperparameters that control the sensitivity of the distance mapping, weight the sparsity regularization term, and define the target sparsity, respectively.

\subsection*{Spiking Rate Consolidation}
SRC is a regularization-based method tailored for SNNs that mitigates catastrophic forgetting with relatively low complexity. During backbone SNN training, we define the loss function as
\begin{equation}
	\mathcal{L}'(\boldsymbol{\theta}) = \mathcal{L}(\boldsymbol{\theta}) + \lambda \sum_{m} \sum_{k} \omega_{m,k} \big(\theta_k - \bar{\theta}_{m,k}\big)^2,
\end{equation}
where $\mathcal{L}'(\boldsymbol{\theta})$ and $\mathcal{L}(\boldsymbol{\theta})$ denote the total loss and the standard loss for the signal processing task, respectively. The vector $\boldsymbol{\theta}$ represents the current model parameters and $\theta_{k}$  denotes its $k$-th element. The scalar $\bar{\theta}_{m,k}$  denotes the $k$-th element of $\bar{\boldsymbol{\theta}}_{m}$, the parameter vector learned in a previously encountered environment $m$. The quantity $\omega_{m,k}$ denotes the importance of $\bar{\theta}_{m,k}$ and $\lambda$ denotes the regularization strength.

The core challenge for regularization-based methods lies in how to measure the importance $\omega_{m,k}$ of previously learned weights. Classic methods, such as EWC, rely on the FIM or its variants. Although effective, they incur high computational costs because they require backpropagation to compute gradients. Moreover, when using the common diagonal approximation, storing per-parameter importance values incurs a memory cost of $O(|\boldsymbol{\theta}|)$. To address this issue, we exploit the spike-based computing properties of SNNs and propose representing importance using the spiking rates of neurons. 

For clarity, we describe SRC for a single full connected (FC) layer in a given environment $m$. Let $N_{\text{pre}}$ and $N_{\text{post}}$ denote the numbers of pre-synaptic and post-synaptic neurons, respectively. Define $\boldsymbol{\mu}_m \in \mathbb{R}^{N_{\text{pre}}}$ and $\boldsymbol{\nu}_m \in \mathbb{R}^{N_{\text{post}}}$ as the average firing-rate vectors of the pre-synaptic and post-synaptic layers, with entries
\begin{equation}
	\mu_{m,i} \triangleq \frac{1}{T}\sum_{t=1}^{T} S^{\text{pre}}_{i}[t], \qquad 
	\nu_{m,j} \triangleq \frac{1}{T}\sum_{t=1}^{T} S^{\text{post}}_{j}[t],
\end{equation}
where $S^{\text{pre}}_{i}[t]\in\{0,1\}$ and $S^{\text{post}}_{j}[t]\in\{0,1\}$ denote the spikes of the $i$-th pre-synaptic and $j$-th post-synaptic neurons at time step $t$, respectively, and $T$ is the number of simulation time steps. The importance matrix is then defined as
\begin{equation}
	\boldsymbol{\Omega}_{m} \triangleq \boldsymbol{\nu}_{m}\boldsymbol{\mu}_{m}^\top \in \mathbb{R}^{N_{\text{post}} \times N_{\text{pre}}},
\end{equation}
and $\boldsymbol{\omega}_{m} \triangleq \mathrm{vec}(\boldsymbol{\Omega}_{m})$ denotes the vectorized form of $\boldsymbol{\Omega}_{m}$, with $\omega_{m,k}$ denoting the $k$-th element of $\boldsymbol{\omega}_{m}$. This design aligns with the principles of Hebbian learning, which strengthens synapses based on correlated pre- and post-synaptic activity. Here, spiking rates serve as a time-averaged measure of such activity. The computational complexity of calculating $\boldsymbol{\mu}_{m}$ and $\boldsymbol{\nu}_{m}$ is low since it only involves the accumulation of spikes. Moreover, SRC stores only the two firing-rate vectors and forms $\boldsymbol{\Omega}_{m}$ implicitly when needed, yielding a memory cost of $O(N_{\text{pre}} + N_{\text{post}})$, which is significantly lower than the $O(N_{\text{pre}} \times N_{\text{post}})$ cost of EWC or SI.

In our experiments, we determine the SRC regularization strength $\lambda$ via grid search. We also fine-tune the regularization strengths for EWC and SI. Table~\ref{tab:regularizer} summarizes the final values used for each evaluated task. The above introduction to SpikACom is presented using FC layers for clarity. Details on adapting the proposed SpikACom framework to convolutional architectures are provided in the Supplementary Text.

\begin{table}[h]
	\centering
	\caption{\textbf{Regularization strength $\lambda$ for different signal processing tasks}}
	\label{tab:regularizer}
	\begin{tabular}{lcccc}
		\hline
		Task & SRC & EWC & SI & SRC+Gate \\
		\hline
		Task-oriented communication & 1.5 & 0.5 & 0.5 & 0.2 \\

		MIMO beamforming            & 1.0 & 20 & 600 & 0.08 \\

		OFDM channel estimation      & 0.08 & 0.1 & 0.05 & 0.1 \\
		\hline
	\end{tabular}
\end{table}

\subsection*{Neuromorphic Semantic Communication}
\subsubsection*{System Model and Network Architecture}
We consider a multi-sensor semantic communication system with $I=2$ distributed sensors and evaluate our algorithms on the DVS128 Gesture dataset~\cite{dvsg_dataset}. Each sample is an event stream recorded by a dynamic vision sensor camera.
The gesture samples are uniformly divided into $I$ segments, each fed into the semantic encoder of the $i$-th sensor. 
Each semantic encoder consists of 5 stacked SNN blocks~\cite{snn_fang}. Each block includes a 2D convolutional layer (64 channels, $3\times3$ kernel, stride 1), batch normalization, a LIF neuron layer, and $2\times2$ max pooling. Subsequently, a channel encoder with two fully connected layers (2,400 and 64 spiking neurons, respectively) processes the features. The resulting spiking representations are transmitted over the physical channel using on-off keying (OOK) modulation with a block length of 32, to reflect the binary nature of spikes. At the receiver, a spiking decoder composed of a 1D convolutional layer (16 channels, kernel size 13), two fully connected layers (3{,}200 and 110 neurons), and a 10-dimensional voting layer recovers the final classification output~\cite{snn_fang}.

The system does not rely on explicit pilot symbols for equalization; instead, it is trained end-to-end based solely on channel statistics~\cite{semantic_yehao}. 
Moreover, the semantic encoders are trained under noise-free channels and remain fixed during subsequent tests since we focus on time-varying channels. 
Only the channel encoder and channel decoder are updated when the environments change.

\subsubsection*{Channel Model and Simulation Settings}
We augment the training data using random spatial translations, and the number of simulation time steps is set to $T = 16$. We adopt a multipath channel model with $L=8$ taps~\cite{semantic_yehao}. Let $x[n]$ denote the transmitted signal at discrete time index $n$. The received signal $y[n]$ is given by
\begin{equation}
	y[n] = \sum_{\ell=1}^{L} h_\ell \, x[n-\ell] + w[n],
\end{equation}
where $\boldsymbol{h} = [h_1, h_2, \ldots, h_{L}]^\top$ denotes the channel impulse response vector and $w[n] \sim \mathcal{CN}(0, \sigma^2)$ is additive white Gaussian noise (AWGN). Assuming that the transmitted signal is wide-sense stationary and independent of the channel, the signal-to-noise ratio (SNR) is defined as
\begin{equation}
	\begin{aligned}
		\mathrm{SNR} &\triangleq \frac{\mathbb{E}[|\sum_{\ell=1}^{L} h_\ell \, x[n-\ell]|^2]}{\mathbb{E}[|w[n]|^2]} \\
		&= \frac{\mathbb{E}[|x[n]|^2]}{\sigma^2} \cdot \sum_{\ell=1}^{L} \mathbb{E}[|h_\ell|^2].
	\end{aligned}
\end{equation}
We fix $\mathbb{E}[|x[n]|^2]/\sigma^2$ at $0$~dB and vary the channel gain $\sum_{\ell=1}^{L} \mathbb{E}[|h_\ell|^2]$ from $8$~dB to $-20$~dB in uniform steps of $4$~dB.

\subsection*{MIMO Beamforming}
We consider a downlink MIMO system in which a single base station (BS) serves $K$ users simultaneously. We focus on the following sum-rate maximization problem since the primary role of beamforming is to boost the system throughput:
\begin{subequations}\label{sum_rate_problem}
	\begin{align}
		\max_{\{\mathbf{V}_k\}} & \sum_{k=1}^{K} \alpha_k R_k, \\
		\text{s.t.} & \sum_{k=1}^{K} \text{Tr}(\mathbf{V}_k\mathbf{V}_k^H) \leq P_T, 
	\end{align} 
\end{subequations}
where $\mathbf{V}_k \in \mathbb{C}^{N_t\times d}$ is the beamforming matrix for user $k$, with $N_t$ and $d$ denoting the number of transmit antennas and data streams, respectively. For simplicity, we assume that all users require the same number of data streams $d$ and have the same number of receive antennas $N_r$. The parameters $\alpha_k$ and $P_T$ denote the weight of user $k$ and the total transmit power, respectively. The achievable rate of user $k$ is given by
\begin{equation}
	R_k = \log\det \left(\mathbf{I} + \mathbf{H}_k\mathbf{V}_k\mathbf{V}_k^H\mathbf{H}_k^H\left(\sum_{j \neq k}\mathbf{H}_k\mathbf{V}_j\mathbf{V}_j^H\mathbf{H}_k^H + \sigma_k^2\mathbf{I}\right)^{-1} \right), \label{rate}
\end{equation}
where $\mathbf{H}_k \in \mathbb{C}^{N_r\times N_t}$ represents the channel between the BS and user $k$, and $\sigma_k^2$ denotes the noise power at user $k$.

\subsubsection*{Weighted Minimum Mean-Squared Error (WMMSE) Algorithm} \label{sec:wmmse}
Problem \eqref{sum_rate_problem} is challenging to solve due to the highly coupled non-convex rate expression \eqref{rate}. To tackle this issue, the WMMSE~\cite{WMMSE} algorithm equivalently reformulates problem \eqref{sum_rate_problem} into the following more tractable form:
\begin{subequations}
	\begin{align}
		\min_{\{\mathbf{W}_k,\mathbf{U}_k,\mathbf{V}_k\}} & \sum_{k=1}^{K}\alpha_k\bigl(\text{Tr}(\mathbf{W}_k\mathbf{E}_k) - \log \det(\mathbf{W}_k)\bigr), \\
		\text{s.t.} & \sum_{k=1}^{K}\text{Tr}(\mathbf{V}_k\mathbf{V}_k^H) \leq P_T,		
	\end{align}\label{problem_wmmse}
\end{subequations}
where $\mathbf{W}_k \in \mathbb{C}^{d\times d}$ is an auxiliary weight matrix for user $k$ and 
\begin{equation}
	\mathbf{E}_k \triangleq (\mathbf{I}-\mathbf{U}_k^H\mathbf{H}_k\mathbf{V}_k)(\mathbf{I}-\mathbf{U}_k^H\mathbf{H}_k\mathbf{V}_k)^H + \sum_{j \neq k}\mathbf{U}_k^{H}\mathbf{H}_k\mathbf{V}_j\mathbf{V}_j^H\mathbf{H}_k^H\mathbf{U}_k + \sigma_k^2\mathbf{U}_k^H\mathbf{U}_k,
\end{equation}
with $\mathbf{U}_k \in \mathbb{C}^{N_r \times d}$ being a virtual receive beamforming matrix for user $k$. 

The WMMSE algorithm uses block coordinate descent to solve problem \eqref{problem_wmmse}. Specifically, the variables $\mathbf{W}_k$, $\mathbf{U}_k$, and $\mathbf{V}_k$ are optimized in turn, with their optimal solutions given by
\begin{align}
	\mathbf{U}_k &= \mathbf{A}_k^{-1}\mathbf{H}_k\mathbf{V}_k, \label{U_update} \\ 
	\mathbf{W}_k &= \mathbf{E}_k^{-1},  \label{W_update}\\ 
	\mathbf{V}_k &= \alpha_k\mathbf{B}^{-1}\mathbf{H}_k^H\mathbf{U}_k\mathbf{W}_k, \label{V_update} 
\end{align}
respectively, where
\begin{equation}
	\mathbf{A}_k = \frac{\sigma_k^2}{P_T}\sum_{j=1}^{K}\text{Tr}(\mathbf{V}_j\mathbf{V}_j^H)\mathbf{I} + \sum_{j=1}^{K} \mathbf{H}_k\mathbf{V}_j\mathbf{V}_j^H\mathbf{H}_k^H,
\end{equation}
\begin{equation}
	\mathbf{E}_k = \mathbf{I} - \mathbf{U}_k^H\mathbf{H}_k\mathbf{V}_k,
\end{equation}
and
\begin{equation}
	\mathbf{B} = \sum_{j=1}^{K} \frac{\sigma_j^2}{P_T}\text{Tr}(\alpha_j\mathbf{U}_j\mathbf{W}_j\mathbf{U}_j^H)\mathbf{I} + \sum_{j=1}^{K}\alpha_j\mathbf{H}_j^H\mathbf{U}_j\mathbf{W}_j\mathbf{U}_j^H\mathbf{H}_j.
\end{equation}

\subsubsection*{Domain Knowledge Embedded Beamforming Layer}
The goal of beamforming is to design the beamforming matrices $\mathbf{V}_k$. However, directly setting the network output to $\mathbf{V}_k$ results in poor performance because such a purely black-box structure cannot effectively handle the sophisticated beamforming optimization. To address this, we draw inspiration from the WMMSE algorithm and incorporate domain knowledge into the design of SNN-based beamforming. By reviewing the WMMSE formulation, we observe that a core aspect of the algorithm is the introduction of auxiliary variables $\mathbf{W}_k$ and $\mathbf{U}_k$. Therefore, we let the network output be $\mathbf{W}_k$ and $\mathbf{U}_k$, and compute $\mathbf{V}_k$ by substituting them into Eq.~\eqref{V_update}. This approach offers two main benefits. First, the dimensions of $\mathbf{W}_k$ and $\mathbf{U}_k$ are significantly smaller than those of $\mathbf{V}_k$, which reduces the network size and the search space. Second, Eqs.~\eqref{U_update}--\eqref{V_update} represent necessary conditions for the optimal solution to problem \eqref{problem_wmmse}~\cite{WMMSE}, meaning that any optimal $\mathbf{V}_k$ must also satisfy these equations. Therefore, incorporating Eq.~\eqref{V_update} into the network specifies the solution structure and embeds the necessary domain knowledge for beamforming~\cite{Bf_unfolding1,Bf_unfolding2,Bf_unfolding3,Bf_unfolding4}. Denoting the network outputs as $\bar{\mathbf{W}}_k$ and $\bar{\mathbf{U}}_k$, the model-driven layer is formally expressed as
\begin{equation}
	\mathbf{V}_k = \alpha_k\mathbf{B}^{-1}\mathbf{H}_k^H\bar{\mathbf{U}}_k\bar{\mathbf{W}}_k,
\end{equation}
where
\begin{equation}
	\mathbf{B} = \sum_{j=1}^{K} \frac{\sigma_j^2}{P_T}\text{Tr}(\alpha_j\bar{\mathbf{U}}_j\bar{\mathbf{W}}_j\bar{\mathbf{U}}_j^H)\mathbf{I} + \sum_{j=1}^{K}\alpha_j\mathbf{H}_j^H\bar{\mathbf{U}}_j\bar{\mathbf{W}}_j\bar{\mathbf{U}}_j^H\mathbf{H}_j.
\end{equation}

\subsubsection*{Simulation Settings and Network Architecture}
We assume $K=6$, $N_t = 64$, and $N_r = d = 4$. Moreover, we adopt the DeepMIMO channel, where the system configuration is provided in Table~S1. To generate different wireless environments, we divide the selected region into $4\, \text{m}\times 4\, \text{m}$ uniform blocks. An environment is specified by a $K$-dimension index vector, with the $k$th element representing the block in which user $k$ is located. 

The backbone network, referred to as SpikingBFNet, consists of a 2D convolutional layer (36 channels, $3\times3$ kernel, stride 1), a hidden FC layer ($3,000$ spiking neurons), and an output FC layer. The output is averaged over $T=4$ simulation time steps and reshaped into $\bar{\mathbf{W}}_k$ and $\bar{\mathbf{U}}_k$, which are then fed into the model-driven layer. 

\subsection*{Channel Estimation}
We consider a single-input single-output (SISO) communication link. In OFDM systems, the input-output relationship at the $i$-th subcarrier and the $k$-th OFDM symbol is described as:
\begin{equation}
	Y[i,k] = H[i,k] X[i,k] + Z[i,k],
\end{equation}
where $H[i,k]$ denotes the $(i,k)$-th element of the overall frequency-time channel response matrix $\mathbf{H} \in \mathbb{C}^{N_c \times N_s}$, with $N_c$ and $N_s$ denoting the number of subcarriers and OFDM symbols, respectively. $X[i,k]$, $Y[i,k]$, and $Z[i,k]$ denote the transmitted signal, the received signal, and the AWGN, respectively.

\subsubsection*{Least Squares (LS) Estimator}
The goal of channel estimation is to recover $\mathbf{H}$ based on transmitted pilots $X[p]$ and the received pilots $Y[p]$. Due to limited spectral resources, pilots are sparsely inserted into the frequency-time grid. To formalize this, we define the set of all pilot indices as $\mathcal{P} \subset \mathcal{M}$, where $\mathcal{M} \triangleq \{(i,k) \mid 1 \leq i \leq N_c, 1 \leq k \leq N_s\}$ represents the indices of the entire frequency-time grid. The LS estimator at the pilot positions is given by:
\begin{equation}
	\hat{H}[p] = \frac{Y[p]}{X[p]}, \quad \forall p \in \mathcal{P}.
\end{equation}
The LS estimator minimizes the squared error between the received and transmitted pilot symbols. The channel coefficients at the remaining data positions, $\{\hat{H}[q] \mid q \in \mathcal{M} \setminus \mathcal{P} \}$, are typically obtained via interpolation. In this work, we employ radial basis function (RBF) interpolation~\cite{rbf} as a baseline as it generally provides superior accuracy compared to linear or cubic interpolation schemes.

\subsubsection*{Linear Minimum Mean-Squared Error (LMMSE) Estimator}
The LMMSE estimator offers improved performance over the LS estimator by exploiting the second-order channel statistics and noise power information. Specifically, the LMMSE estimator applies a linear filter $\mathbf{A}_{\text{LMMSE}}$ to the LS estimates at pilot positions, denoted as $\hat{\boldsymbol{h}}_p \in \mathbb{C}^{|\mathcal{P}|}$:
\begin{equation}
	\hat{\boldsymbol{h}}_{\text{LMMSE}} = \mathbf{A}_{\text{LMMSE}} \hat{\boldsymbol{h}}_p.
\end{equation}
The optimal LMMSE estimator minimizes the MSE between the vectorized true channel $\boldsymbol{h}$ and the estimated channel:
\begin{equation}
	\min_{\mathbf{A}_{\text{LMMSE}}} \; \text{MSE} \triangleq \mathbb{E}[\|\boldsymbol{h} - \mathbf{A}_{\text{LMMSE}}\hat{\boldsymbol{h}}_p\|_2^2].
\end{equation}
The optimal matrix $\mathbf{A}_{\text{LMMSE}}$ is derived as:
\begin{equation}
	\mathbf{A}_{\text{LMMSE}} = \mathbf{R}_{\boldsymbol{h}\boldsymbol{h}_p}(\mathbf{R}_{\boldsymbol{h}_p\boldsymbol{h}_p} + \sigma^2\mathbf{I})^{-1},
\end{equation}
where $\mathbf{R}_{\boldsymbol{h}\boldsymbol{h}_p} \triangleq \mathbb{E}[\boldsymbol{h}\boldsymbol{h}_p^H]$ denotes the cross-correlation matrix between the channel response of the entire grid and the channel response at pilot positions, with $\boldsymbol{h}_p$ representing the vector of true channel coefficients at pilot positions. Similarly, $\mathbf{R}_{\boldsymbol{h}_p\boldsymbol{h}_p} \triangleq \mathbb{E}[\boldsymbol{h}_p\boldsymbol{h}_p^H]$ denotes the autocorrelation matrix of $\boldsymbol{h}_p$, and $\sigma^2$ represents the noise variance. In our experiments, these statistics are estimated empirically from the training dataset.

\subsubsection*{Simulation Settings and Network Architecture}
We define an OFDM frequency-time grid with $N_c = 72$ subcarriers and $N_s = 14$ OFDM symbols. The number of pilots is set to $N_p = 48$, and pilots are inserted following the pattern described in~\cite{OFDM_CE1}. We adopt the WINNER II channel model, with detailed configurations provided in Table~S2. To ensure diversity, we simulate eight distinct wireless propagation scenarios: A1 (Indoor small office)-LOS, A2 (Indoor to outdoor)-NLOS, B2 (Bad urban micro-cell)-NLOS, C1 (Urban macro-cell)-NLOS, C2 (Urban macro-cell)-NLOS, C3 (Bad urban macro-cell)-NLOS, C4 (Outdoor to indoor macro-cell)-NLOS, and D1 (Rural macro-cell)-LOS. For each scenario, we generate $9,000$ samples for training and $1,000$ samples for testing.

For the backbone network, we adopt an architecture inspired by ReEsNet~\cite{OFDM_CE2} with two key improvements. First, the transposed convolution layer used in ReEsNet for upsampling is replaced with an FC layer to enhance the representation capacity. Second, we introduce a residual connection $\boldsymbol{y}_{\text{res}} = \mathbf{A}_{\text{res}}\hat{\boldsymbol{h}}_p$ that directly links the input LS estimates to the final output. This residual path compensates for potential information loss during the spike conversion process, acting as a learned approximation of the LMMSE estimator. Moreover, to mitigate the computational overhead associated with the matrix multiplication of $\mathbf{A}_{\text{res}}$, we employ a low-rank factorization:
\begin{equation}
	\mathbf{A}_{\text{res}} = \mathbf{A}_L \mathbf{A}_R,
\end{equation}
where $\mathbf{A}_L \in \mathbb{C}^{N_{\text{tot}} \times r}$ and $\mathbf{A}_R \in \mathbb{C}^{r \times N_p}$, with $N_{\text{tot}} = N_c N_s$ denoting the total grid size and $r < N_{\text{tot}}$ being the rank. In our experiments, we set the reduced rank to $r = \frac{1}{2} N_{\text{tot}}$.

\subsection*{Energy Consumption Modeling}
To ensure a unified evaluation across different algorithms, we estimate energy consumption by counting operations. Each
operation is mapped to physical energy consumption based on a 45~nm CMOS
process~\cite{snnleadfed}, where a MAC
operation consumes $E_{\text{mac}}=3.2~\mathrm{pJ}$ and an AC operation
consumes $E_{\text{ac}}=0.1~\mathrm{pJ}$~\cite{energy_model}. The total energy is given by
\begin{equation}
	E_{\text{total}} = E_{\text{mac}} N_{\text{mac}} + E_{\text{ac}} N_{\text{ac}},
\end{equation}
where $N_{\text{mac}}$ and $N_{\text{ac}}$ denote the number of MAC and AC operations, respectively. A comprehensive derivation of the operation counts for each algorithm is provided in the Supplementary Text.

\newpage
\section*{Supplementary Materials}
\subsection*{Generalization Bounds under EMD}\label{sec1}
Continual learning involves sequentially updating a model under a stream of environments with limited access to past data, which can cause catastrophic forgetting when the current update degrades performance on previously encountered environments. In this setting, a central question is: what are the theoretical limits of retaining past knowledge while adapting to new data? To address this, we analyze domain generalization bounds, which quantify the performance gap when a model trained on one distribution is evaluated on another. These bounds formalize how distribution shift, measured by the earth mover's distance (EMD), governs the inherent trade-off between adaptation (learning new environments) and retention (preserving past knowledge).

Consider two environments characterized by distributions $Q$ (source) and $P$ (target). A general form of the domain generalization bound can be expressed as follows:
\begin{equation}
	R_P(h,f) \leq R_Q(h,f) + \Delta(Q,P),
\end{equation}
where $R_P(h,f) \triangleq \mathbb{E}_{x \sim P}[L(h(x), f(x))]$ is the risk on target domain $P$, $h$ is the hypothesis, $f$ is the true labelling function, and $L$ is the loss function. Likewise, $R_Q(h,f)$ denotes the risk on source domain $Q$. The term $\Delta(Q,P)$ represents a divergence measure between $Q$ and $P$. In this work, we employ EMD as the distance measure. For distributions $Q$ and $P$, the EMD is defined by \footnote{We define EMD here as the 1-Wasserstein distance ($W_1$) because the Kantorovich–Rubinstein duality applies directly to $W_1$, which is slightly different from the main text, where the closed-form FCD for two Gaussians is derived under the 2-Wasserstein distance ($W_2$). However, for distributions with finite second moments, the relationship $W_1(Q,P) \leq W_2(Q,P)$ holds; therefore, the results derived here for $W_1$ can be easily extended to the $W_2$ case.
}
\begin{equation}
	\text{EMD}(Q, P) = \inf_{\gamma \in \Gamma(Q, P)}\int_{\mathcal{X} \times \mathcal{X}}||x-y||d\gamma(x,y),
\end{equation}
where $\mathcal{X}$ denotes the sample space and $\Gamma(Q, P)$ denotes the set of all joint distributions whose marginals are $Q$ and $P$. 
Under EMD, the domain generalization bound can be written explicitly as follows:
\begin{lemma}
	Assume the loss function $L$ is $M$-Lipschitz continuous and that $h$ and $f$ are $B_h$- and $B_f$-Lipschitz continuous, respectively. Then, the following bound holds:
	\begin{equation}
		R_P(h,f) - R_Q(h,f) \leq G\cdot{\rm EMD}(Q,P),
	\end{equation}  
	where $G = 2M\cdot\max(B_h,B_f)$.
\end{lemma}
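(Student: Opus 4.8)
The plan is to reduce the statement to the Lipschitz continuity of the composite loss $x \mapsto L(h(x), f(x))$ and then invoke the coupling characterization of EMD. Define $g(x) \triangleq L(h(x), f(x))$, so that the quantity to bound is simply $R_P(h,f) - R_Q(h,f) = \mathbb{E}_{x\sim P}[g(x)] - \mathbb{E}_{x\sim Q}[g(x)]$.

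First I would establish that $g$ is $G$-Lipschitz. Chaining the triangle inequality through the intermediate point $L(h(x), f(y))$ and using that $L$ is $M$-Lipschitz in each of its arguments gives $|g(x) - g(y)| \le M\|h(x)-h(y)\| + M\|f(x)-f(y)\|$; applying the $B_h$- and $B_f$-Lipschitz bounds on $h$ and $f$ then yields $|g(x)-g(y)| \le M(B_h+B_f)\|x-y\| \le 2M\max(B_h,B_f)\|x-y\| = G\|x-y\|$.

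Next I would pass to EMD via couplings. For any $\gamma \in \Gamma(Q,P)$, since the two marginals of $\gamma$ are $Q$ and $P$, we have $\mathbb{E}_{x\sim P}[g(x)] - \mathbb{E}_{x\sim Q}[g(x)] = \mathbb{E}_{(x,y)\sim\gamma}[g(y) - g(x)] \le \mathbb{E}_{(x,y)\sim\gamma}[|g(y)-g(x)|] \le G\,\mathbb{E}_{(x,y)\sim\gamma}[\|x-y\|]$. Taking the infimum over all couplings $\gamma$ and invoking the definition of EMD gives $R_P(h,f) - R_Q(h,f) \le G\cdot{\rm EMD}(Q,P)$, which is the claim. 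An equivalent route replaces this last step with Kantorovich--Rubinstein duality applied to the $1$-Lipschitz function $g/G$.

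Apart from routine estimates, the only point requiring care is the precise meaning of ``$L$ is $M$-Lipschitz'': since $L$ is a bivariate function evaluated at the pair $(h(x),f(x))$, one must fix a convention for the metric on that pair, and the factor $2$ in $G = 2M\max(B_h,B_f)$ is exactly what the additive chaining through $h$ and $f$ produces (equivalently, it is the constant obtained when $L$ is Lipschitz with respect to the sum of the two coordinate distances and one bounds $B_h + B_f \le 2\max(B_h,B_f)$). A minor technicality --- integrability of $g$ under both $P$ and $Q$, and existence of a (near-)optimal coupling --- is covered by standard optimal-transport facts for distributions with finite first moments, which we assume throughout.
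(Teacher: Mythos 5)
Your proposal is correct and follows essentially the same route as the paper: define $g(x)\triangleq L(h(x),f(x))$, show it is $G$-Lipschitz by chaining the Lipschitz bounds on $L$, $h$, and $f$, and then bound $\mathbb{E}_{P}[g]-\mathbb{E}_{Q}[g]$ by $G\cdot\mathrm{EMD}(Q,P)$. The only cosmetic difference is that you pass to EMD via the coupling definition directly (the elementary direction of Kantorovich--Rubinstein), whereas the paper invokes the duality outright; as you note yourself, these are equivalent here.
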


\begin{proof}
	Define \(g(x) \triangleq L\bigl(h(x), f(x)\bigr)\). Then we have
	\begin{subequations}
		\begin{align}
			\bigl|g(x_1) - g(x_2)\bigr|
			&= \bigl|L\bigl(h(x_1), f(x_1)\bigr) - L\bigl(h(x_2), f(x_2)\bigr)\bigr| \\
			&\leq M \sqrt{\|h(x_1) - h(x_2)\|^2 + \|f(x_1) - f(x_2)\|^2} \\
			&\leq M \Bigl(\|h(x_1) - h(x_2)\| + \|f(x_1) - f(x_2)\|\Bigr) \\
			&\leq MB_h \|x_1 - x_2\| + MB_f \|x_1 - x_2\| \\
			&\leq G \|x_1 - x_2\|,
		\end{align}
	\end{subequations}
	where the first inequality applies the Lipschitz continuity of \( L \), the second inequality applies the triangle inequality, and the third inequality applies the Lipschitz continuity of \( h \) and \( f \). Hence, \( g(x) \) is \( G \)-Lipschitz continuous. Next,
	\begin{subequations}
		\begin{align}
			R_P(h,f) - R_Q(h,f)
			&= \mathbb{E}_{x \sim P}[L\bigl(h(x), f(x)\bigr)] 
			- \mathbb{E}_{x \sim Q}[L\bigl(h(x), f(x)\bigr)] \\
			&= \mathbb{E}_{x \sim P}[g(x)] - \mathbb{E}_{x \sim Q}[g(x)] \\
			&\leq \sup_{\|\phi\|_L \le G} 
			\Bigl(\mathbb{E}_{x \sim P}[\phi(x)] - \mathbb{E}_{x \sim Q}[\phi(x)]\Bigr) \\
			&= G \cdot \text{EMD}(Q,P),
		\end{align}
	\end{subequations}
	where the last equality uses the Kantorovich–Rubinstein duality (\textit{58}) 
	This completes the proof.
\end{proof}

Lemma 1 shows that the change in expected loss between two environments is controlled by their distributional distance, implying that sequential updates are less likely to induce forgetting when consecutive environments are close in EMD. Then, the following theorem quantifies how much $R_P(h,f)$ may deviate from the performance achieved by training directly on the target distribution $P$.
\begin{theorem}
	Let \(h_Q^* \in \arg\min_{h \in \mathcal{H}} R_Q(h,f)\) and 
	\(h_P^* \in \arg\min_{h \in \mathcal{H}} R_P(h,f)\). Assume the assumptions of Lemma 1 hold and that \(L\) is symmetric and satisfies the triangle inequality. Then we have
	\begin{equation}
		R_P(h,f) \leq R_P(h_P^*,f) + R_Q(h, h_Q^*) + G\cdot\text{EMD}(Q,P) + R_P(h_Q^*,h_P^*).
	\end{equation}
\end{theorem}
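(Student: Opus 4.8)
The plan is to derive the bound from two applications of the triangle inequality for $L$, with one application of Lemma 1 inserted to transfer a risk from the target distribution $P$ to the source $Q$. I keep the notation $R_D(a,b) = \mathbb{E}_{x\sim D}[L(a(x),b(x))]$ for the expected loss between two functions under a distribution $D$. First I would insert the source-optimal hypothesis $h_Q^*$ as an intermediate function: since $L$ is symmetric and obeys the triangle inequality, $L(h(x),f(x)) \le L(h(x),h_Q^*(x)) + L(h_Q^*(x),f(x))$ holds pointwise, and taking $\mathbb{E}_{x\sim P}[\cdot]$ yields $R_P(h,f) \le R_P(h,h_Q^*) + R_P(h_Q^*,f)$.

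Next I would control the first term by moving it from $P$ to $Q$ using Lemma 1, applied with the two ``label'' slots occupied by $h$ and $h_Q^*$ rather than $h$ and $f$. Because $h,h_Q^*\in\mathcal{H}$ are both Lipschitz with constant $B_h$, the relevant constant from Lemma 1 is $2M\max(B_h,B_h)=2MB_h$, which is $\le 2M\max(B_h,B_f)=G$; hence the same $G$ remains valid and $R_P(h,h_Q^*) \le R_Q(h,h_Q^*) + G\cdot\text{EMD}(Q,P)$. Then I would split the leftover term $R_P(h_Q^*,f)$ one more time, now through the target-optimal hypothesis $h_P^*$: the triangle inequality for $L$ gives $R_P(h_Q^*,f) \le R_P(h_Q^*,h_P^*) + R_P(h_P^*,f)$. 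Substituting these two inequalities into the first display and reordering the terms produces exactly $R_P(h,f) \le R_P(h_P^*,f) + R_Q(h,h_Q^*) + G\cdot\text{EMD}(Q,P) + R_P(h_Q^*,h_P^*)$.

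The argument is essentially mechanical, and the only delicate point --- the closest thing here to an obstacle --- is the reuse of the constant $G$ when both arguments of $L$ are hypotheses rather than one hypothesis and the labelling function $f$. This step implicitly needs every member of $\mathcal{H}$ to be $B_h$-Lipschitz, so that $x\mapsto L(h(x),h_Q^*(x))$ is $2MB_h$-Lipschitz and therefore $G$-Lipschitz, which is what Lemma 1 requires; I would make this uniform-Lipschitzness assumption on $\mathcal{H}$ explicit. No optimization over $\mathcal{H}$ is needed in the argument itself, since $h$ is an arbitrary fixed hypothesis and $h_Q^*,h_P^*$ enter only as convenient intermediate points.
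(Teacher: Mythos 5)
Your proof is correct and follows essentially the same route as the paper's: two applications of the triangle inequality to insert $h_Q^*$ and $h_P^*$, plus one application of Lemma 1 to move $R_P(h,h_Q^*)$ to the source distribution (the paper merely chains both triangle-inequality steps before invoking Lemma 1, which is cosmetically different but identical in substance). Your observation that applying Lemma 1 with $h_Q^*$ in place of $f$ implicitly requires every hypothesis in $\mathcal{H}$ to be uniformly Lipschitz is a genuine point of care that the paper leaves unstated.
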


\begin{proof}
	By the triangle inequality and Lemma 1, we obtain
	\begin{subequations}
		\begin{align}
			R_P(h,f) &\leq R_P(h,h_Q^*) + R_P(h_Q^*,h_P^*) + R_P(h_P^*,f), \\
			& \leq R_Q(h,h_Q^*) + G\cdot\text{EMD}(Q,P) + R_P(h_Q^*,h_P^*) + R_P(h_P^*,f),
		\end{align}
	\end{subequations}
	where the first inequality follows from the triangle inequality and the second follows from Lemma 1. This completes the proof.
\end{proof}

The above theorem is a direct application of Theorem 8 in (\textit{67}),
where the discrepancy distance is replaced by the EMD. Note that $R_Q(h,h_Q^*)$ is typically small because the hypothesis $h$ is trained on source distribution $Q$, resulting in  performance close to that of optimal hypothesis $h_Q^*$. Moreover, it is reasonable to assume that the average loss between the best in-class hypotheses, $L_P(h_Q^*,h_P^*)$, correlates with the distributional distance $\text{EMD}(Q,P)$ (\textit{67}).
Consequently, Theorem 1 indicates that the generalization error is primarily bounded by $\text{EMD}(Q,P)$. When two consecutive environments are sufficiently similar (small $\mathrm{EMD}(Q,P)$), a model adapted to the current environment can retain performance on previously encountered ones with limited additional regularization. In contrast, large $\mathrm{EMD}(Q,P)$ implies a larger unavoidable discrepancy, motivating mechanisms that allocate new plasticity or separate parameters to prevent interference. 

Recall that our SpikACom framework can intelligently control the spiking neurons of the backbone network based on distributional distances. For two similar distributions, the hypernet will activate a similar subset of spiking neurons, effectively facilitating knowledge sharing. Therefore, our algorithm is in line with the above theoretical result, which reduces the need for extensive retraining across similar domains and significantly improves efficiency.  

The following theorem provides a lower bound on the generalization error but relies on more stringent assumptions.
\begin{assumption}
	Hypothesis \(h_Q\) trained on the source domain satisfies
	\begin{equation}
		-\frac{1}{K}\ln q(x) -\delta 
		\;\leq\; L\bigl(h_Q(x),f(x)\bigr) - L\bigl(h_Q^*(x),f(x)\bigr) 
		\;\leq\; -\frac{1}{K}\ln q(x) +\delta,
	\end{equation}
	where \(K\) and \(\delta\) are constants.
\end{assumption}

\begin{assumption}
	\begin{equation}
		\text{EMD}(Q,P) \;\leq\; \sqrt{\frac{2\,\mathrm{KL}(P \,\|\, Q)}{\rho}},
	\end{equation}
	where \(\rho\) is a constant.
\end{assumption}

\begin{remark}
	Assumption 1 reflects a natural phenomenon: when a hypothesis \(h_Q\) is trained on samples drawn from source distribution \(Q\), it should be close to the optimal hypothesis \(h_Q^*\). Furthermore, for regions of higher density q(x), there is a greater likelihood of sampling points in those areas during training, which improves performance locally. Hence, we assume that \(L\bigl(h_Q(x),f(x)\bigr) - L\bigl(h_Q^*(x),f(x)\bigr)\) is proportional to the data density \(-\ln q(x)\) and their absolute difference is bounded by $\delta$. 
	
	Assumption 2 states that the EMD is bounded above by the square root of the KL divergence scaled by some constant. In general, the EMD is expected to be positively correlated with the KL divergence since both measure distance between distributions. However, since the KL divergence is not a true distance, certain extreme cases (e.g., two sharply peaked Gaussian distributions with different means) may invalidate this assumption. We stress that such cases rarely arise in practice. Hence, to streamline our analysis, we adopt this inequality as an assumption. Notably, the same inequality also arises as a sufficient condition for the Talagrand inequality (\textit{68}).
\end{remark}
\begin{lemma}
	Assuming that Assumption 1, Assumption 2, and the assumptions of Lemma 1 hold, we have
	\begin{align}
		R_P(h_Q,f) - R_Q(h_Q,f) 
		&\geq R_P(h_Q^*,f) - R_Q(h_Q^*,f) - 2\delta 
		+ \frac{1}{K}\bigl(H(P) - H(Q)\bigr) \\
		&\quad + \frac{\rho}{2K}\,\text{EMD}(Q,P)^2. \nonumber
	\end{align}	
\end{lemma}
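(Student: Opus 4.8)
The plan is to isolate the excess-risk difference of $h_Q$ relative to $h_Q^*$ on the two domains, translate it into an entropy/KL statement using Assumption 1, and close with Assumption 2. First I would set $\Delta L(x) \triangleq L(h_Q(x),f(x)) - L(h_Q^*(x),f(x))$ and record the algebraic identity
\begin{equation}
  \bigl[R_P(h_Q,f) - R_Q(h_Q,f)\bigr] - \bigl[R_P(h_Q^*,f) - R_Q(h_Q^*,f)\bigr] = \mathbb{E}_{x\sim P}[\Delta L(x)] - \mathbb{E}_{x\sim Q}[\Delta L(x)],
\end{equation}
so that the claimed lower bound is equivalent to showing $\mathbb{E}_{x\sim P}[\Delta L(x)] - \mathbb{E}_{x\sim Q}[\Delta L(x)] \geq -2\delta + \tfrac{1}{K}(H(P)-H(Q)) + \tfrac{\rho}{2K}\,\mathrm{EMD}(Q,P)^2$.

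Next I would apply the two halves of Assumption 1 with matched orientation: the lower bound $\Delta L(x) \geq -\tfrac{1}{K}\ln q(x) - \delta$ under the expectation over $P$, and the upper bound $\Delta L(x) \leq -\tfrac{1}{K}\ln q(x) + \delta$ under the expectation over $Q$. Subtracting yields
\begin{equation}
  \mathbb{E}_{x\sim P}[\Delta L(x)] - \mathbb{E}_{x\sim Q}[\Delta L(x)] \geq -\frac{1}{K}\,\mathbb{E}_{x\sim P}[\ln q(x)] + \frac{1}{K}\,\mathbb{E}_{x\sim Q}[\ln q(x)] - 2\delta.
\end{equation}
I would then identify the two expectations using natural-logarithm (differential) entropy conventions: $\mathbb{E}_{x\sim Q}[\ln q(x)] = -H(Q)$, while the identity $\mathrm{KL}(P\|Q) = -H(P) - \mathbb{E}_{x\sim P}[\ln q(x)]$ gives $-\mathbb{E}_{x\sim P}[\ln q(x)] = H(P) + \mathrm{KL}(P\|Q)$. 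Substituting produces the intermediate bound $\tfrac{1}{K}(H(P)-H(Q)) + \tfrac{1}{K}\mathrm{KL}(P\|Q) - 2\delta$.

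Finally I would invoke Assumption 2 in its squared, rearranged form $\mathrm{KL}(P\|Q) \geq \tfrac{\rho}{2}\,\mathrm{EMD}(Q,P)^2$ to replace the KL term, then move $R_P(h_Q^*,f) - R_Q(h_Q^*,f)$ back to the right-hand side, obtaining exactly the statement of the lemma. The main obstacle is bookkeeping rather than any deep step: one must keep the two inequalities of Assumption 1 pointing in the correct directions --- lower bound under $P$, upper bound under $Q$ --- so the $\delta$ contributions add to $-2\delta$ instead of cancelling, and one must be consistent with the entropy and KL sign conventions (natural log, differential entropy) so that the $H(P)-H(Q)$ term and the $\mathrm{EMD}^2$ term emerge with the correct signs; the Lipschitz hypotheses inherited from Lemma 1 serve only to guarantee that the relevant integrals and expectations are finite and well-defined.
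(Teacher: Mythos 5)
Your proposal is correct and follows essentially the same route as the paper's proof: apply the lower half of Assumption 1 under $P$ and the upper half under $Q$, convert $-\mathbb{E}_{x\sim P}[\ln q(x)]$ into $H(P)+\mathrm{KL}(P\|Q)$ and $\mathbb{E}_{x\sim Q}[\ln q(x)]$ into $-H(Q)$, and finish by rearranging Assumption 2 into $\mathrm{KL}(P\|Q)\geq \tfrac{\rho}{2}\,\mathrm{EMD}(Q,P)^2$. The only difference is cosmetic --- you isolate the excess risk $\Delta L$ before bounding, whereas the paper substitutes the bounds directly inside the two integrals.
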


\begin{proof}
	\begin{subequations}
		\begin{align}
			R_P&(h_Q,f) - R_Q(h_Q,f) \\
			&= \int L\bigl(h_Q(x), f(x)\bigr)\,p(x)\,dx - \int L\bigl(h_Q(x), f(x)\bigr)\,q(x)\,dx \\[6pt]
			&\geq \int \bigl[L\bigl(h_Q^*(x), f(x)\bigr) - \frac{\ln q(x)}{K} - \delta\bigr]\,p(x)\,dx 
			- \int \bigl[L\bigl(h_Q^*(x), f(x)\bigr) - \frac{\ln q(x)}{K} + \delta\bigr]\,q(x)\,dx \label{assumption1}\\[6pt]
			&= R_P(h_Q^*,f) - \frac{1}{K}\int \ln q(x)\,p(x)\,dx - 2\delta - R_Q(h_Q^*,f) 
			- \frac{1}{K}H(Q) \\[6pt]
			&= R_P(h_Q^*,f) - R_Q(h_Q^*,f) - 2\delta 
			+ \frac{1}{K}\Bigl(H(P) - H(Q) + \mathrm{KL}(P\|Q)\Bigr)\label{KL_divergence} \\[6pt] 
			&\geq R_P(h_Q^*,f) - R_Q(h_Q^*,f) - 2\delta 
			+ \frac{1}{K}\bigl(H(P) - H(Q)\bigr) 
			+ \frac{\rho}{2K}\,\text{EMD}(Q,P)^2, 
		\end{align}
	\end{subequations}
	where \eqref{assumption1} applies Assumption 1, and \eqref{KL_divergence} follows from the definition of the KL divergence. The final inequality applies Assumption 2. 
\end{proof}

The following theorem states a lower bound on how much \(R_P(h_Q,f)\) can deviate from \(R_P(h_P^*,f)\).

\begin{theorem}
	Assuming that the conditions of Lemma 2 hold, we have the following lower bound on the generalization error:
	\begin{align}
		R_P(h_Q,f) - R_P(h_P^*,f) 
		&\geq \frac{\rho}{2K}\,\text{EMD}(Q,P)^2 
		+ \frac{1}{K}\bigl(H(P) - H(Q)\bigr) 
		- 2\delta.
	\end{align}
\end{theorem}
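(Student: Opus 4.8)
The plan is to obtain Theorem 3 directly from Lemma 2, using nothing more than the defining optimality of $h_Q^*$ on the source domain and of $h_P^*$ on the target domain.

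First I would restate Lemma 2 in the equivalent form
\begin{equation}
R_P(h_Q,f) \;\geq\; R_Q(h_Q,f) + R_P(h_Q^*,f) - R_Q(h_Q^*,f) - 2\delta + \frac{1}{K}\bigl(H(P) - H(Q)\bigr) + \frac{\rho}{2K}\,\text{EMD}(Q,P)^2 .
\end{equation}
Then I would subtract $R_P(h_P^*,f)$ from both sides and regroup the risk terms so that the right-hand side reads
\begin{align}
R_P(h_Q,f) - R_P(h_P^*,f)
&\geq \bigl[R_Q(h_Q,f) - R_Q(h_Q^*,f)\bigr] + \bigl[R_P(h_Q^*,f) - R_P(h_P^*,f)\bigr] \nonumber \\
&\qquad + \frac{1}{K}\bigl(H(P) - H(Q)\bigr) + \frac{\rho}{2K}\,\text{EMD}(Q,P)^2 - 2\delta .
\end{align}
Next I would observe that both bracketed differences are nonnegative: since $h_Q^* \in \arg\min_{h \in \mathcal{H}} R_Q(h,f)$, the trained hypothesis obeys $R_Q(h_Q,f) \geq R_Q(h_Q^*,f)$, and since $h_P^* \in \arg\min_{h \in \mathcal{H}} R_P(h,f)$, we have $R_P(h_Q^*,f) \geq R_P(h_P^*,f)$. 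Dropping these two nonnegative brackets from the lower bound leaves exactly the claimed inequality, which completes the argument.

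The main obstacle is bookkeeping rather than anything technical: one must orient the two discarded risk differences so that the minimality of $h_Q^*$ over $Q$ and of $h_P^*$ over $P$ genuinely makes them nonnegative — reversing either sign would produce a vacuous or even false bound. All the substantive ingredients — the use of Assumptions 1 and 2, the rewriting of the cross-entropy term via the KL divergence, and the Talagrand-type inequality relating $\mathrm{KL}(P\|Q)$ to $\mathrm{EMD}(Q,P)^2$ — are already packaged inside Lemma 2, so no new estimates are needed here.
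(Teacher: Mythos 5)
Your proposal is correct and is essentially identical to the paper's own proof: both apply Lemma 2 to the difference $R_P(h_Q,f)-R_Q(h_Q,f)$, regroup to expose the two nonnegative gaps $R_Q(h_Q,f)-R_Q(h_Q^*,f)$ and $R_P(h_Q^*,f)-R_P(h_P^*,f)$, and then drop them using the optimality of $h_Q^*$ on $Q$ and $h_P^*$ on $P$. The only difference is cosmetic ordering of the algebra.
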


\begin{proof}
	\begin{subequations}
		\begin{align}
			R_P&(h_Q,f)-R_P(h_P^*,f)  = R_P(h_Q,f)- R_Q(h_Q,f) + R_Q(h_Q,f)- R_P(h_P^*,f) \\
			&
			\geq R_P(h_Q^*,f)-R_Q(h_Q^*,f)-2\delta +\frac{1}{K}\big(H(P) - H(Q)\big) + \frac{\rho}{2K}{\rm EMD}(Q,P)^2 \nonumber\\
			&+ R_Q(h_Q,f) - R_P(h_P^*,f) \label{lemma2} \\
			& = R_P(h_Q^*,f)- R_P(h_P^*,f) + R_Q(h_Q,f) - R_Q(h_Q^*,f) 
			-2\delta +\frac{1}{K}\big(H(P) - H(Q)\big) \nonumber\\
			&+ \frac{\rho}{2K}{\rm EMD}(Q,P)^2  \\
			& \geq 
			\frac{\rho}{2K}{\rm EMD}(Q,P)^2+\frac{1}{K}\big(H(P) - H(Q)\big)-2\delta,
		\end{align}	
	\end{subequations}
	where \eqref{lemma2} applies Lemma 2 and the last inequality follows from $R_P(h_Q^*,f)- R_P(h_P^*,f) \geq 0$ and $R_Q(h_Q,f) - R_Q(h_Q^*,f) \geq 0$. This completes the proof. 
\end{proof}

Theorem 2 establishes that the generalization error is lower bounded by three terms. The first term is the square of the EMD. The second term captures the difference between the target distribution’s entropy and the source domain’s entropy. This suggests that training on a more diverse data distribution can enhance generalization ability. However, in our continual learning setup, the entropies of different wireless environments are typically similar, rendering this term negligible. The third term measures  how strictly Assumption 1 holds, which we consider to be small. Therefore, the principal factor contributing to the lower generalization bound is the EMD term.

Since the EMD is symmetric, two distributions with a large EMD will yield high generalization errors in both directions. This observation implies that, when only data from the current environment is available, a single hypothesis $h$ cannot concurrently achieve optimal performance on both distributions if they differ significantly. Regularization-based methods, such as EWC, fall into this category and thus perform poorly when the distance between distributions is large. Theorem 2 therefore underscores the necessity of introducing new plasticity in continual learning, especially when the new distribution differs significantly from all previously learned ones. The developed SpikACom embodies this principle by activating a different subset of spiking neurons upon encountering a distinctly new environment. Hence, it can simultaneously achieve satisfactory performance on both current and previous environments.    



\newpage	
\subsection*{Derivation of Model-Based EMD}
In this section, we take a complementary perspective and derive an EMD measure directly from the channel model. 
Unlike data-driven approaches, this method does not rely solely on collected samples; 
instead, it also exploits the physical channel model, which remains invariant across environments. 
As a result, the derived EMD requires significantly fewer training samples and allows the hypernet to generalize more easily across different channel conditions, ensuring robustness and adaptability.

We consider a general spike transmission scenario aligned with our neuromorphic semantic communication task. 
Let the transmitted spike vector within one time slot be
\begin{equation}
	\boldsymbol{x} = [x[1],\ldots,x[d]]^\top \in \{0,1\}^d.
\end{equation}
Since our focus lies on channel distribution shifts, we assume for simplicity that
\begin{equation}
	x[n] \overset{\text{i.i.d.}}{\sim} \mathrm{Bernoulli}(\rho), 
	\quad n = 1,...,d,
\end{equation}
where $\rho$ denotes the average firing rate of spiking neurons.
The wireless channel is an $L$-tap fast-fading multipath channel, where each tap
\begin{equation}
	h_\ell[n] \sim \mathcal{CN}(0,p_\ell),
\end{equation}
is independently resampled at each time index. 
The total channel power is
\begin{equation}
	P_{\boldsymbol{h}} = \sum_{\ell=1}^{L} p_\ell.
\end{equation}
The received sample at time index $n$ is
\begin{equation}
	y[n] = \sum_{\ell=1}^{L} h_\ell[n]\,x[n-\ell] + w[n],
\end{equation}
where $w[n]\sim\mathcal{CN}(0,N_0)$ denotes the additive noise.

It can be readily checked that
\begin{equation}
	\operatorname{Var}(y[n]) = \rho P_{\boldsymbol{h}} + N_0,
\end{equation}
and 
\begin{equation}
	\operatorname{Cov}(y[n],y[m]) = 0, \qquad n \neq m.
\end{equation}
Thus,
\begin{equation}
	\operatorname{Cov}(\boldsymbol{y})
	= (\rho P_{\boldsymbol{h}} + N_0)\,\mathbf{I}_d.
\end{equation}

Assuming that the number of paths $L$ is large, the Lindeberg--Feller 
central limit theorem (\textit{66})
gives
\begin{equation}
	\sum_{\ell=1}^{L} h_\ell[n]\,x[n-\ell]
	\approx \mathcal{CN}(0,\rho P_{\boldsymbol{h}}).
\end{equation}
Adding AWGN yields
\begin{equation}
	y[n] \approx \mathcal{CN}(0,\rho P_{\boldsymbol{h}} + N_0).
\end{equation}
Since the entries are uncorrelated and each satisfies a scalar CLT, the 
Cramér--Wold theorem (\textit{66})
implies
\begin{equation}
	\boldsymbol{y} \approx 
	\mathcal{CN}\bigl(\boldsymbol{0},(\rho P_{\boldsymbol{h}} + N_0)\mathbf{I}_d\bigr).
\end{equation}

For two zero-mean Gaussians with covariances 
$\sigma_1^2 \mathbf{I}_d$ and $\sigma_2^2 \mathbf{I}_d$, 
the EMD is
\begin{equation}
	\text{EMD} = \sqrt{d}\,|\sigma_1 - \sigma_2|.
\end{equation}
Hence, the EMD between the received distributions 
under two channel distributions is
\begin{equation}
	\text{EMD}(\mathcal{H}_1,\mathcal{H}_2)
	\approx 
	\sqrt{d}\,
	\bigl|
	\sqrt{\rho P_{\boldsymbol{h}_1}+N_0}
	-
	\sqrt{\rho P_{\boldsymbol{h}_2}+N_0}
	\bigr|.
\end{equation}
Defining $\mathrm{SNR}_i = \rho P_{\boldsymbol{h}_i}/N_0$, this becomes
\begin{equation}
	\text{EMD}(\mathcal{H}_1,\mathcal{H}_2)
	\approx \sqrt{dN_0}\,
	\bigl|
	\sqrt{1+\mathrm{SNR}_1}
	-
	\sqrt{1+\mathrm{SNR}_2}
	\bigr|.
\end{equation}

Since only the relative distance matters in our setting, the constant 
$\sqrt{dN_0}$ can be ignored. 
Thus, under a fast-fading channel with rich scattering, the detailed multipath 
structure is averaged out, and the EMD between environments is determined 
primarily by the difference in their SNR values. This model-based EMD facilitates learning the relationships between different channels, as illustrated in Fig.~\ref{fig:EMD_vs_gate_semantic}.

\newpage	
\subsection*{Extension of SpikACom to Convolutional Layers}
The preceding sections in the main text detailed the SpikACom framework within the context of fully connected (FC) layers. In this section, we outline the adaptation of the hypernet-based context modulation and SRC mechanisms for spiking convolutional layers.

\subsubsection*{Hypernet-based context modulation for convolutional layers}
To start with, we analyze the FC layer, showing that orthogonal binary gates lead to no-interference during learning.
Consider a fully connected layer with weight matrix
$\mathbf{W}\in\mathbb{R}^{N_{\text{post}}\times N_{\text{pre}}}$,
input spike vector $\boldsymbol{s}\in\mathbb{R}^{N_{\text{pre}}}$,
and a hypernet-generated binary gate
$\boldsymbol{g}\in\{0,1\}^{N_{\text{post}}}$.
The gated output is
\begin{equation}
	\boldsymbol{y}=\boldsymbol{g}\odot f(\mathbf{W}\boldsymbol{s}),
\end{equation}
where $f(\cdot)$ denotes the component-wise neuron nonlinearity.

For two environments $e\in\{1,2\}$, we denote the loss, input spike vector, and gate as $L_e$, $\boldsymbol{s}_e$, and $\boldsymbol{g}_e$, respectively.
By the chain rule, the gradient with respect to $\mathbf{W}$ admits an outer-product form
\begin{equation}
	\nabla_{\mathbf{W}}L_e=\boldsymbol{\delta}_e\,\boldsymbol{s}_e^\top,
\end{equation}
where
\begin{equation}
	\boldsymbol{\delta}_e
	\triangleq
	\bigl(\nabla_{\boldsymbol{y}_e}L_e\bigr)\odot \boldsymbol{g}_e \odot f'(\mathbf{W}\boldsymbol{s}_e)
\end{equation}
and $\odot$ denotes the Hadamard product.

To quantify interference between two environments, consider the Frobenius inner product
\begin{equation}
	\left\langle \nabla_{\mathbf{W}}L_1,\ \nabla_{\mathbf{W}}L_2 \right\rangle_F
	=
	\left\langle \boldsymbol{\delta}_1\boldsymbol{s}_1^\top,\ \boldsymbol{\delta}_2\boldsymbol{s}_2^\top \right\rangle_F
	=
	(\boldsymbol{\delta}_1^\top\boldsymbol{\delta}_2)\,(\boldsymbol{s}_1^\top\boldsymbol{s}_2).
\end{equation}
For disjoint binary gates, $\boldsymbol{g}_1\odot \boldsymbol{g}_2=\mathbf{0}$,
the supports of $\boldsymbol{\delta}_1$ and $\boldsymbol{\delta}_2$ do not overlap, yielding
$\boldsymbol{\delta}_1^\top\boldsymbol{\delta}_2=0$ and hence
\begin{equation}
	\left\langle \nabla_{\mathbf{W}}L_1,\ \nabla_{\mathbf{W}}L_2 \right\rangle_F=0,
\end{equation}
which indicates no gradient interference on this layer.

However, in a convolutional layer, the weight tensor is shared across spatial locations.
Let $\mathbf{W}\in\mathbb{R}^{C_{\text{out}}\times C_{\text{in}}\times K\times K}$ denote the convolutional kernels.
For a fixed output channel $c$, define the corresponding kernel as
$\mathbf{W}_c \triangleq \mathbf{W}_{c,:,:,:}\in\mathbb{R}^{C_{\text{in}}\times K\times K}$.
The gradient with respect to $\mathbf{W}_c$ aggregates contributions over all spatial positions:
\begin{equation}
	\nabla_{\mathbf{W}_c}L
	=
	\sum_{u,v} \boldsymbol{\delta}_{c}[u,v]\ \mathbf{X}[u,v]^\top,
\end{equation}
where $\boldsymbol{\delta}_{c}[u,v]$ is the backpropagated error at location $(u,v)$
and $\mathbf{X}[u,v]$ is the corresponding local input patch.
Therefore, even if two environments activate disjoint spatial regions of a feature map,
their gradients can still update the same kernel parameters $\mathbf{W}_c$ due to weight sharing.

To maintain the no-interference behavior, we apply the hypernet gate at the output-channel level, as illustrated in Fig.~\ref{conv_gate}.
Specifically, the hypernet outputs $\boldsymbol{g}\in\{0,1\}^{C_{\text{out}}}$ and modulates the feature maps as
\begin{equation}
	\mathbf{Y}_{c,:,:}= g_c\, f(\mathbf{Z}_{c,:,:}),
	\qquad c=1,\ldots,C_{\text{out}},
\end{equation}
where $\mathbf{Z}_{c,:,:}$ denotes the $c$-th channel of the convolutional output $\mathbf{Z} \triangleq \text{Conv}(\mathbf{W},\mathbf{X})$. If $g_c=0$, then $\boldsymbol{\delta}_{c}[u,v]=0$ for all $(u,v)$ and hence $\nabla_{\mathbf{W}_c}L=\mathbf{0}$. Therefore, $\left\langle\boldsymbol{g}_1,\boldsymbol{g}_2\right\rangle=0$ will lead to $	\left\langle \nabla_{\mathbf{W}}L_1,\ \nabla_{\mathbf{W}}L_2 \right\rangle_F=0$, indicating that disjoint channel-wise gates yield no interference.

\subsubsection*{Spiking rate consolidation for convolutional layers}
In FC layers, the proposed SRC imposes a regularization term on synaptic weights, where the importance matrix is constructed from the pre- and post-synaptic firing rates. 
Let $\mathbf{u}^F\in\mathbb{R}^{N_{\text{pre}}}$ and $\mathbf{v}^F\in\mathbb{R}^{N_{\text{post}}}$ denote the spiking-rate vectors of the pre- and post-synaptic layers, respectively. We define
\begin{equation}
	\mathbf{\Omega}^F \triangleq \mathbf{v}^F (\mathbf{u}^F)^\top \in \mathbb{R}^{N_{\text{post}}\times N_{\text{pre}}},
\end{equation}
so that the importance of synapse $w_{j,i}^F$ (from pre neuron $i$ to post neuron $j$) is $\Omega^F_{j,i}=v_j^F u_i^F$.
This design follows the Hebbian principle that synapses with stronger pre--post co-activity should be more strongly preserved.

We extend the same idea to convolutional layers.
Consider a convolutional layer with $C_{\text{in}}$ input channels, $C_{\text{out}}$ output channels, kernel size $K\times K$, and stride of $1$ for simplicity. 
The convolutional weight tensor is $\mathbf{W}^C\in\mathbb{R}^{C_{\text{out}}\times C_{\text{in}}\times K\times K}$, and $\mathbf{W}^C_{j,i}\in\mathbb{R}^{K\times K}$ denotes the kernel connecting the $i$-th input channel to the $j$-th output channel. 
Since $\mathbf{W}^C_{j,i}$ is shared across all spatial locations, its importance should depend on the activity levels of the corresponding input and output \emph{channels}. 
Let $\mathbf{u}^C\in\mathbb{R}^{C_{\text{in}}}$ and $\mathbf{v}^C\in\mathbb{R}^{C_{\text{out}}}$ denote the channel-wise spiking-rate summaries (averaged over time and spatial positions) for the input and output channels, respectively. 
We assign the importance of the entire kernel $\mathbf{W}^C_{j,i}$ as $v_j^C u_i^C$, and broadcast it across the spatial kernel entries:
\begin{equation}
	\mathbf{\Omega}^C \triangleq \mathbf{v}^C \otimes \mathbf{u}^C \otimes \mathbf{J}_{K \times K}
	\in \mathbb{R}^{C_{\text{out}}\times C_{\text{in}}\times K\times K},
\end{equation}
where $\mathbf{J}_{K \times K}$ is the $K\times K$ all-ones matrix and $\otimes$ denotes the outer product with broadcasting across the spatial dimensions. 
This construction regularizes kernels connecting highly active input channels to highly active output channels more strongly, preserving task-relevant knowledge with low storage overhead.

\newpage	
\subsection*{Detailed Computation of Energy Consumption}
\subsubsection*{SNNs}
For SNNs, only neurons receiving nonzero spikes perform computations. Hence, the
effective number of operations (NOP) depends on the average firing rate $p\in[0,1]$. 
For a fully connected layer with input dimension $I$ and output dimension $O$,
the NOP is
\begin{equation}
	\mathrm{NOP}_{\mathrm{FC,SNN}} = I O p.
\end{equation}
For a convolutional layer with $C_{\rm in}$ input channels, $C_{\rm out}$ output
channels, kernel size $K\times K$, and output feature map size
$H_{\rm out}\times W_{\rm out}$, we have
\begin{equation}
	\mathrm{NOP}_{\mathrm{Conv,SNN}} = H_{\rm out} W_{\rm out} C_{\rm out} C_{\rm in} K^{2} p.
\end{equation}
Because spike inputs are binary, most operations reduce to accumulations rather
than multiplications. Accordingly, the total energy consumption of an SNN can be
expressed as
\begin{equation}
	E_{\mathrm{SNN}} = 
	\sum_{l\in\mathcal{S}} E_{\rm ac} \,\mathrm{NOP}_{l}\,T +
	\sum_{l\in\mathcal{S}'} E_{\rm mac} \,\mathrm{NOP}_{l}\,T,
\end{equation}
where $\mathcal{S}$ represents the set of spike-based layers, $\mathcal{S}'$
denotes layers operating on analog inputs (e.g., the first layer of the channel
decoder), and $T$ is the total number of simulation time steps.

\subsubsection*{ANNs}
In conventional ANNs, all neurons are active ($p=1$). Thus, the NOP for fully
connected and convolutional layers is given by
\begin{equation}
	\mathrm{NOP}_{\mathrm{FC,ANN}} = I O,
\end{equation}
and
\begin{equation}
	\mathrm{NOP}_{\mathrm{Conv,ANN}} = H_{\rm out} W_{\rm out} C_{\rm out} C_{\rm in} K^{2},
\end{equation}
respectively. Since the energy for matrix multiplication is dominant, we neglect
the energy consumption of the AC operations for bias terms and nonlinear
functions in ANNs and recurrent networks (RNN, LSTM, and Transformer).
Therefore, the total energy consumption of an ANN is given by
\begin{equation}
	E_{\mathrm{ANN}} = E_{\rm mac} \sum_{l} \mathrm{NOP}_{l}.
\end{equation}

\subsubsection*{Sequence models (RNN, LSTM, and Transformer)}
Sequence models extend ANNs and can sequentially handle streaming data over $T$ time steps.
Regarding their energy consumption, an RNN layer with input size $I$ and
hidden size $H$ consists of two matrix multiplications, and the NOP is
\begin{equation}
	\mathrm{NOP}_{\mathrm{RNN}} = T ( I H + H^{2} ).
\end{equation}
LSTM networks contain four gates, leading to
\begin{equation}
	\mathrm{NOP}_{\mathrm{LSTM}} = 4 T ( I H + H^{2} )
\end{equation}
for each layer.
A Transformer block with hidden width $H$ consists of
a self-attention layer followed by two feed-forward FC layers, with the hidden
dimension typically expanded to $4H$. Its NOP can be computed as
\begin{equation}
	\mathrm{NOP}_{\mathrm{Trans}} = T ( 12 H^{2} + 2 T H ).
\end{equation}
Similar to ANNs, all sequence models rely on floating-point MAC operations.
Thus, the total energy consumption is
\begin{equation}
	E_{\mathrm{Seq}} = E_{\mathrm{mac}} \sum_{l} \mathrm{NOP}_{l}.
\end{equation}

\subsubsection*{LDPC}
An $(n,k)$ LDPC code is defined by a generator matrix $\mathbf{G}\in\{0,1\}^{k\times n}$ and a parity-check matrix $\mathbf{H}\in\{0,1\}^{m\times n}$, where $m=n-k$. 
Given an information vector $\mathbf{u}\in\{0,1\}^{1\times k}$, the codeword is generated as
\begin{equation}
	\mathbf{c} = \mathbf{u}\mathbf{G} \bmod 2.
\end{equation}
Let $d_g$ be the average number of nonzero elements per row of $\mathbf{G}$. The encoding energy is modeled as
\begin{equation}
	E_{\mathrm{enc}} = E_{\mathrm{ac}}\, N_{\mathrm{xor}}, 
\end{equation}
where $N_{\mathrm{xor}} \approx n (d_{g}-1)$.

For decoding, the sum–product algorithm (SPA) is employed on the Tanner graph defined by $\mathbf{H}$. Each check node $j$ updates messages according to the $\phi$–sum rule:
\begin{equation}
	L_{j\to i} = \prod_{i'\in\mathcal{N}(j)\setminus i}\!\mathrm{sign}(L_{i'\to j})
	\cdot \phi^{-1}\!\left(\sum_{i'\in\mathcal{N}(j)\setminus i}\phi(|L_{i'\to j}|)\right),
\end{equation}
where $\mathcal{N}(j)$ denotes variable nodes connected to check node $j$ and $\phi(\cdot)$ is defined as
\begin{equation}
	\phi(x) = -\ln\tanh\!\left(\frac{x}{2}\right).
\end{equation}
In this work, we assume a look up table (LUT) method for computing $\phi(\cdot)$ and $\phi^{-1}(\cdot)$ (\textit{70}).
Each operation requires one memory access and consumes about $E_{mem} = 10 pJ$ (\textit{64}). 
Let $E$ denote the number of edges in the Tanner graph and $N_{\text{iter}}$ the iteration number. Then, the decoding energy is approximately estimated as
\begin{equation}
	E_{\mathrm{dec}} \approx N_{\text{iter}}
	\big(E_{\rm mem}\,2E + E_{\rm ac}\,4E\big),
\end{equation}
where the energy for additional operation such as $\mathrm{abs}$ and $\mathrm{sign}$ are neglected.

\subsubsection*{Matrix Operations}
The beamforming experiment involves a lot of complex matrix operations,
such as matrix multiplications and matrix inversions. 
For $\mathbf{A}\in\mathbb{C}^{m\times n}$ and $\mathbf{B}\in\mathbb{C}^{n\times r}$,
the energy for matrix multiplication is given by
\begin{equation}
	E_{\mathrm{mm}} = E_{\mathrm{mac}}\,\mathrm{NOP}_{\mathrm{mm}},
\end{equation}
where
\begin{equation}
	\mathrm{NOP}_{\mathrm{mm}} = 4mnr,
\end{equation}
and the factor~4 reflects the real multiplications required for complex
numbers.

The inversion of a real square matrix $\mathbf{C}\in\mathbb{R}^{n\times n}$
requires approximately $\tfrac{1}{2}n^{3}$ operations (\textit{69}).
For a complex matrix, the required operations are roughly four times larger,
leading to
\begin{equation}
	E_{\mathrm{inv}} \approx 2E_{\mathrm{mac}}\,n^{3}.
\end{equation}

The Cholesky decomposition of a real matrix 
$\mathbf{C}\in\mathbb{R}^{n\times n}$ requires approximately 
$\tfrac{1}{6}n^{3}$ operations (\textit{68}). 
Thus, the energy for the complex case becomes
\begin{equation}
	E_{\mathrm{chol}} \approx \frac{2}{3}E_{\mathrm{mac}}\,n^{3}.
\end{equation}

\subsubsection*{Interpolation}
In OFDM channel estimation, LS recovers non-pilot channel coefficients by
interpolation. We adopt the radial basis function (RBF) interpolation
method (\textit{62}). Specifically, for a time–frequency location $x=(n,k)$ in a
grid of size $N\times K$, the RBF method gives
\begin{equation}
	\hat H(x)
	= \sum_{p=1}^{P} w_p\, \phi\!\left(r^2(x,x_p)\right),
	\qquad 
	\phi(r^2)=e^{-\gamma r^2},
\end{equation}
where $P$ denotes the number of pilots and
$r^2(x,x_p)=\alpha(n-n_p)^2+\beta(k-k_p)^2$ is the scaled squared distance from
$x$ to the $p$-th pilot. The kernel weights $w=[w_1,\ldots,w_P]^T$ are obtained
from the standard RBF system
\begin{equation}
	\Phi w = h,
\end{equation}
where $\Phi\in\mathbb{R}^{P\times P}$ is the pilot kernel matrix with
$\Phi_{pq}=\phi(r^2(x_p,x_q))$ and $h$ contains the LS channel estimates on the
pilot positions. Since the pilot pattern is fixed in the considered OFDM
system, $\Phi$ and its inverse can be precomputed offline. Thus, computing $w$
reduces to a single matrix–vector multiplication,
\begin{equation}
	E_{w} \approx P^{2} E_{\mathrm{mac}}.
\end{equation}

In the interpolation stage, each non-pilot location requires $P$
multiplications and $P$ additions with the precomputed weights, leading to
\begin{equation}
	E_{\mathrm{int}}
	\approx (NK-P)P\,E_{\mathrm{mac}}.
\end{equation}
For complex-valued channels, the real and imaginary parts are interpolated
separately, giving the total energy
\begin{equation}
	E_{\mathrm{tot}}
	\approx 2\,NKPE_{\mathrm{mac}}.
\end{equation}

\newpage
\begin{figure}[h] 
		\centering
		\includegraphics[width=0.6\textwidth]{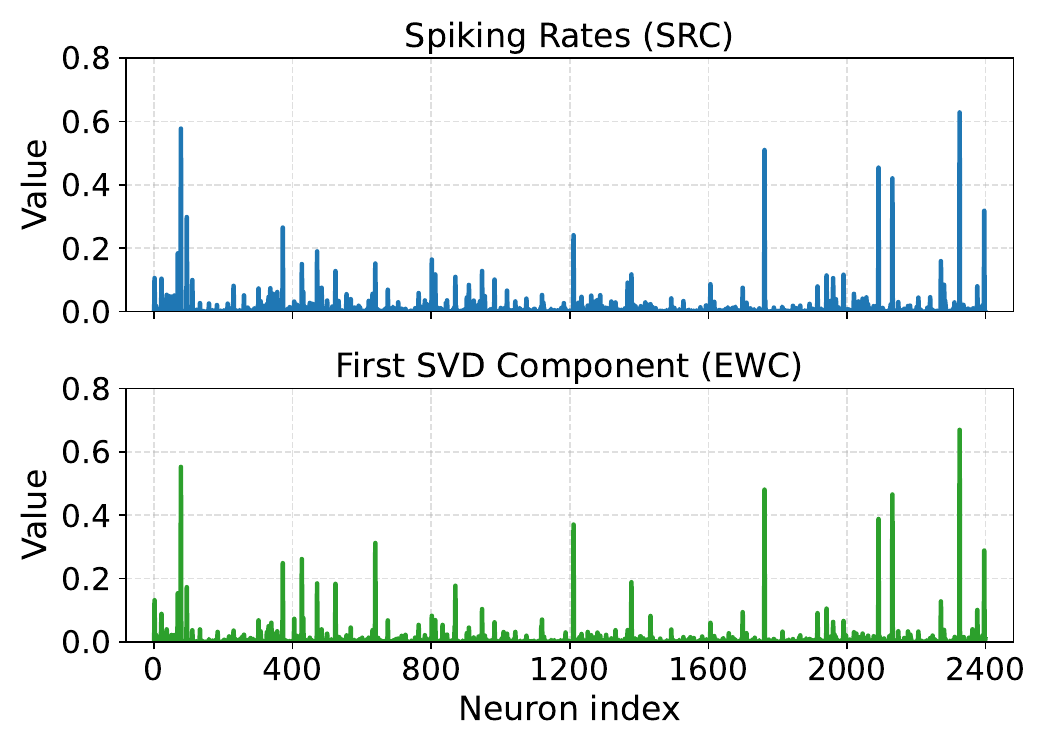} 
		
		\caption{\linespread{1.0}\selectfont \textbf{Correspondence between spiking activity and Fisher information geometry.} \textbf{Top}: The spiking rate vector across 2400 neurons in the neuromorphic semantic communication task. \textbf{Bottom}: The principal eigenvector of the Fisher information matrix (FIM), indicating the direction of maximum curvature in the parameter space. The alignment between the two panels demonstrates that the computationally efficient spiking rate statistics capture the parameter importance information that is typically estimated from gradient-based FIM.}
		\label{fig:SRMvsFIM} 
	\end{figure}
	
	\newpage
	\begin{figure}[h] 
			\centering
			\includegraphics[width=0.6\textwidth]{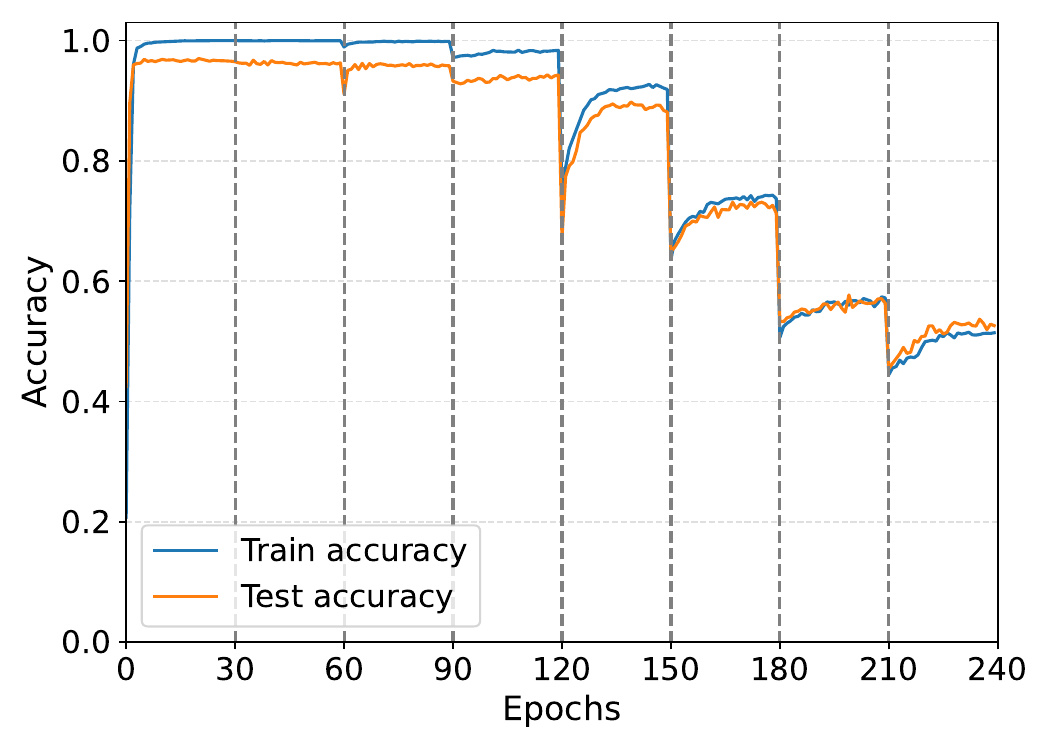} 
			
			\caption{\linespread{1.0}\selectfont \textbf{Convergence curve of hypernet-based context modulation under dynamically changing environments.} The curves report the training and testing accuracy achieved on the neuromorphic semantic communication task when using the hypernet-generated gate to modulate the backbone network. The model undergoes sequential adaptation across environments whose SNR decreases from $8$~dB to $-20$~dB in uniform steps of $4$~dB; each environment is trained for 30 epochs. As the noise perturbation becomes more severe in later environments, the achievable accuracy decreases accordingly.}
			
			\label{fig:hypernet_gate} 
		\end{figure}

		\newpage
		\begin{figure}[h] 
				\centering
				\includegraphics[width=0.6\textwidth]{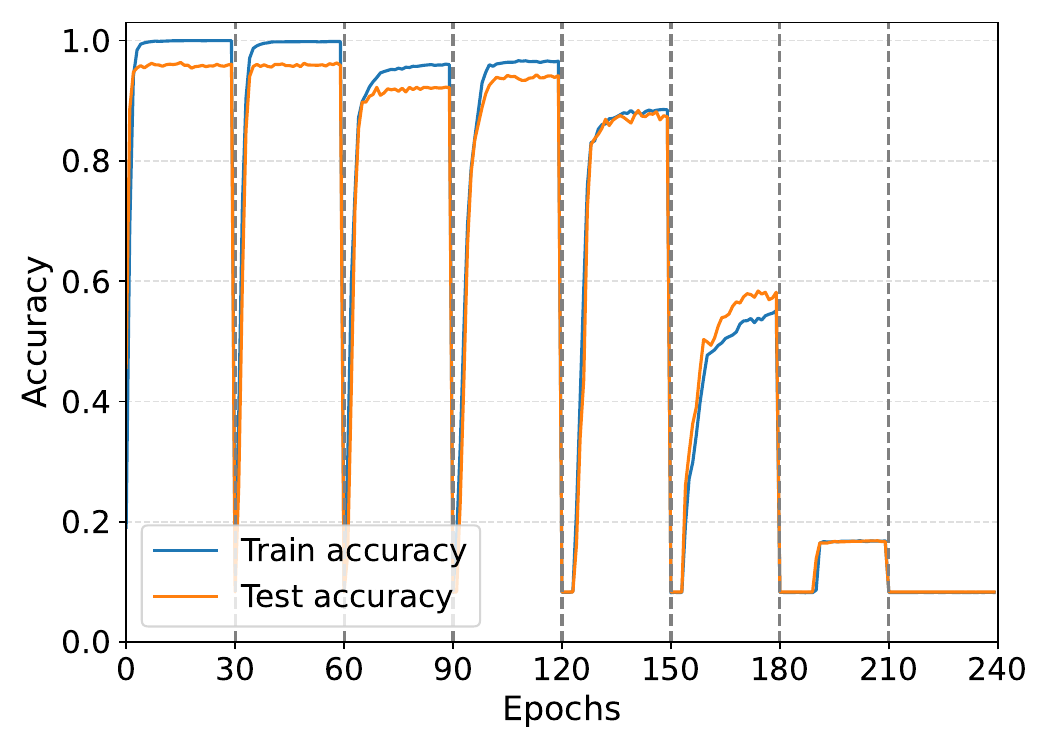} 
				
				\caption{\linespread{1.0}\selectfont \textbf{Convergence curve of orthogonal gate modulation under dynamically changing environments.} The curves report the training and testing accuracy on the neuromorphic semantic communication task when the backbone network is modulated by mutually orthogonal (non-overlapping) gates, under the same sequential SNR schedule and training protocol as in Fig.~\ref{fig:hypernet_gate}. Orthogonal gating enforces strict parameter isolation across environments, which prevents cross-environment interference but also limits knowledge transfer. Consequently, adaptation to each new environment is slower and may fail to converge in extremely low-SNR scenarios.}

				\label{fig:orthogonal_gate} 
			\end{figure}

			\newpage
			\begin{figure}[h]
				\centering
				\begin{subfigure}[b]{0.49\textwidth}
					\centering
					\includegraphics[width=\textwidth]{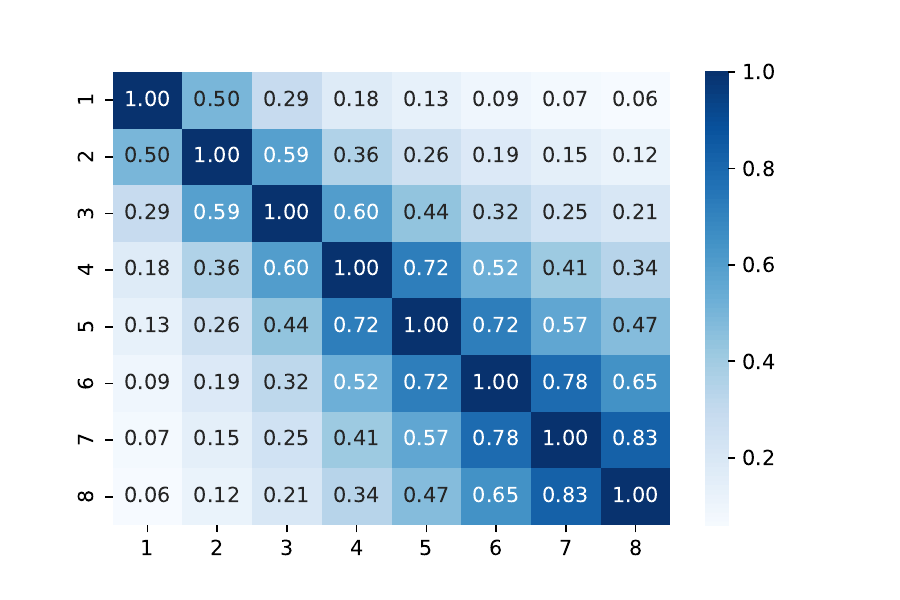}
					\caption{}
				\end{subfigure}
				\hfill 
				\begin{subfigure}[b]{0.49\textwidth}
					\centering
					\includegraphics[width=\textwidth]{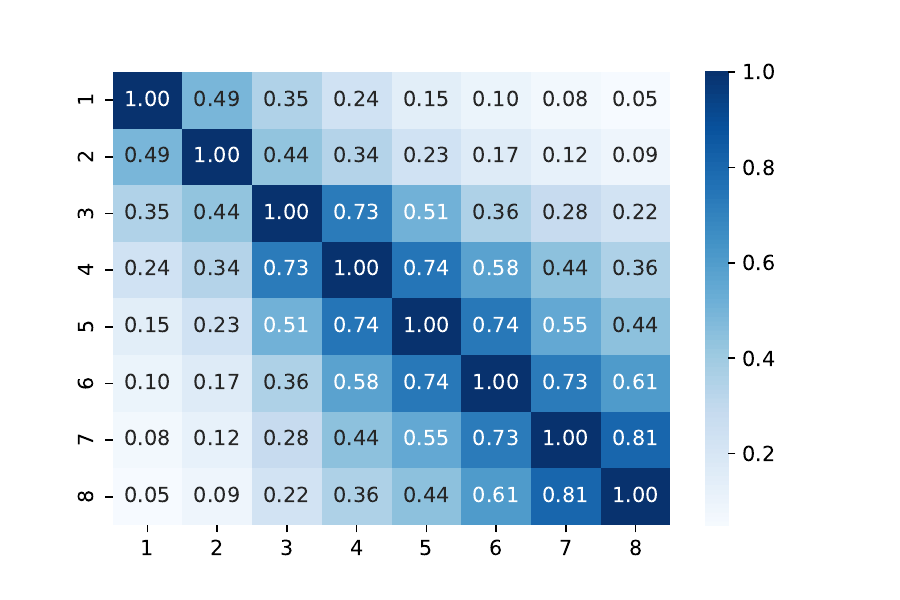}
					\caption{}
				\end{subfigure}
				\caption{\linespread{1.0}\selectfont \textbf{Alignment between model-based EMD and learned hypernet representations in neuromorphic semantic communication task.} 
					(\textbf{a}) The model-based earth mover's distance (EMD) matrix computed between eight unseen wireless environments. Unlike data-driven metrics, this EMD is analytically derived from domain knowledge, allowing for the quantification of physical distribution shifts without requiring specific channel samples. 
					(\textbf{b}) The cosine distance matrix between the hypernet gates generated for the same environments. The hypernet was trained on 80 distinct channel distributions and evaluated on these 8 unseen scenarios.
					The similarity between the physical metric (a) and the learned latent metric (b) indicates that the hypernet captures the underlying physical structure of the wireless channels with limited data access.}
				\label{fig:EMD_vs_gate_semantic}
			\end{figure}
			
			\newpage
			\begin{figure}[h]
				\centering
				\begin{subfigure}[b]{0.49\textwidth}
					\centering
					\includegraphics[width=\textwidth]{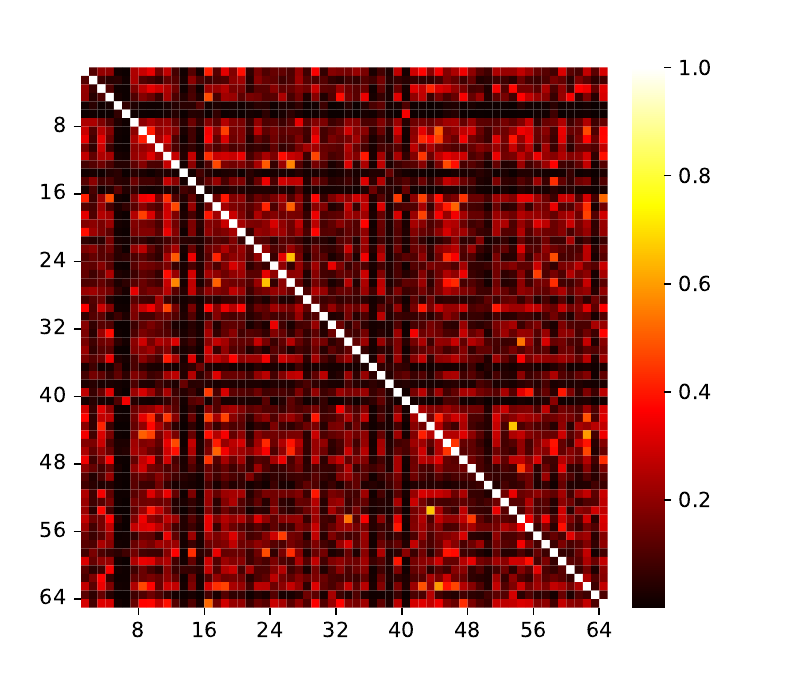}
					\caption{}
				\end{subfigure}
				\hfill 
				\begin{subfigure}[b]{0.49\textwidth}
					\centering
					\includegraphics[width=\textwidth]{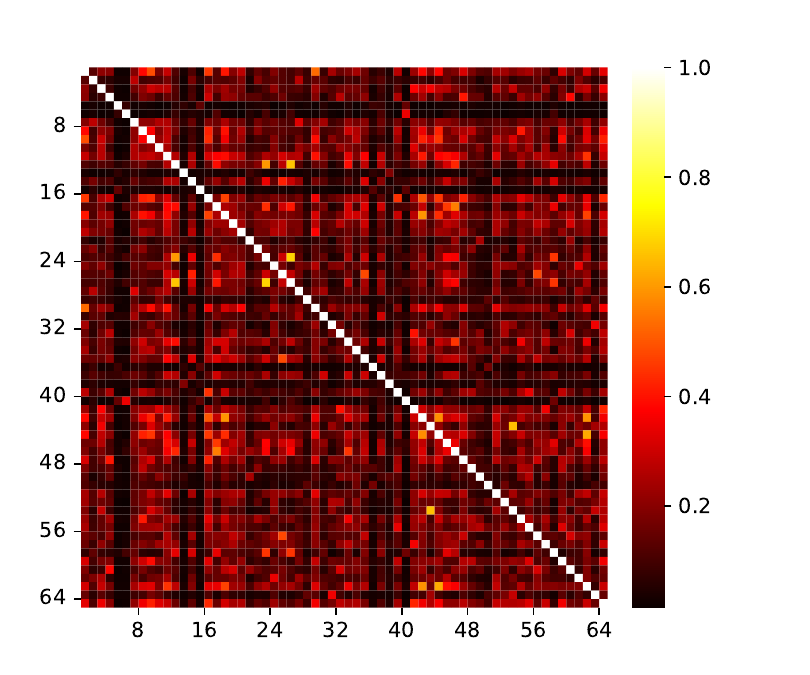}
					\caption{}
				\end{subfigure}
				\caption{\linespread{1.0}\selectfont \textbf{Alignment between data-driven FCD and learned hypernet representations in the MIMO beamforming task.} 
					(\textbf{a}) The data-driven Fréchet channel distance (FCD) matrix computed pairwise among 64 unseen testing environments. Note that for the multi-user setting, we extend FCD by computing the distribution distance over user-wise channel features and applying the Hungarian algorithm to find the minimum EMD over user matchings, yielding a permutation-invariant distance. 
					(\textbf{b}) The cosine distance matrix between the hypernet gates generated for the same environments. 
					We collect 10 pilot samples per user for identifying the environments. The hypernet is trained on 640 distinct environments and evaluated on 64 unseen scenarios. 
					Considering the theoretical space comprises $180^K$ potential environments ($K=6$ users), the training set of 640 covers a very small fraction of the total distribution space. 
					The alignment between (a) and (b) shows that the hypernet captures the structure of channel distributions.}
				\label{fig:EMD_vs_gate_beamforming}
			\end{figure}
			
			\newpage
			\begin{figure}[h]
				\centering
				\includegraphics[width=0.60\textwidth]{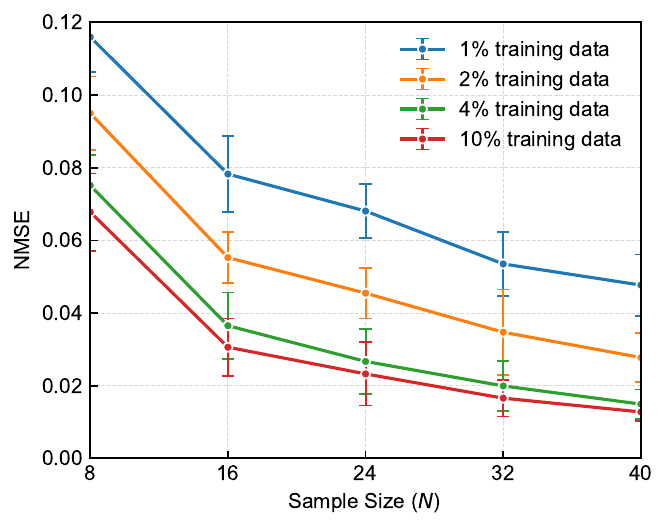}
				\caption{\linespread{1.0}\selectfont \textbf{Assessment of structural alignment between FCD and learned hypernet representations in OFDM channel estimation.} 
					The y-axis represents the NMSE calculated between the data-driven FCD matrix and the learned hypernet cosine distance matrix. 
					The x-axis denotes the number of pilot samples collected per environment. The curves correspond to hypernets trained on 1\%, 2\%, 4\%, and 10\% of the total samples generated from the WINNER II channel model. 
					Given the limited number of distinct scenarios in the WINNER II dataset, this evaluation focuses on the accuracy of environment identification using a few samples rather than generalization between environments. Even with only 1\% of the available training samples, the NMSE drops below 0.1 once the sample size exceeds $N=16$. Moreover, the marginal NMSE gap between the 4\% and 10\% ratios further indicates data efficiency.}
			\label{fig:EMD_vs_gate_channel_estimation}
		\end{figure}
		
		\newpage
		\begin{figure}[h]
			\centering
			\includegraphics[width=\textwidth]{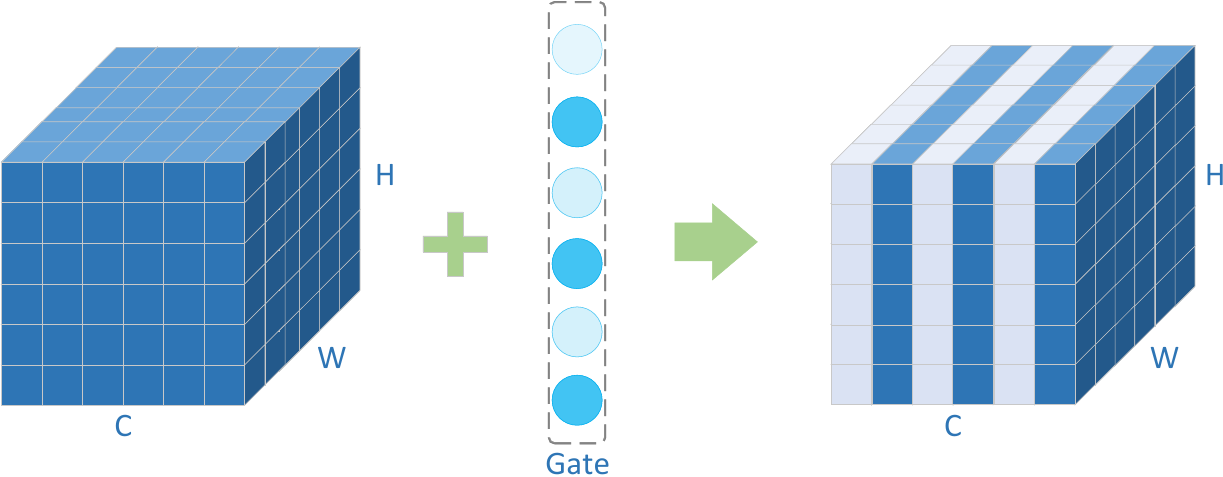}
			\caption{\linespread{1.0}\selectfont \textbf{Hypernet-based context modulator on spiking convolutional layers.} The figure illustrates how the hypernet-generated binary gate modulates the backbone spiking convolutional layers. The output channels are multiplied by the gate. Masked channels are silenced. Those not silenced flow to the next layer.}
			\label{conv_gate}
		\end{figure}
		
		\newpage
		\begin{table}[h]
			\centering
			\caption{\textbf{DeepMIMO Channel Dataset Configuration}}\label{tab:deepmimo}%
			\begin{tabular}{@{}llll@{}}
				\hline
				parameter & value  & parameter & value\\
				\hline
				Scenario    & O1   & Carrier frequency  & 3.5\,GHz  \\
				BS array    & [8 1 8]   & UE array  & [2 1 2]  \\
				Antenna spacing    & 0.5\,wavelength   & Radiation pattern  & Isotropic  \\
				Bandwidth    & 128\,kHz   & Number of paths  & 25  \\
				BS index    & 4   & UE index  & 800--1200  \\
				\hline
			\end{tabular}
		\end{table}
		
		\newpage
		\begin{table}[h]
			\centering
			\caption{\textbf{WINNER II Channel Configuration}}\label{tab:winner}%
			\begin{tabular}{@{}llll@{}}
				\hline
				parameter & value  & parameter & value\\
				\hline
				Center frequency    & 5.25\,GHz   & Bandwidth  & 5\,MHz--20\,MHz  \\
				BS location    & (150\,m, 150\,m, 32\,m)   & UE location  & (110\,m, 120\,m, 1.5\,m)  \\
				UE speed (each axis)    & -0.5\,m/s--0.5\,m/s  & Path loss  & Yes  \\
				Shadow fading    & Yes  & Uniform sampling  & No \\
				\hline
			\end{tabular}
		\end{table}



\begin{thebibliography}{10}
	\bibitem{dlforphy}
	Z.~Qin, H.~Ye, G.~Y. Li, B.-H.~F. Juang, Deep learning in physical layer
	communications. \emph{IEEE Wireless Commun.} \textbf{26}, 93--99 (2019).
	
	\bibitem{OFDM_CE1}
	M.~Soltani, V.~Pourahmadi, A.~Mirzaei, H.~Sheikhzadeh, Deep learning-based
	channel estimation. \emph{IEEE Commun. Lett.} \textbf{23}, 652--655 (2019).
	
	\bibitem{dlcsifeedback}
	C.-K. Wen, W.-T. Shih, S.~Jin, Deep learning for massive {MIMO} {CSI} feedback.
	\emph{IEEE Wireless Commun. Lett.} \textbf{7}, 748--751 (2018).
	
	\bibitem{symboldetection}
	H.~Ye, G.~Y. Li, B.-H. Juang, Power of deep learning for channel estimation and
	signal detection in OFDM systems. \emph{IEEE Wireless Commun. Lett.}
	\textbf{7}, 114--117 (2018).
	
	\bibitem{dlwaveform}
	F.~Ait~Aoudia, J.~Hoydis, End-to-end waveform learning through joint
	optimization of pulse and constellation shaping, in \emph{Proc. IEEE Globecom
		Workshops (GC Wkshps)} (2021), pp. 1--6.
	
	\bibitem{dlresourceallocation}
	H.~Ye, G.~Y. Li, B.-H.~F. Juang, Deep reinforcement learning based resource
	allocation for {V2V} communications. \emph{IEEE Trans. Veh. Technol.}
	\textbf{68}, 3163--3173 (2019).
	
	\bibitem{semantic1}
	D.~G{\"u}nd{\"u}z, Z.~Qin, I.~E. Aguerri, H.~S. Dhillon, Z.~Yang, A.~Yener,
	K.~K. Wong, C.-B. Chae, Beyond transmitting bits: {C}ontext, semantics, and
	task-oriented communications. \emph{IEEE J. Sel. Areas Commun.} \textbf{41},
	5--41 (2023).
	
	\bibitem{multimodalcom}
	F.~Jiang, L.~Dong, Y.~Peng, K.~Wang, K.~Yang, C.~Pan, X.~You, Large {AI} model
	empowered multimodal semantic communications. \emph{IEEE Commun. Mag.}
	\textbf{63}, 76--82 (2025).
	
	\bibitem{foundation1}
	J.~Du, T.~Lin, C.~Jiang, Q.~Yang, C.~F. Bader, Z.~Han, Distributed foundation
	models for multi-modal learning in {6G} wireless networks. \emph{IEEE
		Wireless Commun.} \textbf{31}, 20--30 (2024).
	
	\bibitem{holographicsemantic}
	J.~Ying, Z.~Qi, Y.~Feng, Z.~Qin, Z.~Han, R.~Tafazolli, Y.~C. Eldar, Semantic
	communication enabled holographic video processing and transmission.
	\emph{https://arxiv.org/abs/2510.13408}  (2025).
	
	\bibitem{3GPP}
	``Study on Artificial Intelligence (AI)/Machine Learning (ML) for NR air
	interface'' (3GPP TR 38.843 V19.0.0, 2025;
	https://www.3gpp.org/ftp/Specs/archive/38\_series/38.843).
	
	\bibitem{MIMO_energy}
	G.~N. Katsaros, R.~Tafazolli, K.~Nikitopoulos, On the power consumption of
	massive-{MIMO}, 5{G} New Radio with software-based {PHY} processing, in Proc.
	2022 IEEE Globecom Workshops (GC Wkshps) (2022), pp. 765--770.
	
	\bibitem{RAN_energy}
	T.~Hatt, P.~Jarich, E.~Kolta, ``{RADAR}: {T}he sustainable Telco'' (GSMA
	Intell. Tech. Rep., 2021;
    https://www.gsmaintelligence.com/research/radar-the-sustainable-telco).
	
	\bibitem{ICT2020industry}
	``Industry to reduce greenhouse gas emissions by 45 percent by 2030'' (ITU
	Tech. Rep., 2020;
	https://www.itu.int/en/mediacentre/Pages/PR04-2020-ICT-industry-to-reduce-greenhouse-gas-emissions-by-45-percent-by-2030.aspx).
	
	\bibitem{ICTclimate}
	C.~Freitag, M.~Berners-Lee, K.~Widdicks, B.~Knowles, G.~S. Blair, A.~Friday,
	The real climate and transformative impact of ICT: A critique of estimates,
	trends, and regulations. \emph{Patterns} \textbf{2}, 100340 (2021).
	
	\bibitem{snn_nature}
	A.~Mehonic, A.~J. Kenyon, Brain-inspired computing needs a master plan.
	\emph{Nature} \textbf{604}, 255--260 (2022).
	
	\bibitem{uwb}
	M.~Win, R.~Scholtz, Impulse radio: how it works. \emph{IEEE Commun. Lett.}
	\textbf{2}, 36--38 (1998).
	
	\bibitem{snnprototype}
	A.~R{\'a}cz, A.~Veres, P.~H{\'a}ga, T.~Borsos, Z.~Kenesi, A full-stack
	neuromorphic prototype architecture for low-power wireless sensors, in
	\emph{2022 IEEE Globecom Workshops (GC Wkshps)} (2022), pp. 353--358.
	
	\bibitem{snndistributed1}
	T.~Borsos, M.~Condoluci, M.~Daoutis, P.~H{\'a}ga, A.~Veres, Resilience analysis
	of distributed wireless spiking neural networks, in Proc. IEEE Wireless
	Communications and Networking Conference (WCNC) (IEEE) (2022), pp.
	2375--2380.
	
	\bibitem{snnfederated}
	N.~Skatchkovsky, H.~Jang, O.~Simeone, Federated neuromorphic learning of
	spiking neural networks for low-power edge intelligence, in Proc. IEEE
	International Conference on Acoustics, Speech, and Signal Processing (ICASSP)
	(2020), pp. 8524--8528.
	
	\bibitem{snnleadfed}
	H.~Yang, K.-Y. Lam, L.~Xiao, Z.~Xiong, H.~Hu, D.~Niyato, H.~Vincent~Poor, Lead
	federated neuromorphic learning for wireless edge artificial intelligence.
	\emph{Nat. Commun.} \textbf{13}, 4269 (2022).
	
	\bibitem{snnisac}
	J.~Chen, N.~Skatchkovsky, O.~Simeone, Neuromorphic integrated sensing and
	communications. \emph{IEEE Wireless Commun. Lett.} \textbf{12}, 476--480
	(2023).
	
	\bibitem{snn_VDES}
	Q.~Jiang, J.~Sha, The use of {SNN} for ultralow-power {RF} fingerprinting
	identification with attention mechanisms in {VDES}-{SAT}. \emph{IEEE Internet
		Things J.} \textbf{10}, 15594--15603 (2023).
	
	\bibitem{snn_satellite}
	K.~Dakic, B.~A. Homssi, S.~Walia, A.~Al-Hourani, Spiking neural networks for
	detecting satellite-based Internet-of-Things signals.
	\emph{https://arxiv.org/abs/2304.03896}  (2023).
	
	\bibitem{snnofdmreceiver}
	A.~Gupta, O.~Dizdar, Y.~Chen, S.~Wang, SpikingRx: From neural to spiking
	receiver. \emph{https://arxiv.org/abs/2409.05610}  (2024).
	
	\bibitem{snnsemantic3}
	N.~Skatchkovsky, H.~Jang, O.~Simeone, End-to-end learning of neuromorphic
	wireless systems for low-power edge artificial intelligence, in Proc. 54th
	Asilomar Conference on Signals, Systems, and Computers (IEEE) (2020), pp.
	166--173.
	
	\bibitem{snnsemantic1}
	J.~Chen, N.~Skatchkovsky, O.~Simeone, Neuromorphic wireless cognition:
	{E}vent-driven semantic communications for remote inference. \emph{IEEE
		Trans. Cogn. Commun. Netw.} \textbf{9}, 252--265 (2023).
	
	\bibitem{snnsemantic2}
	M.~Wang, J.~Li, M.~Ma, X.~Fan, Spiking semantic communication for feature
	transmission with {HARQ}. \emph{IEEE Trans. Veh. Technol.} \textbf{74},
	10035--10040 (2025).
	
	\bibitem{multilevelsnn}
	D.~Wu, J.~Chen, B.~Rajendran, H.~V. Poor, O.~Simeone, Neuromorphic wireless
	split computing with multi-level spikes. \emph{IEEE Trans. Mach. Learn.
		Commun. Netw.} \textbf{3}, 502 -- 516 (2025).
	
	\bibitem{mnist-dvs}
	T.~Serrano-Gotarredona, B.~Linares-Barranco, Poker-DVS and MNIST-DVS. Their
	history, how they were made, and other details. \emph{Front. Neurosci.}
	\textbf{9}, 481 (2015).
	
	\bibitem{ANN2SNN}
	C.~St{\"o}ckl, W.~Maass, Optimized spiking neurons can classify images with
	high accuracy through temporal coding with two spikes. \emph{Nat. Mach.
		Intell.} \textbf{3}, 230--238 (2021).
	
	\bibitem{snnsurrogate}
	E.~O. Neftci, H.~Mostafa, F.~Zenke, Surrogate gradient learning in spiking
	neural networks: {B}ringing the power of gradient-based optimization to
	spiking neural networks. \emph{IEEE Signal Process. Mag.} \textbf{36}, 51--63
	(2019).
	
	\bibitem{continuous_learning}
	L.~Wang, X.~Zhang, H.~Su, J.~Zhu, A comprehensive survey of continual learning:
	{T}heory, method and application. \emph{IEEE Trans. Pattern Anal. Mach.
		Intell.} \textbf{46}, 5362--5383 (2024).
	
	\bibitem{EWC}
	J.~Kirkpatrick, R.~Pascanu, N.~Rabinowitz, J.~Veness, G.~Desjardins, A.~A.
	Rusu, K.~Milan, J.~Quan, T.~Ramalho, A.~Grabska-Barwinska, \emph{et~al.},
	Overcoming catastrophic forgetting in neural networks. \emph{Proc. Natl.
		Acad. Sci.} \textbf{114}, 3521--3526 (2017).
	
	\bibitem{semantic2}
	E.~Bourtsoulatze, D.~B. Kurka, D.~G{\"u}nd{\"u}z, Deep joint source-channel
	coding for wireless image transmission. \emph{IEEE Trans. Cogn. Commun.
		Netw.} \textbf{5}, 567--579 (2019).
	
	\bibitem{semantic3}
	H.~Xie, Z.~Qin, G.~Y. Li, B.-H. Juang, Deep learning enabled semantic
	communication systems. \emph{IEEE Trans. Signal Process.} \textbf{69},
	2663--2675 (2021).
	
	\bibitem{dvsg_dataset}
	A.~Amir, B.~Taba, D.~Berg, T.~Melano, J.~McKinstry, C.~Di~Nolfo, T.~Nayak,
	A.~Andreopoulos, G.~Garreau, M.~Mendoza, \emph{et~al.}, A low power, fully
	event-based gesture recognition system, in Proc. IEEE Conference on Computer
	Vision and Pattern Recognition (CVPR) (2017), 7243--7252.
	
	\bibitem{bi-RNN}
	M.~Schuster, K.~K. Paliwal, Bidirectional recurrent neural networks. \emph{IEEE
		Trans. Signal Process.} \textbf{45}, 2673--2681 (1997).
	
	\bibitem{LSTM}
	S.~Hochreiter, J.~Schmidhuber, Long short-term memory. \emph{Neural Comput.}
	\textbf{9}, 1735--1780 (1997).
	
	\bibitem{transformer}
	A.~Vaswani, N.~Shazeer, N.~Parmar, J.~Uszkoreit, L.~Jones, A.~N.~Gomez, {\L}.~Kaiser, I.~Polosukhin, Attention is all you need, in Proc. Advances in Neural Information Processing Systems (NIPS), (2017).
	
	\bibitem{ldpc}
	R.~Gallager, Low-density parity-check codes. \emph{IRE Trans. Inf. Theory}
	\textbf{8}, 21--28 (1962).
	
	\bibitem{MIMO_overview}
	L.~Lu, G.~Y. Li, A.~L. Swindlehurst, A.~Ashikhmin, R.~Zhang, An overview of
	massive {MIMO}: {B}enefits and challenges. \emph{IEEE J. Sel. Top. Signal
		Process.} \textbf{8}, 742--758 (2014).
	
	\bibitem{BD}
	Q.~H. Spencer, A.~L. Swindlehurst, M.~Haardt, Zero-forcing methods for downlink
	spatial multiplexing in multiuser {MIMO} channels. \emph{IEEE Trans. Signal
		Process.} \textbf{52}, 461--471 (2004).
	
	\bibitem{RBD}
	V.~Stankovic, M.~Haardt, Generalized design of multi-user {MIMO} precoding
	matrices. \emph{IEEE Trans. Wireless Commun.} \textbf{7}, 953--961 (2008).
	
	\bibitem{WMMSE}
	Q.~Shi, M.~Razaviyayn, Z.-Q. Luo, C.~He, An iteratively weighted {MMSE}
	approach to distributed sum-utility maximization for a {MIMO} interfering
	broadcast channel. \emph{IEEE Trans. Signal Process.} \textbf{59}, 4331--4340
	(2011).
	
	\bibitem{Bf_unfolding1}
	W.~Xia, G.~Zheng, Y.~Zhu, J.~Zhang, J.~Wang, A.~P. Petropulu, A deep learning
	framework for optimization of {MISO} downlink beamforming. \emph{IEEE Trans.
		Commun.} \textbf{68}, 1866--1880 (2020).
	
	\bibitem{Bf_unfolding2}
	M.~Zhang, J.~Gao, C.~Zhong, A deep learning-based framework for low complexity
	multiuser {MIMO} precoding design. \emph{IEEE Trans. Wireless Commun.}
	\textbf{21}, 11193--11206 (2022).
	
	\bibitem{Bf_unfolding3}
	S.~Lu, S.~Zhao, Q.~Shi, Learning-based massive beamforming, in Proc. IEEE
	Global Communications Conference (IEEE) (2020), pp. 1--6.
	
	\bibitem{Bf_unfolding4}
	Q.~Hu, Y.~Cai, Q.~Shi, K.~Xu, G.~Yu, Z.~Ding, Iterative algorithm induced
	deep-unfolding neural networks: {P}recoding design for multiuser {MIMO}
	systems. \emph{IEEE Trans. Wireless Commun.} \textbf{20}, 1394--1410 (2021).
	
	\bibitem{winner2}
	J.~Meinil{\"a}, P.~Ky{\"o}sti, T.~J{\"a}ms{\"a}, L.~Hentil{\"a}, ``WINNER II
	channel models''. \emph{Radio Technologies and Concepts for IMT-Advanced}
	(Wiley, 2009).
	
	\bibitem{OFDM_CE2}
	L.~Li, H.~Chen, H.-H. Chang, L.~Liu, Deep residual learning meets {OFDM}
	channel estimation. \emph{IEEE Wireless Commun. Lett.} \textbf{9}, 615--618
	(2020).
	
	\bibitem{loihi2}
	M.~Davies, A.~Wild, G.~Orchard, Y.~Sandamirskaya, G.~A.~F. Guerra, P.~Joshi,
	P.~Plank, S.~R. Risbud, {Advancing neuromorphic computing with {L}oihi: {A}
		survey of results and outlook}. \emph{Proc. IEEE} \textbf{109}, 911--934
	(2021).
	
	\bibitem{truenorth}
	P.~A. Merolla, J.~V. Arthur, R.~Alvarez-Icaza, A.~S. Cassidy, J.~Sawada,
	F.~Akopyan, B.~L. Jackson, N.~Imam, Y.~Guo, Y.~Nakamura, .~..., D.~S. Modha,
	A million spiking-neuron integrated circuit with a scalable communication
	network and interface. \emph{Science} \textbf{345}, 668--673 (2014).
	
	\bibitem{STDP}
	G.~Bi, M.~Poo, Synaptic modifications in cultured hippocampal neurons:
	{D}ependence on spike timing, synaptic strength, and postsynaptic cell type.
	\emph{J. Neurosci.} \textbf{18}, 10464--10472 (1998).
	
	\bibitem{spikingjelly}
	W.~Fang, Y.~Chen, J.~Ding, Z.~Yu, T.~Masquelier, D.~Chen, L.~Huang, H.~Zhou,
	G.~Li, Y.~Tian, Spikingjelly: {A}n open-source machine learning
	infrastructure platform for spike-based intelligence. \emph{Sci. Adv.}
	\textbf{9}, eadi1480 (2023).
	
	\bibitem{snn_hybrid}
	R.~Zhao, Z.~Yang, H.~Zheng, Y.~Wu, F.~Liu, Z.~Wu, L.~Li, F.~Chen, S.~Song,
	J.~Zhu, \emph{et~al.}, A framework for the general design and computation of
	hybrid neural networks. \emph{Nat. Commun.} \textbf{13}, 3427 (2022).
	
	\bibitem{piggyback}
	A.~Mallya, D.~Davis, S.~Lazebnik, Piggyback: {A}dapting a single network to
	multiple tasks by learning to mask weights, in Proc. European Conference on
	Computer Vision (ECCV) (2018), pp. 67--82.
	
	\bibitem{WGAN}
	M.~Arjovsky, S.~Chintala, L.~Bottou, Wasserstein generative adversarial
	networks, in Proc. International Conference on Machine Learning (PMLR)
	(2017), pp. 214--223.
	
	\bibitem{FID}
	M.~Heusel, H.~Ramsauer, T.~Unterthiner, B.~Nessler, S.~Hochreiter, {GAN}s
	trained by a two time-scale update rule converge to a local {N}ash
	equilibrium, in Proc. Advances in neural information processing systems
	(NIPS) (2017).
	
	\bibitem{inception}
	C.~Szegedy, V.~Vanhoucke, S.~Ioffe, J.~Shlens, Z.~Wojna, Rethinking the
	{I}nception architecture for computer vision, in Proc. IEEE Conference on
	Computer Vision and Pattern Recognition (CVPR) (2016), pp. 2818--2826.
	
	\bibitem{snn_fang}
	W.~Fang, Z.~Yu, Y.~Chen, T.~Masquelier, T.~Huang, Y.~Tian, Incorporating
	learnable membrane time constant to enhance learning of spiking neural
	networks, in Proc. IEEE/CVF International Conference on Computer Vision
	(ICCV) (2021), pp. 2661--2671.
	
	\bibitem{semantic_yehao}
	H.~Ye, G.~Y. Li, B.-H. Juang, Deep learning based end-to-end wireless
	communication systems without pilots. \emph{IEEE Trans. Cogn. Commun. Netw.}
	\textbf{7}, 702--714 (2021).
	
	\bibitem{rbf}
	G.~E. Fasshauer, \emph{Meshfree Approximation Methods with {MATLAB}} (World
	Scientific, 2007).
	
	\bibitem{energy_model}
	M.~Horowitz, 1.1 computing's energy problem (and what we can do about it), In
	Proc. IEEE International Solid-State Circuits Conference (ISSCC)
	(2014), pp. 10--14.
	
	\bibitem{deepmimo}
	A.~Alkhateeb, Deep{MIMO}: {A} generic deep learning dataset for millimeter wave
	and massive {MIMO} applications. \emph{https://arxiv.org/abs/1902.06435}
	(2019).
	\bibitem{clt}
	P.~Billingsley, \emph{Probability and Measure} (Wiley, 2017).
	
	\bibitem{generalization_bound}
	Y.~Mansour, M.~Mohri, A.~Rostamizadeh, Domain adaptation: Learning bounds and algorithms. \emph{https://arxiv.org/abs/0902.3430} (2009).
	
	\bibitem{talagrand}
	C.~Villani, \emph{Topics in Optimal Transportation} (American Mathematical Society, 2003).
	
	\bibitem{matrix_inverse}
	A.~Krishnamoorthy, D.~Menon, Matrix inversion using Cholesky decomposition, in \emph{Proc. Signal Processing: Algorithms, Architectures, Arrangements, and Applications (SPA)} (2013), pp.~70--72.
	
	\bibitem{ldpclut}
	K.~Leboeuf, A.~H.~Namin, R.~Muscedere, H.~Wu, M.~Ahmadi, High speed VLSI implementation of the hyperbolic tangent sigmoid function, in \emph{Proc. Third International Conference on Convergence and Hybrid Information Technology} (2008), vol.~1, pp.~1070--1073.
\end{thebibliography}
\end{document}